\documentclass[12pt]{article}
\usepackage[bookmarks,bookmarksnumbered]{hyperref}
\usepackage{amsmath,XoohmE}
\usepackage{graphicx,color}
\usepackage{booktabs,multirow}
\usepackage{url,algorithm,algorithmic}

\newcommand{\codeand}{\mathbin{\&}} 
\newcommand{\codexor}{\boxplus} 
\newcommand{\codeplus}{\codexor} 

\newcommand{\wt}{\mathop{\text{\rm wt}}\nolimits}
\newcommand{\img}{\mbox{Im}}
\newcommand{\Aut}{\mbox{Aut}}
\newcommand{\cA}{\mathcal{A}}

\newcommand{\bF}{\relax\leavevmode\hbox{$F$\kern-.55em
                  \vrule height1.95ex depth-1.8ex width5.75pt}\kern1pt}

\newcommand{\ddt}{\partial_\tau}
\newcommand{\Cl}{\mathop{\text{Cl}}\nolimits}

\newcommand{\sM}{{\mathscr{M}}}

\newcommand{\fT}{\text{\sf T}}

\newcommand{\gen}{\mathbin{\,:\,}}
\newtheorem{theorem}{Theorem}[section]
\newtheorem{definition}{Definition}[section]
\newtheorem{proposition}[theorem]{Proposition}
\newtheorem{corollary}[theorem]{Corollary}

\long\def\oMit#1{}

\def\4{\hphantom{\hbox{$-$}}}
\def\8{\oplus}

\def\vC#1{\vcenter{\hbox{\hss#1\hss}}}
\definecolor{Hey}{rgb}{1,0,.4}
\definecolor{orange}{rgb}{.8,.4,0}
\definecolor{plum}{rgb}{.4,0,.6}
\definecolor{bone}{rgb}{.95,.95,.9}
\definecolor{gray}{gray}{.25}


\def\Ic{\hbox{\sf c\kern-3pt\rule[.2pt]{.5pt}{4pt}\kern2.5pt}}
\catcode`@=11    
\newcounter{xmpl}\resetby{section}{xmpl}
\newenvironment{example}{\par\noindent\addtocounter{xmpl}{1}%
                          \def\@currentlabel{{\thesection.\arabic{xmpl}}}%
                          {\bfseries Example~\arabic{section}.\arabic{xmpl}}%
                           ~\ignorespaces\small}
                        {{\color{gray}\hrulefill\rule{.45pt}{1ex}}\par}
\newcounter{cnst}\resetby{section}{cnst}
\newenvironment{construction}{\par\noindent\addtocounter{cnst}{1}%
                          \def\@currentlabel{{\thesection.\arabic{cnst}}}%
                          {\bfseries Construction~\thesection.\arabic{cnst}}%
                           ~\ignorespaces\slshape}
                        {\hrulefill\rule{.45pt}{1ex}\par}
\newcounter{crit}\resetby{section}{crit}

\def\Ft#1{\footnote{#1}}
\newdimen\parshift\parshift=\parindent
 \long\def\@footnotetext#1{\insert\footins{\reset@font\footnotesize\interlinepenalty%
  \interfootnotelinepenalty\splittopskip\footnotesep\splitmaxdepth\dp\strutbox%
   \floatingpenalty\@MM\hsize\columnwidth\addtolength{\hsize}{-2\parshift}
    \@parboxrestore\protected@edef\@currentlabel{\csname p@footnote\endcsname\@thefnmark}
      \color@begingroup
       \@makefntext{\rule\z@\footnotesep\ignorespaces#1\@finalstrut\strutbox\vglue1mm}
        \color@endgroup}}
 \long\def\@makefntext#1{\hglue\parshift
                         \vbox{\noindent\hb@xt@0em{\hss\@makefnmark\,}#1}\vglue2ex}
\catcode`@=12			
 %
 \font\rOpe=cmsy10                        
 \def\ktl{{\hbox{\rOpe\char'170}}}        
 \def\kbl{{\hbox{\rOpe\char'170}}}        
 \def\kcr{{\reflectbox{\rOpe\char'170}}}        
 \def\ktr{{\reflectbox{\rOpe\char'170}}}        
 \def\kbr{{\reflectbox{\rOpe\char'170}}}        
 \def\Border{\vbox{\hsize0pt
        \setlength{\unitlength}{1mm}
        \newcount\xco
        \newcount\yco
        \xco=-21
        \yco=12
        \begin{picture}(0,0)(-7.5,0)
        \put(\xco,\yco){$\ktl$}
        \advance\yco by-1
        {\loop
        \put(\xco,\yco){$\kcr$}
        \advance\yco by-2
        \ifnum\yco>-240
        \repeat
        \put(\xco,\yco){$\kbl$}}
        \xco=170
        \yco=12
        \put(\xco,\yco){$\ktr$}
        \advance\yco by-1
        {\loop
        \put(\xco,\yco){$\kcr$}
        \advance\yco by-2
        \ifnum\yco>-240
        \repeat
        \put(\xco,\yco){$\kbr$}}
       \put(-19.5,13){\scalebox{.6125}{State University of New York
           Physics Department|University of Maryland Center for
           String and Particle  Theory \&\ Physics Department|%
           Howard University|Quid, Inc.}}
       \put(-19.5,-241.5){\scalebox{.768}{University of Alberta
           Mathematics Department|Pepperdine University Natural
           Sciences Division|Bard College Mathematics
           Program}}
        \end{picture}
        \par\vskip-8mm}}
\definecolor{UMred}{rgb}{.9,.05,.2}
 \def\UMbanner{\vbox{\hsize0pt
        \setlength{\unitlength}{.4mm}
        \thicklines
        \begin{picture}(0,0)(-30,-10)
        \put(165,16){\line(1,0){4}}
        \put(170,16){\line(1,0){4}}
        \put(180,16){\line(1,0){4}}
        \put(175,0){\line(1,0){4}}
        \put(180,0){\line(1,0){4}}
        \put(185,0){\line(1,0){4}}
        \put(169,0){\line(0,1){16}}
        \put(170,0){\line(0,1){16}}
        \put(179,0){\line(0,1){16}}
        \put(180,0){\line(0,1){16}}
        \put(184,0){\line(0,1){16}}
        \put(185,0){\line(0,1){16}}
        \put(169,16){\oval(8,32)[bl]}
        \put(170,16){\oval(8,32)[br]}
        \put(179,0){\oval(8,32)[tl]}
        \put(185,0){\oval(8,32)[tr]}
        \end{picture}
        \par\vskip-6.5mm
        \thicklines}}

 \SfTitles 
 \allowdisplaybreaks
 \seceq

\begin{document}
\thispagestyle{empty}
\vbox{\Border\UMbanner}
 \noindent
 \today \hfill{UMDEPP 08-010, SUNY-O/667}
  \vfill
 \begin{center}
{\LARGE\sf\bfseries\boldmath
  Codes and Supersymmetry in One Dimension}\\[3mm]
  \vfill
{\sf\bfseries C.F.\,Doran$^a$, M.G.\,Faux$^b$, S.J.\,Gates, Jr.$^c$,
     T.\,H\"{u}bsch$^d$,\\ K.M.\,Iga$^e$, G.D.\,Landweber$^f$, and R.L.\,Miller$^{g,h}$}\\[3mm]
{\small\it
  $^a$Department of Mathematical and Statistical Sciences,\\[-1mm]
      University of Alberta, Edmonton, Alberta, T6G 2G1 Canada%
  \\[-4pt] {\tt  doran@math.ualberta.ca}
  \\
  $^b$Department of Physics,
      State University of New York, Oneonta, NY 13825%
  , {\tt  fauxmg@oneonta.edu}
  \\
  $^c$Center for String and Particle Theory,\\[-1mm]
      Department of Physics, University of Maryland, College Park, MD 20472%
  , {\tt  gatess@wam.umd.edu}
  \\
  $^d$Department of Physics \&\ Astronomy,\\[-1mm]
      Howard University, Washington, DC 20059
  \\[-4pt] {\tt  thubsch@howard.edu}
  \\
  $^e$Natural Science Division,
      Pepperdine University, Malibu, CA 90263%
  , {\tt  Kevin.Iga@pepperdine.edu}
  \\
 $^f$Mathematics Program, Bard College,
     Annandale-on-Hudson, NY 12504-5000%
  , {\tt  gregland@bard.edu}
  \\
 $^g$Department of Mathematics,\\[-1mm]
      University of Washington, Seattle, WA 98105%
  , {\tt  rlmill@math.washington.edu}
 \\
 $^h$Quid, Inc.,
 733 Front Street, C1A\\[-1mm]
 San Francisco, CA 94111%
 . {\tt rmiller@quid.com}
 }\\[3mm]
  \vfill
{\sf\bfseries ABSTRACT}\\[3mm]
\parbox{145mm}{Adinkras are diagrams that describe many useful supermultiplets in $D=1$ dimensions.  We show that the topology of the Adinkra is uniquely determined by a doubly even code.  Conversely, every doubly even code produces a possible topology of an Adinkra.  A computation of doubly even codes results in an enumeration of these Adinkra topologies up to $N=28$, and for minimal supermultiplets, up to $N=32$.}
  
\end{center}
  \vfill\vfill

\clearpage
\setcounter{page}{0}
\pagenumbering{arabic}
\section{Introduction, Review and Synopsis}
 \label{IRS}
Although many supersymmetric theories have been known since the 1970s, there is still no overarching classification of supermultiplets, even in one dimension (time).  In fact, supersymmetry in one dimension has been the subject of several investigations, for instance the development of the ${\cal GR}(d,N)$ algebras\cite{rGR1,rGR2,rGLP}, the development of Adinkras\cite{rA}, and to other efforts \cite{rFKS,rFS1,rFS2,rKRT,rBKMO,rBG,rBKLS,KT1,GRT,KT2}.

An Adinkra is a directed graph with various colorings and other markings on vertices and edges, which in a pictoral way encode all details of the supersymmety transformations on the component fields within a supermultiplet in one dimension.\cite{rA}  The main purpose of the present work is to determine the kinds of graphs (i.e., the {\em\/topology\/}) that can be used for Adinkras.

The class of supermultiplets described by Adinkras is wide enough to contain many noteworthy superfields.\cite{rA,MatterN2Matter}  This paper will clarify the conditions for a supermultiplet in one dimension to be described using an Adinkra.  Beyond this, by showing that this class is large, we thereby show that the class of supermultiplets is also large, and thus establish that the classification problem of supersymmetry in one dimension is much more intricate than it might appear at first.  We also thereby add many new supermultiplets to the literature that were previously unknown, and it is possible that some of these newly discovered supermultiplets may be useful or interesting in their own rights.

Finally, we hope that classifying Adinkras may help one better understand the conditions under which a superfield has an off-shell description.  Subject to a particular set of assumptions about dynamics, Siegel and Ro\v{c}ek had previously shown\cite{SR} that not all supermultiplets have off-shell descriptions.  On the other hand, superfields described by Adinkras are off-shell supermultiplets.  Therefore, understanding the range of what Adinkras can describe may shed some light on the question of which supermultiplets admit an off-shell description.

\subsection{Adinkras}
In one dimension, with $N$ supersymmetry generators $Q_1,\ldots,Q_N$, the supersymmetry algebra is
\begin{equation}
\{Q_I,Q_J\}=2i\delta_{IJ} \ddt
\end{equation}
where $\ddt$ is the derivative in the time direction.

An Adinkra is a finite directed graph, with every vertex colored either white or black, and with every edge colored one of $N$ colors (each color corresponds to one of the supersymmetries $Q_I$), and each edge drawn with either a solid or a dashed line.  The vertices correspond to the component fields (black for fermions, white for bosons) and the edges correspond to the action of each of the $Q_I$, in a way that is reminiscent of the Cayley diagram of a finitely generated group, or even more analogously, the Schreier diagram of the set of cosets of a subgroup.   Details of Adinkras and how they correspond to supermultiplets can be found in Refs.~\cite{rA,r6-1}.  The classification of Adinkras naturally falls into four steps:
\begin{enumerate}
\item Determine which topologies are possible (the topology of an Adinkra is the underlying graph of vertices and edges without colorings, as, for instance, in Ref.\cite{r6-1}).
\item Determine the ways in which vertices and edges may be colored.  The topology of the Adinkra, together with the colorings of vertices and edges, will be called the {\em chromotopology} of the Adinkra.  It is chromotopologies that are classified in this paper.
\item Determine the ways in which edges may be chosen as dashed or solid.  This is closely related to the well-known theory of Clifford algebras, and will be studied in a future effort.
\item Determine the ways in which arrows may be directed along each edge.  This issue is addressed in Ref.\cite{r6-1}, and shown to be equivalent to the question of ``hanging'' the graph on a few sinks.  Alternately, we can start with an Adinkra where all arrows go from bosons to fermions, then perform a sequence of vertex raises to arrive at other choices of arrow directions.
\end{enumerate}

As it happens, it is convenient to do 1.~and 2.~together; that is, to classify chromotopologies.  Herein, we show that the classification of Adinkra chromotopologies is equivalent to another interesting question from coding theory: the classification of doubly even codes.  Much work has already been done in this area\cite{rMcWS,rCHVP,rCPS}, and the work described in this paper goes even further in developing this classification; see Appendix~\ref{app:RM}.

We emphasize that we focus here on the representation theory, not the dynamics. This is natural, as we need to first know the full palette of supersymmetric representations before discussing the properties of the dynamics in theories built upon such representations. For instance, presupposing a standard, uncoupled Lagrangian for the supermultiplets that we intend to classify would necessarily limit the possibilities; there do exist supermultiplets which can only have interactive Lagrangians\cite{rHSS,rGSS}. Herein, we defer the task of finding Lagrangians involving the supermultiplets considered in this paper. In Refs.\cite{r6-2,r6-7a}, we have in fact started on such studies, and, using Adinkras, have constructed supersymmetric Lagrangians for some of the supermultiplets that are also discussed herein.

In units where $\hbar=1=c$, all physical quantities may have at most units of mass, the exponent of which is called the {\em\/engineering dimension\/} and is an essential element of physics analysis in general. The engineering dimension of a field $\f(\t)$ will be written $[\f]$; for more details, see Refs.\cite{r6-1,r6--1}.

\subsection{Main Result}
 \label{sMR}
Our main result about the chromotopology types of Adinkras and the corresponding supermultiplets, up to direct sums, may be summarized as follows:

We define the function:
\begin{equation}
 \vk(N):=
  \begin{cases}
   0 &\text{for $N<4$},\\
   1 &\text{for $N=4,5$}, \\
   2 &\text{for $N=6$}, \\
   3 &\text{for $N=7$}, \\
   4 + \vk(N{-}8) &\text{for $N\geq 8$, recursively}.
  \end{cases}
 \label{eKmax}
\end{equation}
\begin{enumerate}\itemsep=-3pt\vspace{-3mm}
 \item Every Adinkra can be separated into its connected components.  (The supermultiplet corresponding to such an Adinkra breaks up into a direct sum of other supermultiplets, each of which corresponds to one of the connected components of the Adinkra).
 \item There is a one-to-one correspondence between possible chromotopologies of connected Adinkras and doubly even codes of length $N$.
 \begin{enumerate}\itemsep=-3pt
  \item Each connected chromotopology has, associated to it, a doubly even code of length $N$ and dimension $k\le \vk(N)$ that records which paths connect a vertex to itself.
  \item The chromotopology is then the quotient of the colored $N$-dimensional cube by this code. (The colored $N$-cube is the set of vertices and edges of the $N$-dimensional cube $[0,1]^N$, with colors on the edges determined by which axis it is parallel to, and colors of vertices according to the number, modulo 2, of coordinates that are 1).
  \item This quotient can be viewed as an iterated $k$-fold $\ZZ_2$-quotient.
  \item These chromotopologies really do come from supermultiplets in $D=1$ dimension, and if the arrows are chosen properly (one-hooked) we can arrange it so that it is easy to see that different codes give rise to different supermultiplets.
  \item Permuting the columns of a code corresponds to permuting the colors of the chromotopology, which in turn describes $R$-symmetries of the supermultiplet.
  \item There are an enormous multitude of distinct doubly-even codes for $N\le 32$, even when counting permutation equivalent codes as the same code.  Thus, there is an enormous multitude of Adinkra chromotopologies.
 \end{enumerate}
\end{enumerate}\vspace{-3mm}

This paper is organized as follows:
 Section~\ref{s:Codes} is a brief introduction to codes, and Section~\ref{adinkras} provides a review of Adinkras and their relationship with supermultiplets.  The first major result, in Section~\ref{AdiTop}, is that each Adinkra chromotopology gives rise to a doubly even code.  It will be convenient to provide a few classes of examples of doubly even codes for our discussions, and to give a sense for how many doubly even codes there are, so this is done in Section~\ref{s:codes-redux}.  We then turn to the second major result: that every doubly even code actually arises as the code for an Adinkra chromotopology for a supermultiplet.  This is done in Sections~\ref{s:ClifAd} and \ref{s:construct}.  Sections~\ref{s:fourd} and \ref{CliffHanger} discuss some consequences and directions for further research.

\section{Codes}
 \label{s:Codes}
We begin with a brief introduction to the theory of codes.  For a more thorough introduction to the subject, see Refs.\cite{rMcWS,rCHVP,rCPS}.

We think of $\{0,1\}$ as a group with the operation $\codexor$, which is addition modulo 2, i.e., the group $\ZZ_2$.  For the purposes of this paper, a {\em\/code} of length $N$ means a subgroup of $\{0,1\}^N$.\footnote{In the coding literature, there are sometimes other more general definitions of codes.  What we have described is a linear binary block code of length $N$.}
Though the standard notation for an element of a cartesian product is $(x_1,x_2,\cdots,x_N)$, in practice we frequently abandon the parentheses and the commas, so that the element $(0,1,1,0,1)$ may be written more succinctly as the codeword $01101$.  The components of such an $N$-tuple are called bits, and the $N$-tuple is called a word.  This word is called a {\em\/codeword} if it is in the code.

Now, $\ZZ_2$ is not only a group; it is also a field, so $\{0,1\}^N$ can be viewed as a vector space over $\{0,1\}$.  All the concepts of linear algebra then apply, but with $\mathbb{R}$ replaced with $\ZZ_2$.  Elements of $\{0,1\}^N$ may be thought of as vectors, with vector addition the operation $\codexor$ of bitwise addition modulo 2.  Codes are then linear subspaces of $\{0,1\}^N$.    Every code has a basis, called a {\em\/generating set}, $g_1,\ldots,g_k$, so that every codeword can be written uniquely as a sum
\begin{equation}
\sum_{i=1}^k x_i g_i,
\end{equation}
where the coefficients $x_1,\ldots,x_k$ are each either $0$ or $1$.  The number $k$ is the same for every generating set for a given code, and is called the {\em\/dimension} of the code.  It is common to say we have an $[N,k]$ linear code when $N$ is the length of the codewords and $k$ is the dimension.  It is traditional to denote a generating set as an $k{\times}N$ matrix, where each row is an element of the generating set.

If $v\in\{0,1\}^N$, we define the {\em\/weight} of $v$, written $\wt(v)$, to be the number of $1$s in $v$.  For instance, the weight of $01101$ is $\wt(01101)= 3$.

A code is called {\em\/even} if every codeword in the code has even weight.  It is called {\em\/doubly even} if every codeword in the code has weight divisible by $4$.  Examples of doubly even codes are given in Section~\ref{s:deC} below.

If $v$ and $w$ are in $\{0,1\}^N$, then $v\codeand w$ is defined to be the ``bitwise and'' of $v$ and $w$: the $i$th bit of $v\codeand w$ is $1$ if and only if the $i$th bit of $v$ and the $i$th bit of $w$ are both $1$.  A basic fact in $\{0,1\}^N$ is
\begin{equation}
\wt(v \codexor w)=\wt(v)+\wt(w)-2\,\wt(v\codeand w).\label{eqn:inclusionexclusion}
\end{equation}
There is a standard inner product.  If we write $v$ and $w$ in $\{0,1\}^N$ as $(v_1,\ldots,v_N)$ and $(w_1,\ldots,w_N)$, then
\begin{equation}
\langle v,w\rangle \equiv \sum_{i=1}^N v_i\,w_i\pmod{2}.\label{eqn:codeinnerproduct}
\end{equation}
We call $v$ and $w$ orthogonal if $\langle v,w\rangle=0$.  This occurs whenever there are an even number of bit positions where both $v$ and $w$ are $1$.  Note that $\langle v,v\rangle\equiv\wt(v)\pmod{2}$, and thus, when $\wt(v)$ is even, $v$ is orthogonal to itself.  Also note that $\langle v,w\rangle \equiv \wt(v\codeand w)\pmod{2}$.  One important consequence for us is that if $\wt(v)$ and $\wt(w)$ are multiples of 4, then  (\ref{eqn:inclusionexclusion}) implies that $\wt(v \codexor w)$ is a multiple of 4 if and only if $v$ and $w$ are orthogonal.

\section{Supersymmetric Representations and Adinkras}
 \label{adinkras}
The $N$-extended supersymmetry algebra without central charges in one dimension is generated by the time-derivative, $\ddt$, and the $N$ supersymmetry generators, $Q_1, \ldots, Q_N$, satisfying the following supersymmetry relations:
\begin{equation}
 \big\{\,Q_I\,,\,Q_J\,\big\}=2\,i\,\d_{IJ}\,\ddt,\quad
  \big[\,\ddt\,,\,Q_I\,\big] =0,\quad  I,J=1,\ldots,N. \label{eSuSy}
\end{equation}
In this section we determine some essential facts about the transformation rules of these operators on fields for which it is possible to maintain the physically motivated concept of engineering dimension.
We note that since the time-derivative has engineering dimension $[\ddt]=1$, the supersymmetry relations\eq{eSuSy} imply that the engineering dimension of the supersymmetry generators is $[Q_I]=\inv2$.

\subsection{Supermultiplets as Representations of Supersymmetry}
A real supermultiplet $\sM$ is a real, finite-dimensional, linear representation of the algebra\eq{eSuSy}, in the following sense: It is spanned by a basis of real bosonic and fermionic {\em\/component fields\/}, $\f_1(\t),\ldots,\f_m(\t)$ and $\j_1(\t),\ldots,\j_m(\t)$, respectively; each component field is a function of time, $\t$. The supersymmetry transformations, generated by the Hermitian operators $Q_1,\cdots,Q_N$, act linearly on $\sM$ while satisfying Eqs.\eq{eSuSy}
.
 The supermultiplet is {\em\/off-shell\/} if no differential equation is imposed on it\Ft{Logically, it is possible for some---but not all---component fields to become subject to a differential equation. This does not violate the literal definition of the off-shell supermultiplet. However, it does obstruct standard methods of quantization, which is our eventual purpose for keeping supermultiplets off-shell. For an example in 4-dimensional supersymmetry, see Ref.\cite{MatterN2Matter}.}.
 The number of bosons as fermions is then the same, guaranteed by supersymmetry.

\subsection{Building Supermultiplets from Adinkras}
 \label{sCloSuSy}
Refs.\cite{rA,r6-1,r6-2} introduced and then studied Adinkras, diagrams that encode the transformation rules of the component fields under the action of the supersymmetry generators $Q_1,\ldots,Q_N$.

 Supermultiplets that can be described by Adinkras have a collection of bosonic and fermionic component fields and a collection of supersymmetry generators $Q_1,\dots,Q_N$, so that: ({\bf1})~Given a bosonic field $\f$ and a supersymmetry generator $Q_I$, the transformation rule for $Q_I$ of $\f$ is of the form
\begin{alignat}{3}
 \text{either}&&\qquad Q_I \f &= \pm\, \j, \label{eQB1}\\
 \text{or}&&\qquad Q_I \f &= \pm\, \ddt \j, \label{eQB2}
\intertext{for some fermionic field $\j$. ({\bf2})~Given instead a fermionic field $\eta$ and a supersymmetry generator $Q_I$, the transformation rule of $Q_I$ on $\eta$ is of the form}
 \text{either}&&\qquad Q_I \eta &= \pm\, i\,B, \label{eQF1}\\
 \text{or}&&\qquad Q_I \eta &= \pm\, i\,\ddt B, \label{eQF2}
\end{alignat}
for some bosonic field $B$.  In particular, these supersymmetry generators act linearly using first-order differential operators.  Furthermore, the supersymmetry algebra requires that
\begin{alignat}{3}
 Q_I \f &= \pm\,  \j \qquad&\Longleftrightarrow\qquad
 Q_I \j &= \pm\, i\,\ddt \f, \label{eQBF1}
\intertext{and}
 Q_I \f &= \pm\, \ddt\j \qquad&\Longleftrightarrow\qquad
 Q_I \j &= \pm\, i\,\f,\label{eQBF2}
\end{alignat}
and where the $\pm$ signs are correlated to preserve Eqs.\eq{eSuSy}.

More generally, suppose we label the bosons $\f_1,\dots,\f_m$ and the fermions $\j_1,\dots,\j_m$.  Choose an integer $I$ with $1\le I\le N$, and an integer $A$ with $1\le A\le m$.  For each such pair of integers, we consider the transformation rules for $Q_I$ on the boson $\f_A$, and we might expect that these will be of the form
\begin{equation}
Q_I \, \f_A(\t)= c\,\ddt^{\l}\, \j_B(\t),\label{eQB}
\end{equation}
where $c=\pm 1$, $\l=0$ or $1$, and $B$ is an integer with $1\le B\le m$, so that $\j_B$ is some fermion; each of $c,\l,B$ will, in general, depend on $I$ and $A$.  Note that
\begin{equation}
\l=[\f_A]-[\j_B]+\inv2,
 \label{eED}
\end{equation}
for $\f_A$ and $\j_B$ to have a definite engineering dimension---provided the transformation rules had only dimensionless constants as we assume throughout. For each such transformation rule, we will get a corresponding transformation rule for the $Q_I$ on the fermion $\j_B(\t)$ that looks like this:
\begin{equation}
Q_I \, \j_B(\t) = \frac{i}{c}\,\ddt^{1-\l}\, \f_A(\t).\label{eQF}
\end{equation}
\Eqs{eQB}{eQF} constitute all of the transformation rules on the bosons and fermions, respectively.

\begin{definition}\label{dAd}
A supermultiplet, $\sM$, is {\em\/adinkraic\/} if all of its supersymmetric transformation rules are of the form\eq{eQB} and\eq{eQF}.

 For each adinkraic supermultiplet, its {\em Adinkra\/}, $\cA_\sM$, is a directed graph, consisting of a set of vertices, $V$, a set of edges, $E$, a coloring $C$ of the edges, a set of their orientations, $O$, and a labeling $D$ of each edge corresponding to whether or not it is dashed. 

Each component field of $\sM$ is represented by a vertex in $\cA_\sM$: white for bosonic fields and black for fermionic ones, thus equipartitioning the vertex set $V\to W$.  Every transformation rule of the form\eq{eQB} is depicted by an edge connecting the vertex corresponding to $\f_A$ to the vertex corresponding to $\j_B$, and color the edge with the $I^\text{th}$ color.  We use a dashed edge if $c=-1$, and oriented it from $\f_A$ to $\j_B$ if $\l=0$ and the other way around if $\l=1$.
\end{definition}

Table~\ref{t:A} illustrates the four possibilities for an edge.
\begin{center}
\begin{table}[ht]
  \centering
  \begin{tabular}{@{} cc|cc @{}}
    \makebox[15mm]{\bf Adinkra} & \makebox[40mm]{\bf\boldmath$Q$-action} 
  & \makebox[15mm]{\bf Adinkra} & \makebox[40mm]{\bf\boldmath$Q$-action} \\ 
    \hline
    \begin{picture}(5,9)(0,5)
     \put(0,0){\includegraphics[height=11mm]{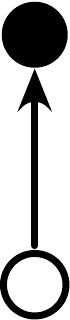}}
     \put(3,0){\scriptsize$A$}
     \put(3,9){\scriptsize$B$}
     \put(-1,4){\scriptsize$I$}
    \end{picture}\vrule depth4mm width0mm
     & $Q_I\begin{bmatrix}\j_B\\\f_A\end{bmatrix}
           =\begin{bmatrix}i\dot\f_A\\\j_B\end{bmatrix}$
  & \begin{picture}(5,9)(0,5)
     \put(0,0){\includegraphics[height=11mm]{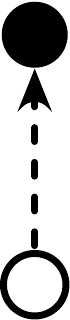}}
     \put(3,0){\scriptsize$A$}
     \put(3,9){\scriptsize$B$}
     \put(-1,4){\scriptsize$I$}
    \end{picture}\vrule depth4mm width0mm
     & $Q_I\begin{bmatrix}\j_B\\\f_A\end{bmatrix}
           =\begin{bmatrix}-i\dot\f_A\\-\j_B\end{bmatrix}$ \\[5mm]
    \hline
    \begin{picture}(5,9)(0,5)
     \put(0,0){\includegraphics[height=11mm]{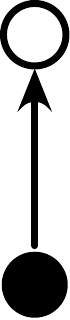}}
     \put(3,0){\scriptsize$B$}
     \put(3,9){\scriptsize$A$}
     \put(-1,4){\scriptsize$I$}
    \end{picture}\vrule depth4mm width0mm
     &  $Q_I\begin{bmatrix}\f_A\\\j_B\end{bmatrix}
           =\begin{bmatrix}\dot\j_B\\i\f_A\end{bmatrix}$
  & \begin{picture}(5,9)(0,5)
     \put(0,0){\includegraphics[height=11mm]{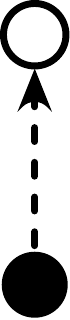}}
     \put(3,0){\scriptsize$B$}
     \put(3,9){\scriptsize$A$}
     \put(-1,4){\scriptsize$I$}
    \end{picture}\vrule depth4mm width0mm
     &  $Q_I\begin{bmatrix}\f_A\\\j_B\end{bmatrix}
           =\begin{bmatrix}-\dot\j_B\\-i\f_A\end{bmatrix}$ \\[5mm]
    \hline
  \multicolumn{4}{l}{\vrule height3.5ex width0pt\parbox{120mm}{\small The edges are here labeled by the variable index $I$; for fixed $I$, they are drawn in the $I^{\text{th}}$ color.}}
  \end{tabular}
  \caption{The correspondences between the Adinkra components and supersymmetry transformation formulae:
    vertices\,$\iff$\,component fields;
    vertex color\,$\iff$\,fermion/boson;
    edge color/index\,$\iff$\,$Q_I$;
    edge dashed\,$\iff$\,$c=-1$; and
    orientation\,$\iff$\,placement of $\ddt$.
    They apply to all $\f_A,\j_B$ within a supermultiplet and all $Q_I$-transformations amongst them.}
  \label{t:A}
\end{table}
\end{center}

We can also use the Adinkra to reconstruct the adinkraic supermultiplet, since the Adinkra contains all the information necessary to write down the transformation rules. Thus, an Adinkra is simply a graphical depiction of the transformation rules\eq{eQB1}--\eq{eQF2}.

\section{Adinkra Chromotopologies}
 \label{AdiTop}
It is the purpose of this paper to classify the possible {\em\/topologies\/} for Adinkras. To this end, we will need the precise definition:

\begin{definition}\label{dA2T}
The {\em\bfseries\/topology\/} of an Adinkra, $\fT(\cA_\sM)$, is the graph $(V,E)$ consisting of only the (unlabeled) vertices and edges of the Adinkra; cf.\ Definition~\ref{dAd}. Also, $\fT(\sM):=\fT(\cA_\sM)$.
\end{definition}
In particular, from Definition~\ref{dAd} of an Adinkra, we forget the bipartition (black or white) of the vertices and all additional information associated to the edges, namely, the color of the edges (which records the $Q_I$ corresponding to that edge), dashedness ($c=\pm1$), and direction (the exponent of $\ddt$ in \eqs{eQB1}{eQF2}).
\begin{definition}
The {\em\bfseries\/chromotopology\/} of an Adinkra is the topology of the Adinkra, together with the vertex bipartition (coloring each vertex black or white), and the edge coloring (assigning a color to each edge).
\end{definition}
In particular, from Definition~\ref{dAd} of an Adinkra, we forget the dashedness of the edges and the direction of the arrow along the edge.

Suppose we have an adinkraic multiplet with component fields $(F_1\6(\t),\cdots,F_{2m}\6(\t))$,\footnote{Previously, we labeled the bosons $\f_1,\cdots,\f_m$ and the fermions $\j_1,\cdots,\j_m$. Here, it will be convenient for notation to treat them on the same footing.\label{introf}} and let $\cA$ be its Adinkra.  We wish to consider the chromotopology.  That is, we wish to ignore the arrows on the edges and ignore whether an edge is dashed.  We can capture this information by taking the vertex set of the Adinkra $V=\{v_1,\cdots,v_{2m}\}$, the coloring of these vertices, and the edge set $E$ with its coloring.  Since this graph is inherited from an Adinkra, for every vertex and every $I\in\{1,\dots,N\}$ there is an edge corresponding to applying $Q_I$ to the field corresponding to that vertex.  Applying this $Q_I$ may involve derivatives, a sign, and/or a factor of $i$, but it will be convenient for now to suppress this.  To this end, we define functions $q_1,\cdots,q_N$ from the vertex set $V=\{v_1,\cdots,v_{2m}\}$ to itself, such that whenever $A$, $B$, and $I$ are such that there is an equation of type (\ref{eQB}) or (\ref{eQF}),
\begin{equation}
  Q_I \, F_A(\t)= c\,\ddt^{\lambda} F_B,\qquad\implies\qquad
  q_I(v_A)=v_B.
\label{e:qAct}
\end{equation}
Notice that in defining $q_I$ from $Q_I$, we are forgetting the coefficients $c=\pm1$, as well as the $\ddt$'s which encode the differences in engineering dimensions.
 The supersymmetry algebra (\ref{eSuSy}) then implies:
\begin{alignat}{3}
  q_I^2&=\Ione,\qquad&\text{\ie}\qquad
  q_I\big(q_I(v)\big)&=v,~\text{for all $v\in V$},\label{e:q2=1}\\[-8mm]
\intertext{and\vspace{-4mm}}
  q_Iq_J&=q_Jq_I,\qquad&\text{\ie}\qquad
  q_I\big(q_J(v)\big)&=q_J\big(q_I(v)\big),~\text{for all $I,J$, and all $v\in V$}.
   \label{e:qiqj}
\end{alignat}

The vertex set $V=\{v_1,\ldots,v_{2m}\}$, the partition of $V$ into bosons and fermions, and the maps $q_I:V\to V$ then encode the chromotopology.

\subsection{The Colored $N$-cube Chromotopology}\label{s:cubical}
The fundamental example of an Adinkra topology is that of the $N$-cube, $I^N=[0,1]^N$. It has $2^N$ vertices and $N{\cdot}2^{N-1}$ edges. We may embed it in $\IR^N$ by locating the vertices at the points $\vec{p}=(p_1,\cdots,p_N)\in\IR^N$, where $p_I=0,1$ in all $2^N$ possible combinations.  An edge connects two vertices that differ in precisely one coordinate. For every vertex, $\vec{p}$, the {\em\/weight of $\vec{p}$}, written $\wt(\vec{p})$, equals the number of $J\in\{1,\cdots,N\}$ for which $p_J=1$.

There is a natural chromotopology associated to the $N$-cube: we color vertices of even weight white, and vertices of odd weight we color black.  We associate the numbers from $1$ to $N$ with $N$ different colors, and then color each edge of the $N$-cube that connects vertices which differ only in the $I$th coordinate with the $I$th color.  The result is called the colored $N$-cube.
 
Using the notation from the previous section, we determine $q_1,\ldots, q_N$ as follows:
\begin{equation}
q_I(p_1,\ldots,p_{I-1},p_I,p_{I+1},\ldots,p_N)
=(p_1,\ldots,p_{I-1},1-p_I,p_{I+1},\ldots,p_N).
\end{equation}

There are indeed Adinkras of supermultiplets that have this chromotopology, as we will see in Section~\ref{s:construct}.

\subsection{The Quotient of a Colored Cube by a Code}
\label{s:quotient}
We now introduce another chromotopology: given a colored $N$-cube and a code $C$ of length $N$, we will create a chromotopology called $[0,1]^N/C$.

Before we do so, recall that $C$ is a subset of $\{0,1\}^N$, and so codewords (elements of $C$) are vectors which we can write as $\vec{g}$.  Likewise, the vertices of the colored $N$-cube are elements of $\{0,1\}^N$ as well.

For each vertex $\vec{v}\in\{0,1\}^N$, we identify it with all vertices of the form $\vec{v}\codexor \vec{g}$ for $\vec{g}\in C$.  The resulting vertex set we call $V'$.  Likewise, consider an edge $e$ colored $I$, connecting $\vec{v}$ to $\vec{w}$ (note that this means the $\vec{v}$ and $\vec{w}$ are identical except in bit $I$, where they differ).  We identify $e$ with the other edges colored $I$ that connect $\vec{v}\codexor \vec{g}$ with $\vec{w}\codexor \vec{g}$ (note that $\vec{v}\codexor \vec{g}$ and $\vec{w}\codexor \vec{g}$ also differ precisely at bit $I$, and so there is a unique edge connecting them, and it is colored $I$).  One way to view this is that the group $C$ acts on the vertex set and the edge set by $\codexor$, and we are taking the orbit space of this group action.

Now note that we have thereby created a graph with vertices and edges, and the colors of the edges is determined by this construction.  In order to consistently label the colors of the vertices, we must only identify vertices when they are of the same color (boson or fermion).  Since the color of a vertex in the colored $N$-cube is determined by its weight modulo 2, and since \Eq{eqn:inclusionexclusion} says that 
\begin{equation}
\wt(v \codexor g)=\wt(v)+\wt(g)-2\,\wt(v\codeand g),
\end{equation}
we see that $\wt(v\codexor g)$ agrees with $\wt(v)$ modulo 2 for all $g$ if and only if $\wt(g)$ is always even.  Therefore, if $C$ is an even code, this quotient of a colored $N$-cube gives a consistent coloring of the vertices.

This quotient may not necessarily be the chromotopology of the Adinkra of a supermultiplet: we will see in the next section that for this, the code must be doubly even.  We will then see in Section~\ref{s:construct} that this is sufficient.

\subsection{The Code of an Adinkra Chromotopology}
 \label{sQuot}
Suppose an Adinkra is not connected.  Then the corresponding supermultiplet splits into a direct sum, each component of which corresponds to a connected component of the Adinkra.  So if we can determine the chromotopologies of connected Adinkras, the general chromotopology of an Adinkra will be a finite disjoint union of such.  Thus we focus our attention to connected Adinkras.

\begin{construction}\label{const:thecode}
Given a connected Adinkra with $N$ edge colors, we construct a code of length $N$ as follows.

Let $V$ be the set of vertices.  Define $q_I:V\to V$ as above.  Pick any vertex $v_*\in V$ and fix it.  Now define
\begin{equation}
C=\left\{(x_1,\ldots,x_N)\,|\,q_1^{x_1}\big(\cdots q_N^{x_N}(v_*)\big)=v_*\right\}.
\end{equation}
\end{construction}

\begin{proposition}
The set $C$ defined in this construction is a code.
\end{proposition}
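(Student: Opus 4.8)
The plan is to check directly that $C$ is a linear subspace (equivalently, an additive subgroup) of $\{0,1\}^N$ under $\codexor$, which is precisely what it means to be a code in the sense of Section~\ref{s:Codes}. The only ingredients needed are the two relations already derived from the supersymmetry algebra: each map $q_I\colon V\to V$ is an involution, $q_I^2=\mathrm{id}_V$ (equation~(\ref{e:q2=1})), and the $q_I$ commute pairwise (equation~(\ref{e:qiqj})).

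For a word $x=(x_1,\ldots,x_N)\in\{0,1\}^N$ set $g_x:=q_1^{x_1}\circ\cdots\circ q_N^{x_N}\colon V\to V$, so that by construction $x\in C$ if and only if $g_x(v_*)=v_*$. The crux of the argument is the identity
\begin{equation}
g_x\circ g_y=g_{x\codexor y}\qquad\text{for all }x,y\in\{0,1\}^N .
\label{e:gxgy}
\end{equation}
To prove~(\ref{e:gxgy}) one writes $g_x\circ g_y=q_1^{x_1}\circ\cdots\circ q_N^{x_N}\circ q_1^{y_1}\circ\cdots\circ q_N^{y_N}$, uses~(\ref{e:qiqj}) to move the two powers of each fixed $q_I$ next to one another, collecting them into $q_1^{x_1+y_1}\circ\cdots\circ q_N^{x_N+y_N}$, and then uses~(\ref{e:q2=1}) to reduce each exponent $x_I+y_I$ modulo $2$, which replaces it by $x_I\codexor y_I$. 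This is elementary bookkeeping, but it is the single place where both hypotheses on the $q_I$ are genuinely used, and so I regard it as the main (and essentially only) obstacle; the remainder is formal.

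Granting~(\ref{e:gxgy}), the conclusion is immediate. The zero word lies in $C$ because $g_{\vec 0}=\mathrm{id}_V$ fixes $v_*$, so $C$ is nonempty. If $x,y\in C$ then $g_{x\codexor y}(v_*)=g_x\big(g_y(v_*)\big)=g_x(v_*)=v_*$ by~(\ref{e:gxgy}), so $x\codexor y\in C$; thus $C$ is closed under $\codexor$. Since every element of $(\{0,1\}^N,\codexor)$ is its own additive inverse, a nonempty subset closed under $\codexor$ is automatically a subgroup, hence a linear subspace of $\{0,1\}^N$ --- that is, a code. (Equivalently and more conceptually: $x\mapsto g_x$ is a homomorphism from $(\{0,1\}^N,\codexor)$ to the group of permutations of $V$, by~(\ref{e:gxgy}), and $C$ is the preimage of the stabilizer of $v_*$, hence a subgroup.)
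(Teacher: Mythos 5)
Your proof is correct and takes essentially the same route as the paper: verify that $C$ is a subgroup of $(\{0,1\}^N,\codexor)$ by checking closure, identity, and the automatic inverses. You are in fact slightly more careful than the paper's text, which cites only~(\ref{e:qiqj}) for the closure step even though~(\ref{e:q2=1}) is also needed to reduce the exponents $x_I+y_I$ modulo~$2$; your packaging of this as the homomorphism identity $g_x\circ g_y=g_{x\codexor y}$, with $C$ the preimage of the stabilizer of $v_*$, is a clean way to make both uses explicit.
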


\begin{proof}
We need to prove that the elements of $C$ form a subgroup of $\{0,1\}^N$.  To prove closure, let $\vec{x}=(x_1,\ldots,x_N)$ and $\vec{y}=(y_1,\ldots,y_N)$ be elements of $C$.  By \Eq{e:qiqj}, we have
\begin{equation}
q_1^{x_1\codexor y_1}\cdots q_N^{x_N\codexor y_N}(v_*)
=q_1^{x_1}\cdots q_N^{x_N} \big(q_1^{y_1}\cdots q_N^{y_N}(v_*)\big)=v_*
\end{equation}
and so $\vec{x}\codexor\vec{y}\in C$.

To prove identity, note that $q_0^{0}\cdots q_N^0 (v_*)=v_*$.

To prove inverse, note that in $\{0,1\}^N$, every element is its own inverse.
\end{proof}

\begin{proposition}
The code $C$ is independent of the choice of vertex $v_*$.\label{prop:codeinvariant}
\end{proposition}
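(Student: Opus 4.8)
The plan is to use connectedness of the Adinkra to relate any two base vertices. Write $C_{v_*}$ for the code obtained from a chosen vertex $v_*$ via Construction~\ref{const:thecode}. If $v_*'$ is any other vertex, then since the graph $(V,E)$ is connected there is a path in the Adinkra from $v_*$ to $v_*'$; reading off the colors $I_1,\dots,I_r$ of the edges traversed, and using that each $q_I$ is an involution (so an $I$-colored edge may be traversed in either direction by applying $q_I$), we get $v_*'=q_{I_1}\big(q_{I_2}(\cdots q_{I_r}(v_*)\cdots)\big)$. Using \Eq{e:qiqj} to sort these factors by color and \Eq{e:q2=1} to cancel adjacent equal factors, the word reduces to a normal form: there is $\vec a=(a_1,\dots,a_N)\in\{0,1\}^N$, with $a_I$ the parity of the number of $I$-colored edges on the path, such that $v_*'=q_1^{a_1}\cdots q_N^{a_N}(v_*)$.

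Granting this, the equality $C_{v_*}=C_{v_*'}$ is a short commutation. Let $\vec x=(x_1,\dots,x_N)\in C_{v_*}$, so $q_1^{x_1}\cdots q_N^{x_N}(v_*)=v_*$. Then
\begin{align*}
 q_1^{x_1}\cdots q_N^{x_N}(v_*')
  &= q_1^{x_1}\cdots q_N^{x_N}\,q_1^{a_1}\cdots q_N^{a_N}(v_*)\\
  &= q_1^{a_1}\cdots q_N^{a_N}\,q_1^{x_1}\cdots q_N^{x_N}(v_*)\\
  &= q_1^{a_1}\cdots q_N^{a_N}(v_*) = v_*',
\end{align*}
where the middle step is repeated use of \Eq{e:qiqj} and the third uses $\vec x\in C_{v_*}$. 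Hence $\vec x\in C_{v_*'}$, so $C_{v_*}\subseteq C_{v_*'}$; exchanging the roles of $v_*$ and $v_*'$ gives the reverse inclusion, and therefore $C_{v_*}=C_{v_*'}$.

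The one genuinely content-bearing step is the first one: extracting from connectedness a word in the $q_I$ that carries $v_*$ to $v_*'$ and normalizing it, which is exactly where the graph-theoretic hypothesis and the algebraic relations \Eq{e:q2=1} and \Eq{e:qiqj} are used; everything after that is formal. It is worth noting that reducing the exponents $a_I$ modulo $2$ is legitimate precisely because $q_I^2=\Ione$, and that the particular path chosen is immaterial — two paths between $v_*$ and $v_*'$ differ by a closed loop at $v_*$, whose normal form lies in $C_{v_*}$ and so acts trivially — although this last remark is not even required for the proof as organized above.
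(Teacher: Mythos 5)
Your proof is correct and follows the same approach as the paper: use connectedness to write $v_*'=q_1^{a_1}\cdots q_N^{a_N}(v_*)$ (after normalizing the path word via \Eq{e:qiqj} and \Eq{e:q2=1}), then commute the word for a codeword past the transfer word to show one code is contained in the other, and conclude by symmetry. The only difference is cosmetic: you apply the codeword operator to $v_*'$ directly, while the paper applies the transfer operator to both sides of the defining equation for a codeword — these are the same computation.
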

\begin{proof}
Let $v_0$ be any other vertex in the Adinkra.  Since the Adinkra is connected, there is a sequence of edges connecting $v_*$ to $v_0$.  If the colors of these edges are $I_1, \ldots, I_m$, then by the definition of $q_I$, we have
\begin{equation}
q_{I_1}\cdots q_{I_m}(v_*)=v_0
\end{equation}
Using \Eq{e:qiqj}, we can rearrange the $q_{I_i}$ so that they are in increasing order, and we can furthermore cancel pairs of $q_{I_i}$ so that each $q_I$ appears at most once.  In this way we obtain a vector $\vec{t}=(t_1,\ldots,t_N)\in\{0,1\}^N$ so that
\begin{equation}
q_1^{t_1}\cdots q_N^{t_N}(v_*)=v_0.\label{e:veezero}
\end{equation}
Now suppose $\vec{g}$ is a codeword obtained by applying Construction~\ref{const:thecode} using $v_*$.  Then 
\begin{equation}
q_1^{g_1}\cdots q_N^{g_N}(v_*)=v_*.
\end{equation}
Apply $q_1^{t_1}\cdots q_N^{t_N}$ to both sides, using \Eq{e:qiqj} to commute the $q_I$ past each other, and we get
\begin{equation}
q_1^{g_1}\cdots q_N^{g_N}(q_1^{t_1}\cdots q_N^{t_N}(v_*))=q_1^{t_1}\cdots q_N^{t_N}(v_*).
\end{equation}
Then we plug in \Eq{e:veezero} and get
\begin{equation}
q_1^{g_1}\cdots q_N^{g_N}(v_0)=v_0.
\end{equation}
Therefore any codeword for $v_*$ is a codeword for $v_0$.  Applying the argument with $v_0$ and $v_*$ switched proves the converse.
\end{proof}

\Remk Thus we can determine the code of an Adinkra as follows: first pick any vertex $v_*$.  Then for each $\vec{g}=(g_1,\ldots,g_N)\in\{0,1\}^N$, write down the set $S$ of all colors $I$ for which $g_I=1$, in some order.  Then take the path that starts at $v_*$ and goes along edges according to the colors in the set $S$.  If we return to $v_*$, then $\vec{g}$ is in the code.

For example, consider the Adinkra in \eq{A:D4} in Section~\ref{s:construct} below.  If we take $v_*=\phi_{(0000)}$, we can follow the colors black (color 1), then red (color 2), then green (color 3), then blue (color 4).  This takes us on a path that goes to $\psi_{(1000)}$ then $\phi_{(1100)}$, then $\psi_{(1110)}$, then back to $\phi_{(0000)}$.  From this we see that $1111$ is in the code.  On the other hand, suppose we took red (color 2), then blue (color 4).  This takes us from $\phi_{(0000)}$ to $\psi_{(0100)}$ to $\phi_{(1010)}$, which is not $\phi_{(0000)}$.  So we see that $0101$ is not in the code.  You can check that the code is $\{0000,1111\}$ no matter which starting vertex $v_*$ we use.

\begin{theorem}
Every connected Adinkra chromotopology is isomorphic to a quotient of a colored $N$-dimensional cube by the code of the chromotopology.
\label{T:QuoC}
\end{theorem}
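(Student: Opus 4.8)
\quad The plan is to build an explicit color-preserving covering map from the colored $N$-cube onto the chromotopology in question and then show that its fibers are exactly the cosets of the code $C$.

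Fix the base vertex $v_*$ used in Construction~\ref{const:thecode}, chosen to be a white (bosonic) vertex, and define $\Phi\colon\{0,1\}^N\to V$ by $\Phi(x_1,\dots,x_N)=q_1^{x_1}\!\circ\cdots\circ q_N^{x_N}(v_*)$. Using only the relations \eqref{e:q2=1}--\eqref{e:qiqj}, I first record two facts: (i) the composition defining $\Phi$ may be taken in any order, and (ii) $\Phi(\vec x\codexor e_I)=q_I\big(\Phi(\vec x)\big)$ for every standard basis vector $e_I\in\{0,1\}^N$, since moving a single $q_I$ to the front either introduces it (when $x_I=0$) or cancels it via $q_I^2=\Ione$ (when $x_I=1$).

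Next I check that $\Phi$ is a surjective morphism of chromotopologies from the colored $N$-cube. Adjacent cube vertices $\vec x$ and $\vec x\codexor e_I$ differ only in coordinate $I$, and by (ii) they map to $\Phi(\vec x)$ and $q_I(\Phi(\vec x))$, which are joined by the unique color-$I$ edge incident to $\Phi(\vec x)$; this is a genuine edge, not a loop, because in an Adinkra every vertex has an incident edge of each color and edges join bosons to fermions, so no $q_I$ has a fixed point. Since each $q_I$ interchanges bosons and fermions, $\Phi(\vec x)$ has the same boson/fermion type as the parity of $\wt(\vec x)$ --- this is where the whiteness of $v_*$ is used --- so $\Phi$ respects the vertex coloring, and it visibly respects the edge coloring. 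Surjectivity on vertices is exactly the rewriting used in Proposition~\ref{prop:codeinvariant}: connectedness gives a colored path from $v_*$ to any $v_0$, which \eqref{e:qiqj} and $q_I^2=\Ione$ reduce to $q_1^{t_1}\cdots q_N^{t_N}(v_*)=v_0$, that is, $\Phi(\vec t)=v_0$.

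Finally I identify the fibers. Because each $q_I$ is its own inverse, $\Phi(\vec x)=\Phi(\vec y)$ precisely when $q_1^{x_1\codexor y_1}\cdots q_N^{x_N\codexor y_N}(v_*)=v_*$, that is, precisely when $\vec x\codexor\vec y\in C$, i.e.\ when $\vec x$ and $\vec y$ lie in the same coset of $C$. Hence $\Phi$ descends to a bijection $\bar\Phi$ from the vertex set $\{0,1\}^N/C$ of $[0,1]^N/C$ onto $V$. On edges, the color-$I$ edge of $[0,1]^N/C$ represented by $\{\vec x,\vec x\codexor e_I\}$ is sent to $\{\Phi(\vec x),q_I\Phi(\vec x)\}$; this is well defined on $C$-orbits (translating by $\vec g\in C$ changes neither endpoint's image), surjective (every color-$I$ Adinkra edge is incident to some $\Phi(\vec x)$), and injective --- here one uses that the two endpoints of an Adinkra edge have opposite vertex colors, so an equality of images forces the two endpoint identifications individually, hence forces the orbit representatives to differ by an element of $C$. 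Combined with the color compatibility already established, $\bar\Phi$ is therefore an isomorphism of chromotopologies, which is the claim. I expect the only genuinely delicate point to be this last one --- promoting the edge map from a surjection to a bijection --- since everything upstream is a direct consequence of \eqref{e:q2=1}--\eqref{e:qiqj} and connectedness, whereas edge injectivity needs the bipartition (no loops, and the two ends of an edge are distinguishable by color) to see that edge identifications are governed by the same coset relation as vertex identifications; this is also what dictates choosing $v_*$ white, so that ``isomorphism'' holds in the strong sense that preserves the boson/fermion coloring.
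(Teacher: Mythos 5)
Your proposal is correct and follows essentially the same route as the paper's proof: you define the same map $\vec x\mapsto q_1^{x_1}\cdots q_N^{x_N}(v_*)$, establish the same well-definedness, injectivity (fibers are cosets of $C$), and surjectivity (via connectedness), and handle the edge and vertex colorings the same way; the only difference is presentational, going cube $\to$ Adinkra and then descending to the quotient rather than defining the map directly on the quotient. The edge-injectivity worry you flag at the end is resolved automatically once you have a vertex bijection, since each vertex on either side has at most one incident edge of each color, so the remark is harmless but not a genuinely delicate point.
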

\begin{proof}
Suppose we have a connected Adinkra with $N$ edge colors, with vertex set $V$ and edge set $E$.  Pick any one bosonic vertex $v_*\in V$ and fix it.  As in Construction~\ref{const:thecode}, we define
\begin{equation}
C=\{(g_1,\ldots,g_N)\,|\,q_1^{g_1}\cdots q_N^{g_N}(v_*)=v_*\}.
\end{equation}
We then take the quotient of the colored $N$-cube by $C$.  Let the vertex set of this quotient be called $W$, and let the edge set be called $F$.  Now we produce an isomorphism from the chromotopology of the quotient of the colored $N$-cube to the chromotopology of the original Adinkra.  By this we mean a bijection from $W$ to $V$ that preserves the colors of the vertices, and a bijection from $F$ to $E$ that preserves the colors of the edges and the relationship of which edges are incident with which vertices.

To define the bijection $f:W\to V$, let $w\in W$ be any vertex in $W$.  It comes from a quotient of an $N$-cube, meaning it is the result of identifying vectors in $\{0,1\}^N$, so there is at least one vector $(w_1,\ldots,w_N)\in\{0,1\}^N$ that was used in the identification to produce $w$.  We define
\begin{equation}
f(w)=q_1^{w_1}\cdots q_N^{w_N}(v_*).
\end{equation}
We note that any other choice of $N$-tuple that produces $w$ would be $(w_1,\ldots,w_N)\codexor \vec{g}$ where $\vec{g}\in C$, and
\begin{equation}
q_1^{w_1\codexor g_1}\cdots q_N^{w_N\codexor g_N}(v_*)
=q_1^{w_1}\cdots q_N^{w_N}\big(
q_1^{g_1}\cdots q_N^{g_N}(v_*)\big)
=q_1^{w_1}\cdots q_N^{w_N}(v_*)=f(w)
\end{equation}
so that $f$ does not depend on the choice of $(w_1,\ldots,w_N)$.

We now prove $f$ is injective.  Suppose $f(w)=f(x)$.  Then
\begin{equation}
q_1^{w_1}\cdots q_N^{w_N}(v_*)=q_1^{x_1}\cdots q_N^{x_N}(v_*).
\end{equation}
By using \Eq{e:qiqj}, we get
\begin{equation}
q_1^{w_1\codexor x_1}\cdots q_N^{w_N\codexor x_N}(v_*)=v_*,
\end{equation}
which implies $(w_1,\ldots,w_N)\codexor(x_1,\ldots,x_N)=(g_1,\ldots,g_N)$ is in the code $C$.  Then $(x_1,\ldots,x_N)=(w_1,\ldots,w_N)\codexor(g_1,\ldots,g_N)$, so that the vertices $(w_1,\ldots,w_N)$ and $(x_1,\ldots,x_N)$ are identified in the construction of the quotient.  Therefore $w=x$.

To see that $f$ is surjective, let $v\in V$ be any vertex.  Recall that we are assuming that the Adinkra is connected.  That is, every vertex $v\in V$ is connected, via a path of edges, to the fixed $v_*\in V$.  In the Adinkra, these edges have colors, forming a sequence, $I_1,\cdots,I_k$, when tracing from the vertex $v$ to $v_*$. If we then apply to $v_*$ a corresponding sequence of $q_I$'s, we get:
\begin{equation}
  q_{I_1}\cdots q_{I_k}(v_*)=v.
\end{equation}
Using the commutativity of the $q$'s to put them in numerical order and \Eq{e:q2=1} to eliminate those that appear more than once, we can write this as
\begin{equation}
q_1^{x_1}\cdots q_N^{x_N}(v_*) = v.
\end{equation}
But the left hand side is just $f(x_1,\ldots,x_N)$.  Therefore $f$ is surjective.

We now examine the edges.  Let $e$ be an edge in $F$ with color $I$.  It connects two vertices $v$ and $w$, so that $q_I(v)=w$.  We wish to show that there is an edge in $E$ colored $I$ connecting $f(v)$ to $f(w)$; that is, that $q_I(f(v))=f(w)$.

Choose a vector $(v_1,\ldots,v_N)\in\{0,1\}^N$ that gets used in the identification to make $v$.  For $(w_1,\ldots,w_N)$ we take
\begin{equation}
(w_1,\ldots,w_N)=(v_1,\ldots,v_{I-1},1-v_I,v_{I+1},\ldots,v_N)
\end{equation}
which is possible because this is the edge of color $I$ coming from $(v_1,\ldots,v_N)$ in the original $N$-cube before quotienting.  Now we apply $q_I$ to $f(v)$:
\begin{align}
q_I(f(v))&=q_I\big(q_1^{v_1}\cdots q_N^{v_N}(v_*)\big)\\
&=q_1^{v_1}\cdots q_{I-1}^{v_{I-1}}q_I^{1-v_I}q_{I+1}^{v_{I+1}}\cdots q_N^{v_N}(v_*)\\
&=q_1^{w_1}\cdots q_N^{w_N}(v_*)\\
&=f(w)
\end{align}
Therefore there is an edge colored $I$ connecting $f(v)$ to $f(w)$.

The fact that $\p$ sends bosons to bosons and fermions to fermions can now be seen by the fact that it sends the boson $(0,\dots,0)\in\{0,1\}^N$ to the boson $v_*$, and the fact that every vertex in $V$ is connected to $v_*$ by a sequence of edges, each of which alternates between bosons and fermions.
\end{proof}

We next consider what kinds of codes $C$ can appear in Adinkras.  We have already noted in the previous section that in order to get a consistent vertex-coloring, $C$ must be an even code.  We now show that the edge-dashing of the original Adinkra (which we have been ignoring in studying chromotopologies) will force $C$ to be {\em\/doubly\/} even; that is, for each $\vec{g}\in C$, $\wt(\vec{g})$ must be a multiple of 4.

\begin{theorem}
The code for an Adinkra must be doubly even.
\end{theorem}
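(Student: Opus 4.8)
The plan is to bring the edge‑dashings $c=\pm1$ and the factors of $i$ in the transformation rules \eq{eQB1}--\eq{eQF2} back into play. Given a codeword, walk around the closed path it determines in the Adinkra, form the composite supersymmetry operator $\mathcal O$ along that path, and then compute $\mathcal O^2$ acting on the component field at the basepoint in two independent ways: once diagrammatically from the edge rules of Table~\ref{t:A}, and once algebraically from \eq{eSuSy}. The two answers must agree, and their comparison is exactly the divisibility‑by‑$4$ condition.

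Concretely, fix a vertex $v_*$ and let $\vec g\in C$ with $w:=\wt(\vec g)$; the preceding paragraph already gives that $w$ is even, so write $w=2t$. Let $\{I_1,\dots,I_w\}$ be the support of $\vec g$. As in Construction~\ref{const:thecode}, using \Eq{e:qiqj} and \Eq{e:q2=1} one gets a closed path $v_*=u_0\to u_1\to\cdots\to u_w=v_*$ whose successive edges carry the colors $I_1,\dots,I_w$, each color used exactly once; let $\mathcal O$ be the corresponding ordered product of the operators $Q_{I_1},\dots,Q_{I_w}$. Evaluating $\mathcal O F_{v_*}$ edge by edge with \eq{eQB1}--\eq{eQF2}, and pulling all $\ddt$'s out in front via $[\ddt,Q_I]=0$, gives $\mathcal O F_{v_*}=\kappa\,\ddt^{a}F_{v_*}$ for some nonnegative integer $a$ and some constant $\kappa$. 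The key point is that $\kappa=\pm\,i^{\,t}$: reading off Table~\ref{t:A}, applying $Q_I$ along an edge contributes the dashing $\pm1$ of that edge times a factor $1$ if $Q_I$ is applied to a boson and a factor $i$ if it is applied to a fermion; since in a bipartite topology a closed path based at $v_*$ alternates the two vertex colors and has even length $2t$, exactly $t$ of these $w$ operators are applied to a fermionic vertex, so all the $i$'s combine into $i^{\,t}$ and everything else is a sign. Hence $\mathcal O^2F_{v_*}=\kappa^2\,\ddt^{2a}F_{v_*}=(-1)^{t}\,\ddt^{2a}F_{v_*}$.

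Separately, $\mathcal O$ is a product of $Q_I$'s with $w$ pairwise distinct indices; reordering such a product only introduces an overall sign, which squares away, so $\mathcal O^2$ equals $(Q_{I_1}Q_{I_2}\cdots Q_{I_w})^2$ with the indices sorted. Repeatedly anticommuting the second occurrence of each $Q_{I_k}$ leftward to meet the first and applying $Q_{I_k}^2=i\ddt$ from \eq{eSuSy} yields $\mathcal O^2=(-1)^{\binom{w}{2}}(i\ddt)^{w}$; for $w=2t$ the sign $(-1)^{\binom{w}{2}}=(-1)^{t}$ and the factor $i^{w}=(-1)^{t}$ cancel, so $\mathcal O^2=\ddt^{2t}$. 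Comparing with the diagrammatic value $(-1)^{t}\ddt^{2a}F_{v_*}$, and using that in an off‑shell adinkraic multiplet no component field obeys a differential relation, so that the operators $\ddt^{2a}$ and $\ddt^{2t}$ may be equated and $\ddt$ acts injectively, forces $a=t$ and $(-1)^{t}=1$. Thus $t$ is even, i.e.\ $\wt(\vec g)=2t\equiv0\pmod4$. Since $\vec g\in C$ was arbitrary, $C$ is doubly even.

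The step I expect to be the real obstacle is the accounting for the scalar $\kappa$ — in particular, verifying that the total power of $i$ it carries is exactly $t=w/2$, with no stray factors of $i$ hiding in the $\ddt$‑exponents or elsewhere. This hinges entirely on the precise edge rules of Table~\ref{t:A}: a factor of $i$ appears exactly when a $Q_I$ acts on a fermion and never when it acts on a boson, and a closed path based at a single vertex in the bipartite topology splits its successive source‑vertices evenly between the two colors. Granting that, the remaining ingredients — closure of the codeword path (already handled in Construction~\ref{const:thecode}), the edge‑by‑edge evaluation, and the purely algebraic reduction of $\mathcal O^2$ via \eq{eSuSy} — are routine.
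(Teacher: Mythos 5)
Your proof is correct and follows essentially the same strategy as the paper's: evaluate the closed-loop operator on $F_{v_*}$ both diagrammatically (getting a scalar $c=\kappa$) and by squaring and anticommuting via \eq{eSuSy}, then compare. The paper organizes the computation slightly differently — it first shows $c^2=1$ from the algebra and then separately notes $c\in\{\pm1,\pm i\}$ according to $\wt(\vec g)\bmod 4$, and it fixes the $\ddt$-exponent by matching engineering dimensions rather than invoking injectivity of $\ddt$ — but the underlying argument is identical.
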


\begin{proof}
To prove this, the $q_I$'s no longer suffice, and we will need to use the $Q_I$'s; in particular, we need to ``remember'' the scaling constants $c$ encoding the edge-dashedness in an Adinkra, and the equipartition of the vertices into bosonic and fermionic ones. 

The statement that $\vec{x}\in C$ means that
\begin{equation}
q_1^{x_1}\big(\cdots q_N^{x_N}(v_*)\big)=v_*,
\end{equation}
which means, if $F_*$ is the component field that is represented by $v_*$, then
\begin{equation}
 Q_1^{x_1}\cdots Q_N^{x_N}\, F_*(\t) = c\,\ddt^{\,\wt(\vec{x})/2}\,F_*(\t)
 \label{eBack*}
\end{equation}
for some complex number $c$. The exponent of $\ddt$ follows simply from comparing engineering dimensions of the left-hand side and the right-hand side.

Since this sequence of $Q_I$ operators, corresponding to a closed path in the Adinkra, must send bosons to bosons and fermions to fermions, it must be that $\wt(\vec{x})$ is even, so that $\inv2\wt(\vec{x})$ is indeed an integer and \Eq{eBack*} is well-defined.

Applying $Q_1^{x_1}\cdots Q_N^{x_N}$ twice to $F_*(\t)$, we find:
\begin{align}
 Q_1^{x_1}\cdots Q_N^{x_N}\cdot Q_1^{x_1}\cdots Q_N^{x_N} F_*(\t)
  &= c^2 \ddt^{\wt(\vec{x})}F_*(\t). \label{eCC0}
\intertext{On the left side, using the supersymmetry algebra\eq{eSuSy}, we can anti-commute the $Q_I$ past each other. Rearrange these to regroup the result into}
(-1)^{\wt(\vec{x})\choose2}Q_1^{2x_1}\cdots Q_N^{2x_N}F_*(\t)
 &= c^2 \ddt^{\wt(\vec{x})}F_*(\t), \label{eCC1}
\intertext{which, using the supersymmetry algebra\eq{eSuSy} again, becomes}
 (-1)^{\wt(\vec{x})\choose2}\,i^{\wt(\vec{x})} \,\ddt^{\wt(\vec{x})} F_*(\t)
 &= c^2 \ddt^{\wt(\vec{x})}F_*(\t).
  \label{eCC2}
\end{align}
Comparing the exponents of $\ddt$ in \Eq{eCC2} confirms the exponent of $\ddt$ in \Eq{eBack*}.
Comparing the numerical coefficients produces:
\begin{align}
 c^2=(-1)^{\wt(\vec{x})\choose2}\,i^{\wt(\vec{x})}
    =i^{\wt(\vec{x})(\wt(\vec{x})-1)+\wt(\vec{x})}
    =i^{\left(\wt(\vec{x})^2\right)}.
\end{align}
Since $\wt(\vec{x})$ is even, we know that $\wt(\vec{x})^2$ is a multiple of 4.  Thus, the left hand side of this is 1.

Using (\ref{eQB}) and (\ref{eQF}) repeatedly, we see that $c$ is $\pm 1$ if $\wt(\vec{x})=0\pmod4$, and $\pm i$ if $\wt(\vec{x})=2\pmod4$. But $c^2=1$ implies that $c=\pm 1$, and this can happen only if $\wt(\vec{x})=0\pmod4$.
\end{proof}

\Remk
The converse is also true.   That is, if $G$ is a doubly even code, then there is a family of adinkraic supermultiplets, the chromotopology of the Adinkra of which is a quotient by $G$ of a colored $N$-cube.  This will be done in Section~\ref{s:construct}.

\section{Finding Doubly Even Codes}\label{s:codes-redux}
\subsection{Examples of Doubly Even Codes}
 \label{s:deC}
Since doubly even codes classify chromotopologies, it is useful to consider a few examples of such codes.  For each $N$ there is a trivial doubly even code $\{00\cdots 0\}$ with one element, which we call $t_N$; its generating set is the empty set. In addition, when $N=4$, there is a code $\{0000,1111\}$, called $d_4$.  The generating set is $\{1111\}$.  More generally, for every even $N\ge 4$, there is a doubly even code called $d_N$, of length $N$ and with $\frac{N}{2}-1$ generators, with generating set
\begin{equation}
\begin{bmatrix}
1\,1\,1\,1\,0\,0\,0\,0\,0\,\cdots\,0\,0\,0\,0\,0\\[-1mm]
0\,0\,1\,1\,1\,1\,0\,0\,0\,\cdots\,0\,0\,0\,0\,0\\[-1mm]
0\,0\,0\,0\,1\,1\,1\,1\,0\,\cdots\,0\,0\,0\,0\,0\\[-1mm]
 \qquad\quad\vdots\\[-1mm]
0\,0\,0\,0\,0\,0\,0\,0\,0\,\cdots\,0\,1\,1\,1\,1
\end{bmatrix}.
\end{equation}
Note that this is a description of the generating set, so that the actual code has more codewords, including the null-vector and all those constructed by adding (bitwise, modulo 2) any number of these generators together. For example,
\begin{equation}
 \begin{bmatrix}
 1\,1\,1\,1\,0\,0\,0\,0\\[-1mm]
 0\,0\,1\,1\,1\,1\,0\,0\\[-1mm]
 0\,0\,0\,0\,1\,1\,1\,1\\[-1mm]
 \end{bmatrix}
\qquad\text{generates}\qquad
 \left\{\begin{array}{rc}
        v_0=&0\,0\,0\,0\,0\,0\,0\,0\\[-1mm]
        v_1=&1\,1\,1\,1\,0\,0\,0\,0\\[-1mm]
        v_2=&0\,0\,1\,1\,1\,1\,0\,0\\[-1mm]
        v_3=&0\,0\,0\,0\,1\,1\,1\,1\\[-1mm]
        v_1\codeplus v_2=&1\,1\,0\,0\,1\,1\,0\,0\\[-1mm]
        v_2\codeplus v_3=&0\,0\,1\,1\,0\,0\,1\,1\\[-1mm]
        v_1\codeplus v_3=&1\,1\,1\,1\,1\,1\,1\,1\\[-1mm]
        v_1\codeplus v_2\codeplus v_3=&1\,1\,0\,0\,0\,0\,1\,1\\[-1mm]
        \end{array}\right\}.
 \label{ed8}
\end{equation}
Note that the same code, on the right-hand side of the display\eq{ed8}, is just as well generated by $\{v_1,v_2,(v_1\codeplus v_3)\}$ and several other choices. For general $N$, the $d_N$ code contains $2^{\frac{N}2-1}$ codewords.

When $N$ is congruent to $7$ or $8$ modulo $8$, there is an important doubly even code called $e_N$, the generating set of which is that of $d_N$ (or $t_1\oplus d_{N-1}$ when $N\equiv 7 \pmod{8})$ augmented by an additional generator of the form $101010\cdots$.  For instance,
\begin{equation}
e_7\,:~
\begin{bmatrix}
1\,1\,1\,1\,0\,0\,0\\[-1mm]
0\,0\,1\,1\,1\,1\,0\\[-1mm]
1\,0\,1\,0\,1\,0\,1
\end{bmatrix},\qquad
e_8\,:~
\begin{bmatrix}
1\,1\,1\,1\,0\,0\,0\,0\\[-1mm]
0\,0\,1\,1\,1\,1\,0\,0\\[-1mm]
0\,0\,0\,0\,1\,1\,1\,1\\[-1mm]
1\,0\,1\,0\,1\,0\,1\,0
\end{bmatrix},
\label{eN}
\end{equation}
and we then write:
\begin{equation}
e_{15}\,:~
\begin{bmatrix}
1\,1\,1\,1\,0\,0\,0\,0\,0\,0\,0\,0\,0\,0\,0\\[-1mm]
0\,0\,1\,1\,1\,1\,0\,0\,0\,0\,0\,0\,0\,0\,0\\[-1mm]
0\,0\,0\,0\,1\,1\,1\,1\,0\,0\,0\,0\,0\,0\,0\\[-1mm]
0\,0\,0\,0\,0\,0\,1\,1\,1\,1\,0\,0\,0\,0\,0\\[-1mm]
0\,0\,0\,0\,0\,0\,0\,0\,1\,1\,1\,1\,0\,0\,0\\[-1mm]
0\,0\,0\,0\,0\,0\,0\,0\,0\,0\,1\,1\,1\,1\,0\\[-1mm]
1\,0\,1\,0\,1\,0\,1\,0\,1\,0\,1\,0\,1\,0\,1
\end{bmatrix},\qquad
e_{16}\,:~
\begin{bmatrix}
1\,1\,1\,1\,0\,0\,0\,0\,0\,0\,0\,0\,0\,0\,0\,0\\[-1mm]
0\,0\,1\,1\,1\,1\,0\,0\,0\,0\,0\,0\,0\,0\,0\,0\\[-1mm]
0\,0\,0\,0\,1\,1\,1\,1\,0\,0\,0\,0\,0\,0\,0\,0\\[-1mm]
0\,0\,0\,0\,0\,0\,1\,1\,1\,1\,0\,0\,0\,0\,0\,0\\[-1mm]
0\,0\,0\,0\,0\,0\,0\,0\,1\,1\,1\,1\,0\,0\,0\,0\\[-1mm]
0\,0\,0\,0\,0\,0\,0\,0\,0\,0\,1\,1\,1\,1\,0\,0\\[-1mm]
0\,0\,0\,0\,0\,0\,0\,0\,0\,0\,0\,0\,1\,1\,1\,1\\[-1mm]
1\,0\,1\,0\,1\,0\,1\,0\,1\,0\,1\,0\,1\,0\,1\,0
\end{bmatrix},
\end{equation}
and so on.

These are famous codes: $e_7$ is known as the Hamming $[7,3]$ code, and $e_8$ is the parity-extended Hamming code. Ref.\cite{rCPS}, describes what it calls ``Construction A'', which determines a lattice as a subset of $\ZZ^N$ of all the points whose coordinates modulo $2$ are in the code, and under this, we form the famous lattices $e_7$ and $e_8$.  The points that are of closest distance to the origin form the root lattice for the Lie algebras $E_7$ and $E_8$, respectively. The previously defined $d_N$ codes likewise relate to the root lattice for the Lie algebras $D_N$.

Besides the trivial doubly even code $t_N$, $\{000\cdots 0\}$, for any $N\equiv0\pmod4$, there is an $[N,1]$ doubly even code $h_N$ consisting of $\{000\cdots 0,111\cdots 1\}$, the generating set of which is $\{111\cdots 1\}$.\Ft{This is the only code mentioned here not specifically named in Ref.\cite{rCHVP}.} Note that $h_4=d_4$, but $h_N\subset d_N$ for $N=8,12,16,\dots$.

There are many other doubly even codes, and the number grows quickly as $N$ becomes large; see Appendix~\ref{app:RM} and Refs.\cite{rCPS,rBilRees}.

\subsection{Permutation Equivalence and $R$-Symmetry}
It is also possible to permute the columns in a code.  For instance, for $e_7$ we might swap the last two columns and obtain a generating set
\begin{equation}
\begin{bmatrix}
1\,1\,1\,1\,0\,0\,0\\
0\,0\,1\,1\,1\,0\,1\\
1\,0\,1\,0\,1\,1\,0
\end{bmatrix}.
\end{equation}
This is another doubly even code, and it is different from $e_7$ as given in \Eq{eN}.  To verify this, one could write the 8 codewords in both cases and compare.
More generally, any permutation of columns of a code will produce another code, which is sometimes the same code, sometimes not.

An example where a column-permutation results in precisely the same code again can be seen by taking the $e_7$ generating set\eq{eN}, and swapping the first and third column, then swapping the second and fourth column.  The result would be:
\begin{equation}
\begin{bmatrix}
1\,1\,1\,1\,0\,0\,0\\
0\,0\,1\,1\,1\,1\,0\\
1\,0\,1\,0\,1\,0\,1
\end{bmatrix}
\quad\longrightarrow\quad
\begin{bmatrix}
1\,1\,1\,1\,0\,0\,0\\
1\,1\,0\,0\,1\,1\,0\\
1\,0\,1\,0\,1\,0\,1
\end{bmatrix}.
\end{equation}
The result does not look like the $e_7$ generating set, but it generates the same code.  Indeed, replace the second generator with the sum of the first and the second generator, and we recover exactly the original generator set for $e_7$.

If a permutation of the columns sends one code to another, the codes are said to be {\em\/permutation-equivalent}.  It is convenient for the classification and naming of codes to give one name for the permutation equivalence class, and recognize the multiplicity of codes that the name represents.

Since the columns of a code correspond to the various $Q_I$, a permutation of the columns of the code corresponds to a permutation of the $Q_I$, \ie, to an $R$-symmetry.  For real $N$-extended supersymmetry, the group of $R$-symmetries is $O(N)$; the permutation equivalences describe the subgroup of this matrix group consisting of permutation matrices.  Though this might suggest that the physically relevant question is permutation equivalence of codes, this is not necessarily so: It may well be possible to construct a theory with two types of supermultiplets, corresponding to two different but permutation-equivalent codes, coupled in a way that precludes rewriting the same theory in terms of only one type of supermultiplet. Although different in technical detail, the inextricable coupling of chiral and twisted-chiral supermultiplets discovered in Ref.\cite{rGHR} is a conceptual paradigm of this possibility.

The columns also correspond to the colors of the Adinkra, so permutation equivalence classes give rise to Adinkra topologies (without the edge colors).  This raises the question: do permutation equivalence classes of doubly even codes classify connected Adinkra topologies?  Certainly we have just described a map from the set of permutation equivalence class of doubly even codes to the set of connected Adinkra topologies.  And certainly this map is surjective.  But is it injective?  That is, is it possible that two non-equivalent doubly even codes will give rise to the same Adinkra topology?  The answer to this question is not clear, but luckily, in trying to classify Adinkras, we can leapfrog the issue of classifying Adinkra topologies and instead use the classification of Adinkra chromotopologies, where the issue is clear.

We have created computer algorithms to find and count permutation equivalence classes of doubly even codes.  Table~\ref{t:G2} provides a list of such for $N\le 11$, but beyond this the list is too long to include here.  Instead we provide Table~\ref{t:G3}, which provides the number of permutation equivalence classes of doubly even codes for $N\le 32$, except for a few cases where we do not yet know the answer.  Details on how these tables were created are in Appendix~\ref{app:RM}.
\begin{table}[ht]
\begin{center}
{\small
\begin{tabular}{l|rr|rr|rr|rr|rr|}
\boldmath$N$&\multicolumn{2}{c|}{\boldmath$k=0$}&\multicolumn{2}{c|}{\boldmath$k=1$}
            &\multicolumn{2}{c|}{\boldmath$k=2$}&\multicolumn{2}{c|}{\boldmath$k=3$}
            &\multicolumn{2}{c|}{\boldmath$k=4$}\\\hline
4&$t_4$&(1)&$d_4$&(1)\\\hline
5&$t_5$&(1)&$t_1\oplus d_4$&(5)\\\hline
6&$t_6$&(1)&$t_2\oplus d_4$&(15)&$d_6$&(15)\\\hline
7&$t_7$&(1)&$t_3\oplus d_4$&(35)&$t_1\oplus d_6$&(105)&$e_7$&(30)\\\hline
8&$t_8$&(1)&$t_4\oplus d_4$&(70)&$t_2\oplus d_6$&(420)&$t_1\oplus e_7$&(240)&
$e_8$&(30)\\
&&&$h_8$&(1)&$d_4\oplus d_4$&(35)&$d_8$&(105)&&\\\hline
9&$t_9$&(1)&$t_5\oplus d_4$&(126)&$t_3\oplus d_6$&(1260)&$t_2\oplus e_7$&(1080)&
$t_1\oplus e_8$&(270)\\
&&&$t_1\oplus h_8$&(9)&$t_1\oplus d_4\oplus d_4$&(315)&$t_1\oplus d_8$&(945)&&\\\hline
10&$t_{10}$&(1)&$t_6\oplus d_4$&(210)&$t_4\oplus d_6$&(3150)&$t_3\oplus e_7$&(3600)&$t_2\oplus e_8$&(1350)\\
&&&$t_2\oplus h_8$&(45)&$t_6 * d_6$&(630)&$d_4\oplus d_6$&(3150)&$d_{10}$&(945)\\
&&&&&$t_2\oplus d_4\oplus d_4$&(1575)&$t_2\oplus d_8$&(4725)&&\\\hline
11&$t_{11}$&(1)&$t_7\oplus d_4$&(330)&$t_5\oplus d_6$&(6930)&$t_4\oplus e_7$&(9900)&
$t_3\oplus e_8$&(4950)\\
&&&$t_3\oplus h_8$&(165)&$t_1 \oplus t_6 * d_4$&(6930)&$t_1\oplus d_4\oplus d_6$&(34650)&$t_1\oplus d_{10}$&(10395)\\
&&&&&$t_3\oplus d_4\oplus d_4$&(5775)&$t_3\oplus d_8$&(17325)&$d_4\oplus e_7$&(9900)\\
&&&&&&&$t_5*d_6$&(13860)&&\\\hline
\end{tabular}}
\end{center}
\caption{A listing of permutation equivalence classes for $N$ up to $11$, with the number of codes in the permutation equivalence class given in in the parentheses. Here, $\oplus$ denotes a vector space direct sum, so that if $U \subset (\ZZ_{2})^{N}$ and $V \subset (\ZZ_{2})^{M}$, then $U\oplus V \subset (\ZZ_{2})^{N} \oplus (\ZZ_{2})^{M}\cong (\ZZ_{2})^{N+M}$. The notation $t_{M}*C$ denotes the direct sum together with at least one additional ``glue'' codeword extending into the $t_{M}$ summand, similar to how a $e_{N}$ code is constructed from the corresponding $d_{N}$ code.}
\label{t:G2}
\end{table}
\begin{table}[ht]
\begin{center}\scriptsize
\begin{tabular}{r@{~~}|@{~~}r@{~}|@{~}r@{~}|@{~}r@{~}|@{~}r@{~}|@{~}r@{~}|@{~}r@{~}|@{~}r@{~}|@{~}r@{~}|@{~}r@{~}|@{~}r@{~}|@{~}r@{~}|@{~}r@{~}|@{~}r@{~}|@{~}r@{~}|@{~}r@{~}|@{~}r}
\hbox to0pt{\hss\small\boldmath$_N\!\backslash\!^k$}\kern-2pt
 &\bf1&\bf2&\bf3&\bf4&\bf5&\bf6&\bf7&\bf8&\bf9&\bf10&\bf11&\bf12&\bf13&\bf14&\bf15&\bf16\\[1mm]\hline\hline
 \bf 4 & 1\\\hline
 \bf 5 & 1\\\hline
 \bf 6 & 1& 1\\\hline
 \bf 7 & 1& 1& 1\\\hline
 \bf 8 & 2& 2& 2& 1\\\hline
 \bf 9 & 2& 2& 2& 1\\\hline
 \bf10 & 2& 3& 3& 2\\\hline
 \bf11 & 2& 3& 4& 3\\\hline
 \bf12 & 3& 5& 7& 7& 2\\\hline
 \bf13 & 3& 5& 8& 8& 4\\\hline
 \bf14 & 3& 7& 12& 14& 9& 4\\\hline
 \bf15 & 3& 7& 15& 20& 15& 8& 2\\\hline
 \bf16 & 4& 10& 23& 38& 36& 23& 9& 2\\\hline
 \bf17 & 4& 10& 25& 45& 50& 34& 14& 3\\\hline
 \bf18 & 4& 13& 34& 72& 94& 79& 35& 9\\\hline
 \bf19 & 4& 13& 40& 94& 146& 141& 75& 19\\\hline
 \bf20 & 5& 17& 57& 158& 295& 353& 231& 84& 10\\\hline
 \bf21 & 5& 17& 63& 194& 439& 629& 494& 198& 38\\\hline
 \bf22 & 5& 21& 83& 298& 812& 1481& 1465& 740& 187& 25\\\hline
 \bf23 & 5& 21& 95& 387& 1287& 2970& 3811& 2362& 714& 119& 11\\\hline
 \bf24 & 6& 27& 129& 607& 2444& 7287& 12395& 10048 & 3710 & 739 & 94 & 9\\\hline
 \bf25 & 6& 27& 141& 755& 3808& 15177& 35916 & 38049 & 16039 & 2973 & 309 & 22 \\\hline
 \bf26 & 6& 32& 180& 1114& 6923& 37455& 128270& 194626& 103527& 20206& 1829& 103 \\\hline
 \bf27 & 6& 32& 202& 1435& 11320& 86845 & 464579 & 1103023 & 817167 & 174809& 13578& 525 \\\hline
 \bf28 & 7& 39& 263& 2136& 20812& 224825 & 1917212 & 7631323 & 8948070 & 2550127 & 203178 & 7402& 151 \\\hline
 \bf29 & 7& 39& 287& 2693& 34233& 555804 & 8084014 & * & * & * & 4837471 & 133563 & 1940 \\\hline
 \bf30 & 7& 46& 359& 3866& 61871 & 1477074 & * & * & * & * & * & * & 70744 & 731 \\\hline
 \bf31 & 7& 46& 400& 4972& * & * & * & * & * & * & * & * & * & 25497 & 210 \\\hline
 \bf32 & 8& 55& 506& * & * & * & * & * & * & * & * & * & * & * & 7689 & 85 \\\hline\hline
\hbox to0pt{\hss\small\boldmath$^N\!/\!_k$}\kern-2pt\vrule width0pt height3ex
 &\bf1&\bf2&\bf3&\bf4&\bf5&\bf6&\bf7&\bf8&\bf9&\bf10&\bf11&\bf12&\bf13&\bf14&\bf15&\bf16\\\end{tabular}
\end{center}
\caption{Number of distinct permutation classes of doubly even $[N,k]$ codes. The ``\,*\,'' entry indicates codes that are still being enumerated; see {\small\tt http://www.rlmiller.org/de\_codes/} for up-to-date results, including links to listings of the actual codes.}
\label{t:G3}
\end{table}
 
\section{Supersymmetry and Clifford Algebras}
\label{s:ClifAd}
It has been understood for a long time that the supersymmetry algebra\eq{eSuSy} has a formal similarity with the Clifford algebra generated by the Dirac $\G_I$ matrices\Ft{In \eq{eSuSy}, $I,J=1,\cdots,N$ count {\em fermionic\/} dimensions. Accordingly, the quadratic form, $\d_{IJ}$, occurring in the right-hand side of \Eq{eDirac} is positive definite. This is unlike the {\em typical\/} field theory use of the Clifford/Dirac algebra, where $I,J$ would count {\em bosonic\/} dimensions of spacetime, implying the Lorentzian signature for the quadratic form on the right-hand side of \Eq{eDirac}.  Another point is that traditionally there would be a minus sign on the right hand side; but this difference is equivalent to changing the metric from positive definite to negative definite.  In the language of Ref.~\cite{rLM}, this leads to the Clifford algebra $\Cl(0,N)$ instead of $\Cl(N)=\Cl(N,0)$.}:
\begin{equation}
 \big\{\, \G_I \,,\, \G_J \,\big\} = 2\,\d_{IJ}\,\Ione. \label{eDirac}
\end{equation}
One manifestation of this was the study of the spinning particle by Gates and Rana\cite{rGR1,rGR2}, resulting in the Scalar Supermultiplet (which we herein rename {\em Isoscalar Supermultiplet\/}), defined in terms of the ${\cal GR}(d,N)$ algebra, a form of the Clifford algebra.  We will review their work first, and more explicitly describe how it relates to Clifford algebras; this will motivate the more general construction of relating supermultiplets to a representation of the Clifford algebra.

The main idea relating supermultiplets and Clifford representations is to note that the supersymmetry algebra in one dimension and the Clifford algebra differ only in that in the former, there is a factor of $i$ and a derivative $\ddt$.  So, to turn a supermultiplet into a representation of the Clifford algebra, we can simply forget the factors of $i$ and the derivatives $\ddt$.  This construction is in a sense the same construction quotienting by the ideal $(H-1)$ in Ref.~\cite{r6--1}, where $H=i\ddt$.  This also removes the $\ZZ$-grading afforded by the notion of engineering dimension.  Conversely, to take a representation of the Clifford Algebra, we must re-insert factors of $i$ and $\ddt$.  The factor of $i$ must go on the right-hand side of the transformation rules of the fermions (see Appendix~\ref{app:real}) and if we declare that the $\ddt$ goes in that same place, we end up with a supermultiplet called the Isoscalar supermultiplet.

\subsection{The Isoscalar Supermultiplet and the Clifford Algebra}
The Isoscalar supermultiplet consists of component fields $(\f_1,\dots,\f_m|\j_1,\dots,\j_m)$, written as column vectors:
\begin{equation}
 \Phi=\begin{bmatrix}
               \f_1\\ \vdots\\ \f_m
            \end{bmatrix}
 \qquad\text{and}\qquad
 \Psi=\begin{bmatrix}
               \j_1\\ \vdots\\ \j_m
             \end{bmatrix}.
\end{equation}
and with supersymmetry transformations given by the following Ansatz:
\begin{align}
Q_I\Phi &= \IL_I\Psi,\label{scalarmult1}\\
Q_I\Psi &= i\,\IR_I\,\ddt\Phi,\label{scalarmult2}
\end{align}
where $\IL_I$ and $\IR_I$ are $m\times m$ real matrices to be determined. The supersymmetry algebra\eq{eSuSy} then implies
\begin{eqnarray}
\IL_I\IR_J+\IL_J\IR_I&=&2\,\d_{IJ}\Ione,\label{lr1}\\
\IR_I\IL_J+\IR_J\IL_I&=&2\,\d_{IJ}\Ione,\label{lr2}
\end{eqnarray}
where $\Ione$  is the $m\times m$ identity matrix. The $I=J$ cases of these equations imply that $\IR_I = \IL_I^{-1}$.
\Remk
The additional requirement $\IL_I=\pm \IR_I^T$ was also made in Ref.~\cite{rGR1}, but this is needed only when writing Lagrangians and will play no r\^ole in this paper.

We create, for each $I$, a $2m\times 2m$ real matrix $\G_I$ of the form
\begin{equation}
  \G_I\Defl \left[\begin{array}{c|c}
                      {\bf~0}&\IL_I\\\hline
                       \IR_I &{\bf0}
                  \end{array}\right].
 \label{gammalr}
\end{equation}
Equations\eq{lr1} and\eq{lr2} then imply that these $\G_1,\cdots,\G_N$ indeed satisfy \Eq{eDirac}, the algebra of the Dirac gamma matrices in dimension $N$, whence the notation.

We do not insist that $m$ be minimal here (a minimality criterion was used in Refs.~\cite{rGR1,rGR2}), nor do we adopt any particular convention for the type of the $\G_I$ matrices, except that the entries must be real.

The property of being block off-diagonal corresponds to the existence of a fermion number operator $(-1)^F$ which anti-commutes with the $\G_I$.  If we write our fields listing the bosons first, followed by the fermions, then $(-1)^F$ will be a diagonal matrix of the form
\begin{equation}
(-1)^F=\left[\begin{array}{c|c}
\Ione& {\bf0} \\\hline
{\bf0}& -\Ione\end{array}\right].\label{fermionnumber}
\end{equation}
If we define $\G_0$ to be $(-1)^F$, then $\G_0$ anticommutes with the remaining $\G_I$:
\begin{equation}
 \G_0{}^2=+\Ione, \qquad \{\G_0,\G_I\}=0.
\end{equation}
The formal algebra generated by $\G_0, \G_1, \dots, \G_N$, is then defined by the anticommutation relations:
\begin{equation}
\{\G_I,\G_J\}=2\,\delta_{IJ}\Ione,\qquad \text{for }I,J=0,1,\cdots,N\label{eDirac2}
\end{equation}
and therefore $\{\G_0,\ldots,\G_N\}$ satisfy \eq{eDirac}.  The algebra generated by the $\G_1,\dots,\G_N$ is the {\em Clifford algebra} $\Cl(0,N)$, and the inclusion of $\G_0$ extends $\Cl(0,N)$ into the Clifford algebra $\Cl(0,N{+}1)$. Any set of real matrices representing these $\G_0,\G_1,\dots,\G_N$ such that the above algebra closes is called a {\em real Clifford representation}\cite{rLM}.

Now suppose we are given a real Clifford representation.  We can split the representation into the $+1$ and $-1$ eigenvalues for $\G_0$, and if we choose a basis that respects this splitting, then $\G_0$ will be of the form given above, in \eq{fermionnumber}.  The Ansatz \eqs{scalarmult1}{scalarmult2} then gives the transformation rules for the Isoscalar multiplet corresponding to the Clifford representation.

\section{Constructing Adinkras from Codes}
\label{s:construct}
The main construction in this paper is to take a description of an $N$-dimensional cube $[0,1]^N$ and a doubly even $[N,k]$-code $C$, and construct a real Clifford representation.  The Isoscalar supermultiplet corresponding to this real Clifford representation will then be a supermultiplet for $N$-extended supersymmetry in one dimension, and will have an Adinkra with a chromotopology given by $[0,1]^N/C$.

\subsection{Clifford Supermultiplets and Cubical Adinkras}
We start with the $N$-dimensional cube itself.  To obtain an Adinkra with chromotopology the $N$-cube, we take the Clifford Algebra itself, $\Cl(0,N)$, as a Clifford representation.  The Clifford algebra $\Cl(0,N)$ is a real $2^N$-dimensional vector space, spanned by products of the form $\G_{I_1}\cdots \G_{I_k}$, where $I_1<\dots<I_k$\cite{rLM}.  As a vector space, it splits as a direct sum of two vector spaces: the even and odd parts.  The even (resp.~odd) part is the subspace spanned by products of even (resp.~odd) numbers of $\G_I$ matrices.

The Clifford algebra $\Cl(0,N{+}1)$ acts on $\Cl(0,N)$ in the following way: for every $1\le I\le N$, $\G_I$ acts by multiplication on the left.  The operator $\G_0$ multiplies the even $\G$-monomials by $1$ and the odd $\G$-monomials by $-1$.  It is then easy to see that the Clifford algebra\eq{eDirac2} holds.

We take as a basis for $\Cl(0,N)$ the products $\G_{I_1}\dots \G_{I_k}$ as above.  Specifically, for every vertex of the cube $\vec{x}=(x_1,\dots,x_N)\in\{0,1\}^N$, we define a component field\Ft{Elements of a Clifford algebra can be used to realize the component fields of a supermultiplet, by allowing the element of the Clifford algebra to be a function of the time-like coordinate $\tau$.  This construction was explicitly carried out in equation (67) of \cite{rGLP}.} 
\begin{equation}
e_{\vec{x}}=\G_1{}^{x_1}\cdots \G_N{}^{x_N}.\label{cubedefine}
\end{equation}
We note that $\G_I e_{\vec{x}}=\G_I\G_1{}^{x_1}\dots\G_N{}^{x_N}$ can be transformed into the form\eq{cubedefine}, with perhaps an overall minus sign: we use the anticommutation of the $\G_I$'s and, if $x_I=1$, the fact that $\G_I{}^2=1$. This results in
\begin{equation}
\G_I\cdot e_{\vec{x}} = \pm \G_1{}^{x_1}\cdots \G_{I-1}{}^{x_{I-1}} \,\G_I{}^{1-x_I}\,
\G_{I+1}{}^{x_{I+1}}\cdots\G_N{}^{x_N}.\label{cubeedge}
\end{equation}
The sign is $+1$ if the number of $J$ with $x_J=1$ and $J < I$ is even, and is $-1$ otherwise.  If we define $(-1)^{|\vec{x} < I|}$ to be that sign, and define $\vec{x}\,\boxplus I$ to be the vector $(x_1,\ldots,x_{I-1},1-x_I,x_{I+1},\ldots,x_N)$, then this equation becomes
\begin{equation}
\G_I\cdot e_{\vec{x}} = (-1)^{|\vec{x} < I|} e_{\vec{x}\,\boxplus\,I}.
\end{equation}
We will also need the function $\wt(\vec{x})$, which equals the number of 1's in $\vec{x}$.

Thus, $\Cl(0,N)$ is a representation of the Clifford algebra $\Cl(0,N{+}1)$.  It corresponds to an Isoscalar supermultiplet, by replacing $e_{\vec{x}}\mapsto\phi_{\vec{x}}(\t)$ when the weight of $\vec{x}$ is even, and $e_{\vec{x}}\mapsto\psi_{\vec{x}}(\t)$ when the weight of $\vec{x}$ is odd.  Following the construction of the Isoscalar supermultiplet, we define $\IL_I$ and $\IR_I$ to be
\begin{alignat}{3}
  \IL_I \psi_{\vec{x}} &= (-1)^{|\vec{x} < I|}\phi_{\vec{x}\,\boxplus\,I},\qquad&\qquad
  \IR_I \phi_{\vec{x}} &= (-1)^{|\vec{x} < I|}\psi_{\vec{x}\,\boxplus\,I},
 \label{eCuBaseLR}
\intertext{and thus define}
  Q_I \psi_{\vec{x}} &= (-1)^{|\vec{x} < I|}\phi_{\vec{x}\,\boxplus\,I},\qquad&\qquad
  Q_I \phi_{\vec{x}} &= (-1)^{|\vec{x} < I|}i \ddt\psi_{\vec{x}\,\boxplus\,I}.
 \label{eCuBaseQ}
\end{alignat}

The Adinkra for this is an $N$-dimensional cube.  To see this, we note that the basis elements are labeled by $\vec{x}\in\{0,1\}^N$, the vertices of the $N$-dimensional cube.  The edges colored $I$ connect $\vec{x}$ to $\vec{x}\,\boxplus I$, which changes the $I^\text{th}$ coordinate.  The supermultiplet specified by Eqs.\eq{eCuBaseQ} was called the ``base bosonic Clifford Algebra superfield''  in Ref.~\cite{rA}. To emphasize its chromotopology, we will refer to this as the ``Colored $N$-cube Clifford supermultiplet''.

\subsection{The $N=4$, $D_4$ Projection}
\label{s:d4}
In considering Adinkras that are not cubes, but rather quotients of cubes, it is useful to first consider a few examples.  First, we consider the $N=4$ example obtained by quotienting a four-dimensional cube by identifying antipodal points.  This Adinkra was first described in Ref.~\cite{rA}, where it was identified as the dimensional reduction of the $D=4$ chiral superfield.  Here, we present this example in a way that will motivate the general construction to quotient cubes.

\subsubsection{One $D_4$ Projection}
Consider the Clifford representation $\Cl(0,4)$.  Define the element
\begin{equation}
 g=\G_1\G_2\G_3\G_4.\label{g4}
\end{equation}
Define the two linear transformations $\p_{+}, \p_{-}:\Cl(0,N)\to\Cl(0,N)$ to be
\begin{equation}
 \p_{\pm}(v)=v{\cdot}\frac{1\pm g}{2},\quad v,\p_{\pm}(v)\in\Cl(0,N) . 
 \label{eqn:proj}
\end{equation}

Note that in the definition\eq{eqn:proj}, we multiply $v$ by a factor on its right.  This is important.  It implies that for every $\G_I$ and $v\in \Cl(0,N)$, we have that $\G_I\big(\p_{\pm}(v)\big)=\p_{\pm}\big(\G_I(v)\big)$.

The fact that $g$ is even means that $\p_\pm$ preserves the bosonic and fermionic statistics.  The fact that $g^2=1$ implies that $\p_+^2=\p_+$, $\p_-^2=\p_-$, $\p_+\p_-=\p_-\p_+=0$, and $\p_+ + \p_- = \Ione$.  This means that $\p_+$ and $\p_-$ are a complete set of projection operators, so that $\Cl(0,N)=\img(\p_+)\oplus\img(\p_-)$ as Clifford representations, and $\p_+$ and $\p_-$ project onto their corresponding components.  Neither of these components are zero, since $(1+g)/2$ and $(1-g)/2$ are themselves non-zero elements of $\Cl(0,N)$, which are in $\img(\p_+)$ and $\img(\p_-)$, respectively.

This kind of projection is nothing new: the matrix $g$ is, up to a scalar factor, the matrix known as $\g_5$ in four-dimensional field theory, and the projection $\p_\pm$ corresponds to the familiar projection to chiral spinors: the left- and right-handed halves of the Dirac spinor.

It is also true that $\ker(\p_+)=\img(\p_-)$ and vice-versa, so that we can also describe these representations in terms of constraints: as ``$v$ such that $v{\cdot}(1\mp g)=0$''.  Then we can realize the Clifford representation as a subspace of $\Cl(0,N)$, rather than as a quotient.  This accords with the idea that $\Cl(0,N)$, being a representation for $\Cl(0,N{+}1)$, decomposes as a direct sum into irreducibles.

We take the standard basis for $\Cl(0,4)$ mentioned above, $\{e_{\vec{x}}:\vec{x}\in\{0,1\}^N\}$, and apply $\p_+$ (resp.\ $\p_-$) on it.  The result spans $\img(\p_+)$ (resp. $\img(\p_-)$), but there are duplications (up to sign).  For instance, in $\img(\p_+)$, a vertex $\p_+(e_{\vec{x}})$ and $\p_+(e_{\vec{x}})\,g$ will be identified. Since, up to an overall sign,
\begin{equation}
 \p_+(e_{(x_1,x_2,x_3,x_4)}){\cdot}g ~\propto~\p_+(e_{(1-x_1,1-x_2,1-x_3,1-x_4)}),
\end{equation}
vertices are accordingly identified pairwise.

Constructing the Isoscalar supermultiplet from this, we get the following Adinkra:
\begin{equation}
 \vC{
 \begin{picture}(60,40)(5,-5)
 \put(3,3){\includegraphics[height=27mm]{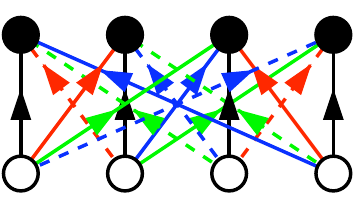}}
 \put(-14,15){$\img(\p_+):$}
 \put(0,30){$\j_{(1000)}$}
 \put(14,30){$\j_{(0100)}$}
 \put(28,30){$\j_{(0010)}$}
 \put(42,30){$\j_{(1110)}$}
 \put(0,0){$\f_{(0000)}$}
 \put(14,0){$\f_{(1100)}$}
 \put(28,0){$\f_{(1010)}$}
 \put(42,0){$\f_{(0110)}$}
 \end{picture}}
 \label{A:D4}
\end{equation}

The result\eq{A:D4} is the four-dimensional cube with opposite corners identified, as in Refs.\cite{rA}.  This Adinkra is a quotient of the four-dimensional cube by the code $d_4$ generated by $1111$.  This corresponds to the fact that by either doing nothing, or by reversing all four bits of a vertex, we return to the same vertex (with perhaps a minus sign).  In reference to the code name, the topology of the four-dimensional cube with opposite corners identified will be called $D_4$.

\subsubsection{The Two Inequivalent $D_4$ Quotients}
Similarly, we can find the Adinkra for the image of $\p_-$, using $1-\G_1\G_2\G_3\G_4$ instead.  When we do so, we see that the image of $\p_-$ and of $\p_+$ look similar---indeed, they have the same topology, the $4$-cube with opposite corners identified.  But they have different patterns of dashed edges.  These patterns cannot be made to coincide even when we redefine some of the vertices by replacing them with their negatives.  There are, in fact, two distinct irreducible representations of $\Cl(0,5)$, and these are the two.  Nevertheless, we can easily describe the relationship between them: by replacing $\G_4\mapsto-\G_4$, for instance, which corresponds to replacing $Q_4$ with $-Q_4$.  This, in turn, results in reversing the sign associated to each edge with $I=4$.  The Adinkras for these two are as follows:
\begin{equation}
 \vC{
 \begin{picture}(120,40)(5,-5)
 \put(3,3){\includegraphics[height=30mm]{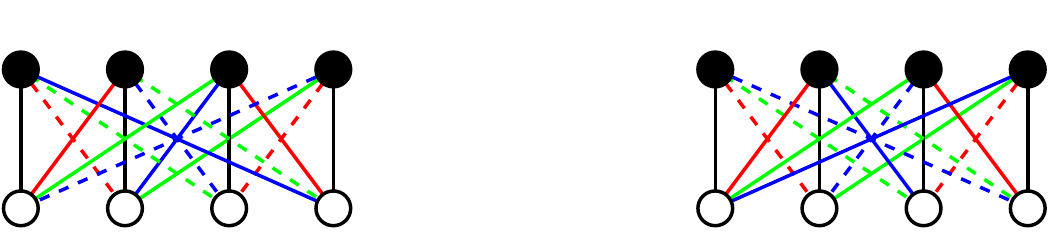}}
 \put(-14,15){$\img(\p_+):$}
 \put(0,30){$\j_{(1000)}$}
 \put(14,30){$\j_{(0100)}$}
 \put(28,30){$\j_{(0010)}$}
 \put(42,30){$\j_{(1110)}$}
 \put(0,0){$\f_{(0000)}$}
 \put(14,0){$\f_{(1100)}$}
 \put(28,0){$\f_{(1010)}$}
 \put(42,0){$\f_{(0110)}$}
 \put(68,15){$\img(\p_-):$}
 \put(82,30){$\j_{(1000)}$}
 \put(96,30){$\j_{(0100)}$}
 \put(110,30){$\j_{(0010)}$}
 \put(124,30){$\j_{(1110)}$}
 \put(83,0){$\f_{(0000)}$}
 \put(97,0){$\f_{(1100)}$}
 \put(111,0){$\f_{(1010)}$}
 \put(125,0){$\f_{(0110)}$}
 \end{picture}}
 \label{eB44a+b}
\end{equation}
We have suppressed the directions of the arrows:
 they are assumed to always point upward.  The
 replacement of $Q_4$ by $- \, Q_4$ is seen in the
 two Adinkras above by noting that all solid blue
 lines in the left hand Adinkra are replaced by
 dashed blue lines in the in the right hand Adinkra
 (and vice-versa).
 
Since the dimensional reduction of the $N=1$ chiral superfield in dimension $D=4$ down to dimension $D=1$ results in the image of $\p_+$, Readers might be tempted to think that the image of $\p_-$ arises from the dimensional reduction of the $N=1$, $D=4$ antichiral superfield.  This, however, is not the case.  Rather, it distinguishes different ways of reconstituting the $Q_1,\cdots,Q_4$ into $Q_\a$ and $Q^\dag_{\dot{\a}}$, with $\a,\dot\a=1,2$, as usual.  To go from one to the other Adinkra\eq{eB44a+b}, we must complex conjugate not both of the components of $Q_\alpha$, but only {\em half\/} of them---which is impossible without violating Lorentz symmetry in four dimensions.  If we complex conjugate all of $Q_\alpha$, we do swap chiral with antichiral superfields, but we reverse {\em\/both\/} $Q_3$ and $Q_4$. This change can be reversed by a redefinition of the real component fields, which swaps their complex combinations into the complex conjugates.
 Graphically, the action of complex conjugation would
 require that {\em {two}} colors must be used to
 implement the dashed/solid exchanges.  So for example,
 both solid blue lines {\em {and}} solid green lines
 in the left hand
 Adinkra are replaced by dashed blue lines  {\em {and}}
 dashed green lines in the in the right hand Adinkra
 (and vice-versa).

 The distinction between the two nonisomorphic supermultiplets depicted by the Adinkras\eq{eB44a+b} is thus more subtle. In fact, in more than 2-dimensional spacetimes, the $Q_I$'s are not Lorentz-invariant, and the sign of only one of them cannot be changed without violating Lorentz symmetry. In 2-dimensional spacetime the supermultiplets depicted in\eq{eB44a+b} are called chiral and twisted-chiral\cite{rGHR}, and we adopt this nomenclature also for the worldline $N=4$ supersymmetry. In $(2,2)$-supersymmetric theories in 2-dimensional spacetime, the transformation between the two supermultiplets\eq{eB44a+b} has been identified\cite{rTwSJG1,rMP1} as the root of mirror symmetry\cite{rMMYau1,rMMYau2,rMMYau3}.
 
 Now, any Lagrangian term involving only one of these types of supermultiplets can just as well be written in terms of only the other type; in this sense they may be regarded as equivalent.
 However, these two supermultiplets may well mix in a Lagrangian, and in a way that prevents rewriting the Lagrangian in terms of only one or the other type of supermultiplet, as has been done in Ref.\cite{rGHR}.
 This feature makes the two representations of supersymmetry, corresponding to two distinct irreducible representations of $\Cl(0,5)$ and depicted by the Adinkras\eq{eB44a+b}, {\em\/usefully distinct\/}.

 Note that such two nonisomorphic irreducible representations of $\Cl(0,N{+}1)$ exist precisely when $N=0\pmod4$, according to Table~\ref{cliffordtable}, discussed in Section~\ref{s:CCR}.

\subsection{Projecting Twice: the $D_6$ Isoscalar Supermultiplet}
 \label{s:D6}
For $N=6$, define the two elements
\begin{eqnarray}
 g_1&=&\G_1\G_2\G_3\G_4,\label{g61}\\
 g_2&=&\G_3\G_4\G_5\G_6.\label{g62}
\end{eqnarray}
As before, $g_1^2=g_2^2=1$.  Also note that $g_1$ and $g_2$ commute, and in fact,
\begin{equation}
 g_1g_2=g_2g_1=-\G_1\G_2\G_5\G_6.
 \label{e:g1g2}
\end{equation}
Analogously to the $D_4$ example, we define the four projection operators
\begin{eqnarray}
 \p_{1\pm}(v)=v{\cdot}\frac{1\pm g_1}{2},\\
 \p_{2\pm}(v)=v{\cdot}\frac{1\pm g_2}{2}
\end{eqnarray}
and note that since $g_1$ and $g_2$ commute, so do $\p_{1+}$ and $\p_{2+}$.  We will now project twice: once using $\p_{1+}$ (or $\p_{1-}$) and then again, using $\p_{2+}$ (or $\p_{2-}$)---a total of four choices.  Using $\p_{1+}$ and $\p_{2+}$ (for instance) produces the Clifford representation $\img(\p_{1+}\circ\p_{2+})$.  In this, and in what follows, we will use $\p_{1+}$ and $\p_{2+}$ for notational definiteness, but it is to be understood that these may be replaced by $\p_{1-}$ or $\p_{2-}$, respectively and independently, {\em mutatis mutandis}.

The composition is
\begin{align}
 (\p_{1+}\circ\p_{2+})(v)
 &= v{\cdot}\frac{1+g_2}{2}{\cdot}\frac{1+g_1}{2}
  =\frac{1}{4}v{\cdot}(1+g_1+g_2+g_2g_1)\\
 &=\frac{1}{4}v{\cdot}(1+\G_1\G_2\G_3\G_4+\G_3\G_4\G_5\G_6-\G_1\G_2\G_5\G_6).
\end{align}
It is straightforward to prove that $\img(\p_{1+}\circ\p_{2+})=\img(\p_{1+})\cap\img(\p_{2+})$ and that this is a Clifford representation.  Writing this as $\ker(\p_{1-})\cap\ker(\p_{2-})$, we can see that for all $v$ in this image, $v=v{\cdot}g_1=v{\cdot}g_2=v{\cdot}g_1g_2$.  In particular, when $g_1$ and $g_2$ commute with $v$, we have that $v=g_1\,v=g_2\,v=g_1g_2\,v$.

Define
 \begin{equation}
e_0 := (\p_{1+}\circ\p_{2+})(1)
 = \frac{1}{4}(1+\G_1\G_2\G_3\G_4+\G_3\G_4\G_5\G_6-\G_1\G_2\G_5\G_6).
\end{equation}
Note that $e_0$ commutes with $g_1$ and $g_2$, and so $e_0 = g_1e_0 = g_2e_0 = g_1g_2e_0$.  We successively apply the various $\G_1, \G_2, \G_3, \G_5$ to $e_0$ on the left and we get a collection of 16 fields that span $\img (\p_{1+}\circ \p_{2+})$, corresponding to vectors of the form $e_{(x_1,x_2,x_3,0,x_5,0)}$.  Applying $\G_4$ will not generate any new vectors because $\G_4 = g_1\G_1\G_2\G_3$, and so for every $e_{\vec{x}}$,
\begin{eqnarray}
\G_4 e_{\vec{x}} &=& g_1\G_1\G_2\G_3\cdot e_{\vec{x}}\\
&=& g_1 \G_1\G_2\G_3\G_1^{x_1}\cdots\G_N^{x_N}\cdot e_0\\
&=& \pm \G_1\G_2\G_3\G_1^{x_1}\cdots\G_N^{x_N} g_1\cdot e_0\\
&=& \pm \G_1\G_2\G_3\G_1^{x_1}\cdots\G_N^{x_N}\cdot e_0\\
&=&\pm \G_1\G_2\G_3\cdot e_{\vec{x}}.
\end{eqnarray}
Similarly, $\G_6$ will not generate new vectors, using $\G_6=g_2\G_3\G_4\G_5$.  

The Isoscalar supermultiplet corresponding to this will have 8 bosons and 8 fermions.  The $e_{\vec{x}}$ vectors correspond to the various $\f_{\vec{x}}$ and $\j_{\vec{x}}$.

If we do this, we get an Adinkra whose topology we call $D_6$, which is a six-dimensional cube projected twice: once according to $g_1=\G_1\G_2\G_3\G_4$ and then according to $g_2=\G_3\G_4\G_5\G_6$:
\begin{equation}
 \vC{
 \begin{picture}(120,45)(15,-5)
 \put(4,1){\includegraphics[height=30mm]{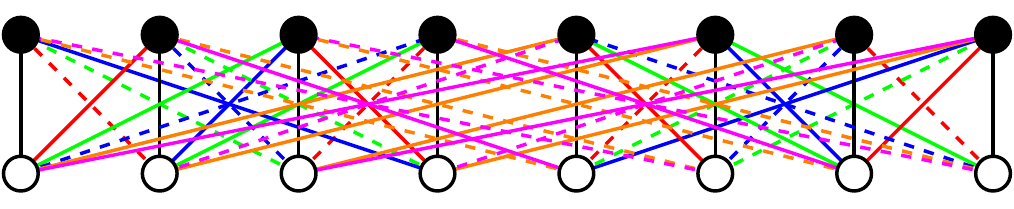}}
 \put(0,32){$\j_{(10000)}$}
 \put(20,32){$\j_{(01000)}$}
 \put(40,32){$\j_{(00100)}$}
 \put(60,32){$\j_{(11100)}$}
 \put(0,0){$\f_{(00000)}$}
 \put(20,0){$\f_{(11000)}$}
 \put(40,0){$\f_{(10100)}$}
 \put(60,0){$\f_{(01100)}$}
 \put(80,32){$\j_{(00001)}$}
 \put(100,32){$\j_{(11001)}$}
 \put(120,32){$\j_{(10101)}$}
 \put(140,32){$\j_{(01101)}$}
 \put(80,0){$\f_{(10001)}$}
 \put(100,0){$\f_{(01001)}$}
 \put(120,0){$\f_{(00101)}$}
 \put(140,0){$\f_{(11101)}$}
 \end{picture}}
 \label{eN6B88}
\end{equation}
Here the colors are as before, with orange for $Q_5$ and purple for $Q_6$.

The name $D_6$ derives from the code $d_6$, generated by the $k=2$ codewords $c_1=111100$ and $c_2=001111$; to denote this, we assemble the generator codewords as a matrix, called a {\em generator matrix} in coding theory:
\begin{equation}
 \left[\begin{smallmatrix}111100\\[3pt] 001111\end{smallmatrix}\right],
\end{equation}
but here we simply regard it as a collection of the codewords listed in its rows.  The first codeword corresponds to $\G_1\G_2\G_3\G_4$, the second with $\G_3\G_4\G_5\G_6$, and the sum of these codewords modulo 2, $110011$, corresponds to $-\G_1\G_2\G_5\G_6$.

More generally, addition in the code turns into multiplication of the corresponding products of $\G_I$ matrices, perhaps with a minus sign, because the $\G_I$ can be anticommuted past each other, and when a particular $\G_I$ appears in both $g_1$ and $g_2$, once anticommutation is done so that the two $\G_I$'s are adjacent, these two simplify to $\G_I{}^2=1$.

\subsection{Constructing an Isoscalar Supermultiplet from a Code}
The procedure illustrated in the previous two examples can be generalized to the following construction, first applied to Clifford representations by A.~Dimakis.\cite{rDimakis}.

\begin{construction}\label{const:quotient}
Suppose we are given $N$ and a doubly-even code $C\subset(\ZZ/2)^N$ of length $N$, given by a generating set $\{c_1,\dots,c_k\}\subset C$. Writing each $c_i$ as $(x_{i1},\ldots,x_{iN})\in\{0,1\}^N$, we associate to it $g_i\Defl\G_1^{x_{i1}}\cdots\G_N^{x_{iN}}$. For instance, $c_1=101100100$ would produce $g_1=\G_1\G_3\G_4\G_7$.

The $\wt(c_i)$ being even translates into $g_i$ being even.  When $\wt(c_i)$ is even, $\wt(c_i)$ being a multiple of $4$ is equivalent to $g_i{}^2=1$.  At the end of Section~\ref{s:Codes} we presented the standard fact that any two elements of the codes are orthogonal (that is, share an even number of $1$s).  As a result, all of the $g_1,\ldots,g_k$ commute with each other.  They generate a group $G$ under multiplication that is isomorphic as a group to the code $C$.  The isomorphism is done analogously to converting $c_i$ to $g_i$, except that a minus sign is sometimes required.\footnote{This construction is due to J. Wood, who used it to classify 2-elementary abelian subgroups of the spin groups\cite{rJAW}.}

For each $g_i$ we have $\p_{i\pm}:\Cl(0,N)\to\Cl(0,N)$ defined as
\begin{equation}
\p_{i\pm}(v)=v{\cdot}\frac{1\pm g_i}{2}.
\end{equation}
As before, the operators $\p_{i\pm}$ are homomorphisms.  The evenness of $g_i$ implies that $\p_{i\pm}$ preserves the fermionic and bosonic statistics.  The fact that $g_i{}^2=1$ implies that $\p_{i+}$ and $\p_{i-}$ are projection operators.

The fact that the $g_i$ all commute implies that the $\p_{i+}$ and $\p_{i-}$ all commute.

The following table summarizes how the properties of the generators of the code relate to the properties of the $\p_{i\pm}$.
\begin{equation}
\begin{tabular}{c|c|c}
\boldmath\bf$c_i$ (Codeword)
                            &\boldmath$g_i$
                                          &\boldmath\bf$\p_{i\pm}$ (Projector)\\[1pt]
 \hline\hline
even weight                 & even        & preserves statistics\\
weight is~ $0\pmod4$   & $g_i{}^2=1$ & projection \\
pairwise orthogonal         & commute     & commute\\\hline
\end{tabular}
\end{equation}

We define\Ft{Strictly speaking, the notation\eqs{epiC}{eImC} should specify for each $g_i$ which sign is being used. For illustrative purposes, herein we only use $\p_{i+}$'s.  This choice will not affect chromotopology, but will affect the dashedness of the edges. Precisely which of these $2^k$ distinct choices in $\p_{1\pm}\circ\cdots\circ\p_{k\pm}$ provide (non-)isomorphic representations akin to\eq{eB44a+b} is a question we defer to a subsequent effort.}
\begin{equation}
 \p_C \Defl \p_{1+}\circ\cdots\circ\p_{k+},
 \label{epiC}
\end{equation}
and then as in the previous example,
\begin{equation}
 \img(\p_C)~=~\img(\p_{1+})\cap\cdots\cap\img(\p_{k+})
           ~=~\ker(\p_{1-})\cap\cdots\cap\ker(\p_{k-})
 \label{eImC}
\end{equation}
is the Clifford representation we want.  If we define
\begin{equation}
 e_0=\p_C(1)=\frac{1+g_1}{2}\cdots\frac{1+g_k}{2}=\frac{1}{2^k}\sum_{g\in G} g,
\end{equation}
and then successively apply the various $\G_I$ on the left to $e_0$, we obtain a set of elements of $\Cl(0,N)$.  Using the Dirac relations (\ref{eDirac}) shows that many of these are the same, up to an overall sign.  Furthermore, for all $g_i$, we have $g_i e_0 = e_0$, and more generally this is true of all elements of $G$.  It therefore follows that if we begin with an element $v$ represented by a $d_v$-dimensional matrix, the application of $\p_C$ results in a quantity representable by a $2^{-k}d_v$-dimensional matrix.

In fact, this will result in $2^{N-k+1}$ different elements of $\img(\p_C)$, occurring in $\pm$ pairs.  If we arbitrarily choose one from each $\pm$ pair, the result is a collection of $2^{N-k}$ vectors that form a basis for $\img(\p_C)$.
\end{construction}

\subsection{Other Examples} 

\begin{example}
The case $N=8$ allows the code $e_8$, which has the following generator set:
\begin{equation}
 e_8 \gen \left[\begin{smallmatrix}
                   11110000\\[2pt]
                   00111100\\[2pt]
                   00001111\\[2pt]
                   10101010
                \end{smallmatrix}\right].
\end{equation}
The corresponding $g_i$ are:
\begin{eqnarray}
 g_1&=&\G_1\G_2\G_3\G_4,\\
 g_2&=&\G_3\G_4\G_5\G_6,\\
 g_3&=&\G_5\G_6\G_7\G_8,\\
 g_4&=&\G_1\G_3\G_5\G_7.
\end{eqnarray}
These produce the following group elements:
\begin{subequations}
\begin{align}
 g_1g_2&=-\G_3\G_4\G_5\G_6,&
 g_1g_3&=\4\G_1\G_2\G_3\G_4\G_5\G_6\G_7\G_8,\\
 g_1g_4&=\4\G_2\G_4\G_5\G_7,&
 g_2g_3&=-\G_3\G_4\G_7\G_8,\\
 g_2g_4&=\4\G_1\G_4\G_6\G_7,&
 g_3g_4&=\4\G_1\G_3\G_6\G_8,\\
 g_1g_2g_3&=\4\G_3\G_4\G_7\G_8,&
 g_1g_2g_4&=-\G_1\G_4\G_6\G_7,\\
 g_1g_3g_4&=\4\G_2\G_4\G_6\G_8,&
 g_2g_3g_4&=-\G_1\G_3\G_4\G_8,\\
 g_1g_2g_3g_4&=\4\G_1\G_4\G_5\G_8.
\end{align}
\end{subequations}
Recall that each $g$ must be a product of a doubly even number of $\G_I$'s, explaining why are we only now seeing $g$'s with differing numbers of $\G_I$'s.

The Adinkra is below.  This has the feature that every boson is connected to every fermion.  So this is a $K(8,8)$ graph, and is denoted $E_8$.  This was introduced in Refs.\cite{rGR0,rGLPR} as the $N=8$ spinning particle in relation to what was described as a ``supergravity surprise''.  The new color, brown, corresponds to $\G_8$.
\begin{equation}
 \vC{
 \begin{picture}(120,45)(15,-5)
 \put(4,1){\includegraphics[height=30mm]{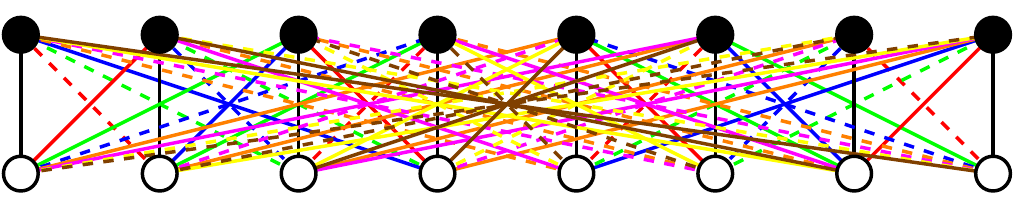}}
 \put(0,32){$\j_{(10000)}$}
 \put(20,32){$\j_{(01000)}$}
 \put(40,32){$\j_{(00100)}$}
 \put(60,32){$\j_{(11100)}$}
 \put(0,0){$\f_{(00000)}$}
 \put(20,0){$\f_{(11000)}$}
 \put(40,0){$\f_{(10100)}$}
 \put(60,0){$\f_{(01100)}$}
 \put(80,32){$\j_{(00001)}$}
 \put(100,32){$\j_{(11001)}$}
 \put(120,32){$\j_{(10101)}$}
 \put(140,32){$\j_{(01101)}$}
 \put(80,0){$\f_{(10001)}$}
 \put(100,0){$\f_{(01001)}$}
 \put(120,0){$\f_{(00101)}$}
 \put(140,0){$\f_{(11101)}$}
 \end{picture}}
 \label{eN8B88}
\end{equation}

As in the $N=4$ case, there are two irreducible representations, and one is obtained as above, while the other is obtained by reversing the sign on one of the $g_i$.  The result is the same Adinkra topology, but with various signs on the edges reversed.  For instance, we can reverse the sign on $g_1$ by reversing the signs on edges corresponding to $\G_4$.  This preserves the sign on $g_2$, $g_3$, and $g_4$.  Reversing various $\G_I$ yields apparently different Adinkras, but all of these must fall into just two isomorphism classes.

There are $\frac{8!}{1344}=30$ codes that are permutation equivalent to $e_8$.

As before, we can take a subgroup of $e_8$, but in this case, the situation is a bit more interesting: there are multiple inequivalent choices for which generator to remove.  Figure~\ref{fig:e8tree} shows all the $N=8$ doubly even codes, up to permutation equivalence.

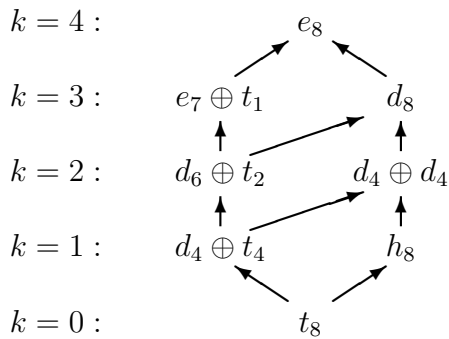
\begin{figure}[htb]
\begin{center}
\begin{picture}(70,45)(0,5)
\put(10,45){$k=4:$}
 \put(50,45){\makebox[0in][c]{$e_8$}}
  \put(40,39){\vector(3,2){7}}
  \put(60,39){\vector(-3,2){7}}
\put(10,35){$k=3:$}
 \put(62,35){\makebox[0in][c]{$d_8$}}
 \put(38,35){\makebox[0in][c]{$e_7\oplus t_1$}}
  \put(38,29){\vector(0,1){4}}
  \put(62,29){\vector(0,1){4}}
  \put(42,28.5){\vector(3,1){15}}
\put(10,25){$k=2:$}
 \put(38,25){\makebox[0in][c]{$d_6\oplus t_2$}}
 \put(62,25){\makebox[0in][c]{$d_4\oplus d_4$}}
  \put(38,19){\vector(0,1){4}}
  \put(62,19){\vector(0,1){4}}
  \put(42,18.5){\vector(3,1){15}}
\put(10,15){$k=1:$}
 \put(38,15){\makebox[0in][c]{$d_4\oplus t_4$}}
 \put(62,15){\makebox[0in][c]{$h_8$}}
  \put(53,9){\vector(3,2){7}}
  \put(47,9){\vector(-3,2){7}}
\put(10,5){$k=0:$}
 \put(50,5){\makebox[0in][c]{$t_8$}}
\end{picture}
\caption{$N=8$ doubly even codes and their subset relationships: The arrows connecting two codes are injections (possibly after permutation): the code on the lower level is a subcode of the higher one.  The trivial code $t_8$ is at the bottom, and $e_8$ is the unique maximal doubly even code for $N=8$ and is drawn at the top.  Every doubly even code in $N=8$ is a subcode of $e_8$.  The code $h_8$ is the one generated by $11111111$.}\label{fig:e8tree}
\end{center}
\end{figure}

For instance, for $k=3$, we could choose $e_7\oplus t_1$, which results in an Adinkra topology $E_7\times I^1$, or we could choose $d_8$, which results in a different Adinkra topology $D_8$:
\begin{align}
 E_7\times I^1&: \vC{\includegraphics[width=140mm]{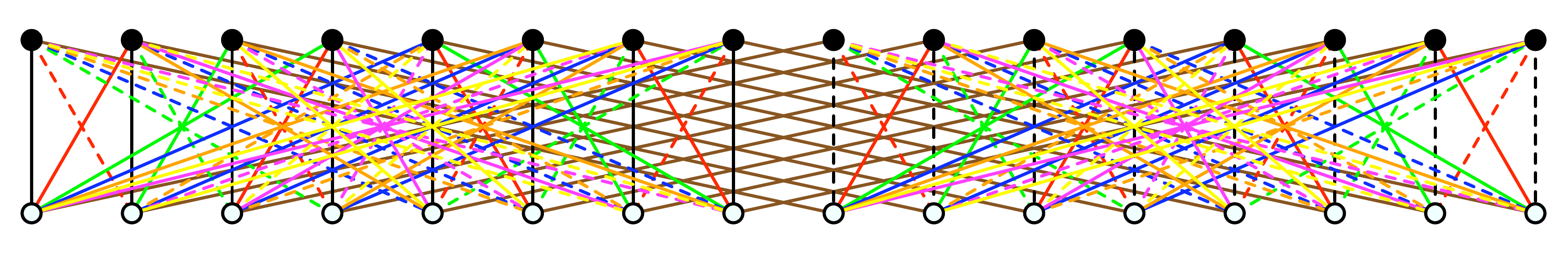}} \label{eE7xI}\\[2mm]
 D_8&:         \vC{\includegraphics[width=140mm]{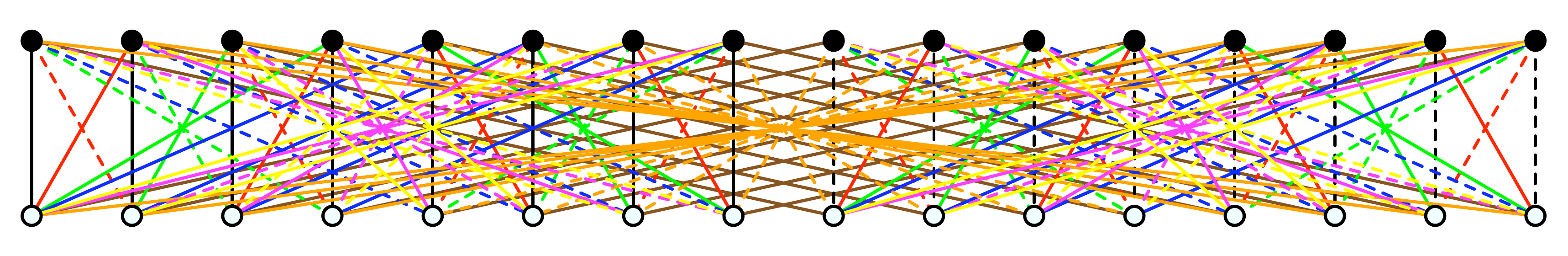}} \label{eD8}
\end{align}
The only difference between these two Adinkras is the way one of the supersymmetries acts, the one represented here by orange edges. In the $E_7\times I^1$-Adinkra\eq{eE7xI}, it acts within each of the two halves, leaving the Adinkra {\em\/1-color-decomposable\/}: only the brown edges span the whole Adinkra, and it decomposes into two identical $N=7$ Adinkras if the brown edges are erased. Each of the halves has the $E_7$ topology, except that edges of one color (black) have their dashing reversed. This becomes clearer upon rearranging the nodes a little:
\begin{equation}
 E_7\times I^1:~ \vC{\includegraphics[width=140mm]{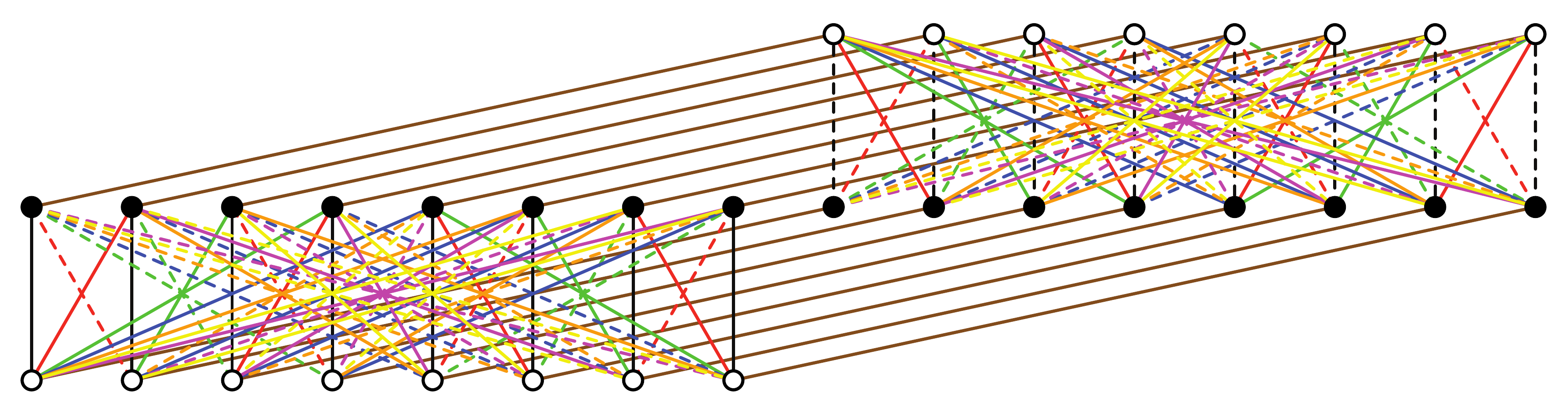}}
 \label{eE7xIa}
\end{equation}

In turn, in the $D_8$-Adinkra\eq{eD8}, the orange edges span the whole Adinkra together with the brown ones. For the sake of comparison with the $E_7\times I^1$-Adinkra\eq{eE7xIa}, we also rearrange the nodes of the $D_8$-Adinkra:
\begin{equation}
 D_8:~ \vC{\includegraphics[width=140mm]{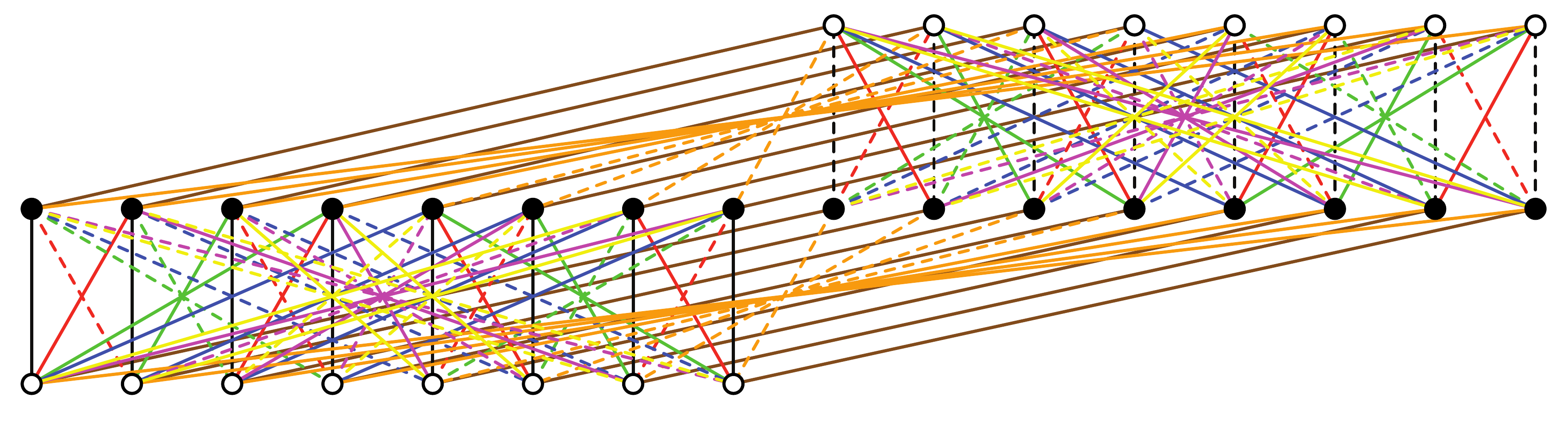}}
 \label{eD8a}
\end{equation}
To decompose this second Adinkra, one would have to erase the edges of at least two colors; we say it is 2-color-decomposable. This $n$-color-decomposability property correlates with the fact that the $e_7\oplus t_1$ code has one column of zeros, whereas $d_8$ does not:
\begin{equation}
 E_7\times I^1 \iff e_7\oplus t_1
  \gen\left[\begin{smallmatrix}
              0000\,1111\\[2pt]
              0011\,1100\\[2pt]
              0101\,0101
            \end{smallmatrix}\right],
 \qquad\text{{\it vs.\/}}\qquad
 D_8 \iff d_8
  \gen\left[\begin{smallmatrix}
              0000\,1111\\[2pt]
              0011\,1100\\[2pt]
              1111\,0000
            \end{smallmatrix}\right].
\end{equation}
Finally, both Adinkras\eq{eE7xI} and\eq{eD8} admit precisely one supersymmetry-preserving $\ZZ_2$ symmetry: for the former, it is encoded as $11110000$ and generated by $\G_1\G_2\G_3\G_4$, while the latter is symmetric with respect to the action of $\G_2\G_4\G_6\G_8$, encoded as $01010101$. The projection of each Adinkra by its respective symmetry then produces the $E_8$-Adinkra\eq{eN8B88}.

Similar relations exist between the $D_6\times I^2$- \textit{vs}.\ $D_4\times D_4$-Adinkras, and the $D_4\times I^4$- \textit{vs}.\ $H_8$-Adinkra corresponding to the $k=2$ and $k=1$ rows in the diagram in Figure~\ref{fig:e8tree}.
\end{example}

To summarize, Table~\ref{topologytable} gives the possible topologies for each $N$ up to $N=10$.  The cartesian product $\times$ refers to taking the cartesian product of the vertex set, and drawing edges between $(v_1,w_1)$ and $(v_1,w_2)$ whenever there is an edge between $w_1$ and $w_2$, and between $(v_1,w_1)$ and $(v_2,w_1)$ whenever there is an edge between $v_1$ and $v_2$.  Exponentiation means iterated cartesian products.  We note the Adinkra topologies $D_4$, $D_6$, $D_8$, and $D_{10}$ from the codes $d_4$, $d_6$, $d_8$ and $d_{10}$, as well as $E_7$ and $E_8$ from $e_7$ and $e_8$, respectively.  In addition, there turns up a code with no standard name, generated by $\{1111000000,0011111111\}$, which we tentatively call $D_4*_1\!I^6$.

\begin{table}[ht]
\begin{center}\footnotesize
\begin{tabular}{c|rc|rc|rc|rc|rc}
\boldmath$N$ & \multicolumn{2}{c|}{\boldmath$k=0$} & \multicolumn{2}{c|}{\boldmath$k=1$}
 & \multicolumn{2}{c|}{\boldmath$k=2$} & \multicolumn{2}{c|}{\boldmath$k=3$}
  & \multicolumn{2}{c}{\boldmath$k=4$}\\[1pt]
 \hline\hline
1&$I$   & $(1|1)$   &&&&&&&&\\
2&$I^2$ & $(2|2)$   &&&&&&&&\\
3&$I^3$ & $(4|4)$   &&&&&&&&\\
4&$I^4$ & $(8|8)$   & $D_4$             &$(4|4)$   &&&&&&\\[1mm]
5&$I^5$ & $(16|16)$ & $D_4\times I^1$     &$(8|8)$   &&&&&&\\
6&$I^6$ & $(32|32)$ & $D_4\times I^2$   &$(16|16)$ & $D_6$           & $(8|8)$&&&&\\
7&$I^7$ & $(64|64)$ & $D_4\times I^3$   &$(32|32)$ & $D_6\times I^1$   & $(16|16)$& $E_7$ & $(8|8)$&&\\
8&$I^8$ & $(128|128)$ & $D_4\times I^4$ & $(64|64)$& $D_6\times I^2$ & $(32|32)$& $E_7\times I^1$ & $(16|16)$& $E_8$ & $(8|8)$\\
 &      &             & $H_8$           & $(64|64)$& $D_4\times D_4$ & $(32|32)$ & $D_8$   & $(16|16)$       &  &    \\[1mm]
9&$I^9$ & $(256|256)$ & $D_4\times I^5$ & $(128|128)$ & $D_6\times I^3$ & $(64|64)$& $E_7\times I^2$ & $(32|32)$& $E_8\times I^1$ & $(16|16)$\\
 &      &             & $H_8\times I^1$   & $(128|128)$ & $D_4\times D_4\times I^1$ & $(64|64)$ & $D_8\times I^1$ & $(32|32)$&&\\
10&$I^{10}$ & $(512|512)$& $D_4\times I^6$& $(256|256)$ & $D_6\times I^4$ & $(128|128)$ & $E_7\times I^3$&$(64|64)$& $E_8\times I^2$&$(32|32)$\\
 &          &            & $H_8\times I^1$  & $(256|256)$ & $D_4\times D_4\times I^2$ & $(128|128)$ & $D_8\times I^2$ &$(64|64)$  & $D_{10}$ &$(32|32)$    \\
 &     &                 &                &             & $(D_4*_1\!I^6)^\dagger$   & $(128|128)$ & $D_4\times D_6$ &$(64|64)$ & &\\[1mm]\hline\noalign{\vglue2mm}
 \multicolumn{11}{l}{\parbox{172mm}{\baselineskip=11pt plus1pt $^\dagger$\,Herein, the code name ``$D_4*_1\!I^6$'' denotes that the $D_4$ code is padded by six zeroes and augmented by one additional, ``glue'' generator spanning the positions of the six added zeroes and a sufficient number (here, two) of 1's within $D_4$ so as to generate a doubly even code; it is generated by $\{1111\,0000\,00,0011\,1111\,11\}$.}}\\[-2mm]
\end{tabular}
\end{center}
\caption{Adinkra Topologies up to $N=10$: The number of nodes in an Adinkra with the indicated topology is shown to the right of each topology, in the form $(n_B|n_F)$, where $n_B$ is the number of bosons and $n_F$ is the number of fermions in the supermultiplet.}
\label{topologytable}
\end{table}

\subsection{Comparison with Clifford Representations}
 \label{s:CCR}
Since Clifford representations are already classified, it is worthwhile comparing what we have just found with the known classification of Clifford representations.  The abstract algebras $\Cl(0,N{+}1)$ are known for all $N$.  The following can be found on any standard text on Clifford algebras, such as Lawson and Michelsohn's {\em Spin Geometry}\cite{rLM}.

The Clifford algebras $\Cl(0,N)$ and $\Cl(0,N{+}1)$ are given in Table~\ref{cliffordtable}.  The description of these algebras has a modulo $8$ sort of periodicity, so it is convenient to write $N=8m+s$ where $m$ and $s$ are integers and $0\le s\le 7$.  The notation $\IR(n)$, $\IC(n)$, and $\IH(n)$ denotes the algebra of $n\times n$ matrices with real, complex, and quaternionic coefficients, respectively.

\begin{table}[ht]
\begin{center}
{\scriptsize
\begin{tabular}{r|c|c|c|c|c|c|c}
\boldmath$s$ & \bf\boldmath Cl$(0,N)$ & \bf\boldmath Cl$(0,N{+}1)$ & \bf\boldmath$\dim_\IR\text{Irrep.}$
 & \bf\boldmath$\dim_\IR\text{Cl}(0,N)$ &
 \parbox[b]{25mm}{\centering\baselineskip=10pt\bf\boldmath\#(Irrep.) in $\text{Cl}(0,N)$}
 & \bf\boldmath$\max k$
 \\[1pt]\hline\hline
0&$\IR(16^m)$&$\IR(16^m)\oplus\IR(16^m)$&$16^m$&$16^{2m}$&$16^m$&$4m$\\
1&$\IR(16^m)\oplus\IR(16^m)$&$\IR(2\cdot 16^m)$&$2\cdot 16^m$&$2\cdot 16^{2m}$&$16^m$&$4m$\\
2&$\IR(2\cdot 16^m)$&$\IC(2\cdot 16^m)$&$4\cdot 16^m$&$4\cdot 16^{2m}$&$16^m$&$4m$\\
3&$\IC(2\cdot 16^m)$&$\IH(2\cdot 16^m)$&$8\cdot 16^m$&$8\cdot 16^{2m}$&$16^m$&$4m$\\
4&$\IH(2\cdot 16^m)$&$\IH(2\cdot 16^m)\oplus\IH(2\cdot 16^m)$&$8\cdot 16^m$&$16\cdot 16^{2m}$&$2\cdot 16^m$&$4m+1$\\
5&$\IH(2\cdot 16^m)\oplus\IH(2\cdot 16^m)$&$\IH(4\cdot 16^m)$&$16\cdot 16^m$&$32\cdot 16^{2m}$&$2\cdot 16^m$&$4m+1$\\
6&$\IH(4\cdot 16^m)$&$\IC(8\cdot 16^m)$&$16\cdot 16^m$&$64\cdot 16^{2m}$&$4\cdot 16^m$&$4m+2$\\
7&$\IC(8\cdot 16^m)$&$\IR(16\cdot 16^m)$&$16\cdot 16^m$&$128\cdot 16^{2m}$&$8\cdot 16^m$&$4m+3$\\\hline
\end{tabular}\vspace{-3mm}
}
\end{center}
\caption{Clifford algebras $\Cl(0,N)$ decomposed into irreducible representations of $\Cl(0,N{+}1)$:
 Here $N=8m+s$ where $m$ and $s$ are integers and $0\le s\le 7$.}
\label{cliffordtable}
\end{table}

Note from Table~\ref{cliffordtable} that every Clifford algebra is either a matrix algebra, or a direct sum of two matrix algebras.  It is a classical result that every finite-dimensional real representation of such an algebra decomposes into irreducibles.  For $\IR(n)$, $\IC(n)$, and $\IH(n)$, there is up to isomorphism only one irreducible real representation: $\IR^n$, $\IC^n$, or $\IH^n$, respectively.  For $\IR(n)\oplus\IR(n)$, there are two non-isomorphic irreducible representations: one that ignores the second summand and is the standard representation on the first, and the other that ignores the first summand.  Likewise for $\IC(n)\oplus\IC(n)$ and $\IH(n)\oplus\IH(n)$.

The fourth column in Table~\ref{cliffordtable} counts the real dimension of the irreducible representation.  Based on the real dimension of $\Cl(0,N)$, this determines the number of copies of the irreducible representation found in $\Cl(0,N)$, which is shown in column six.

The last column of the table is labeled ``max $k$'', and refers to the maximal dimension $k$ of a doubly-even code.  This can be computed using Gaborit's mass formula, described in Appendix~\ref{sec:gaborit}\cite{rPGMass}, which gives the number of doubly even codes for a given $N$ and $k$.  This number vanishes when $k$ is larger than the numbers given in the table.  To find this value of $k$, recall that in \eq{eKmax} we defined the following function of $N$:
\begin{equation}
 \vk(N):=
  \begin{cases}
   0 &\text{for $N<4$},\\
   1 &\text{for $N=4,5$}, \\
   2 &\text{for $N=6$}, \\
   3 &\text{for $N=7$}, \\
   4 + \vk(N{-}8) &\text{for $N\geq 8$, recursively}.
  \end{cases}
\end{equation}
Note that if we write $N=8m+s$ where $0\le s\le 7$, then
\begin{equation}
 \vk(N)=
  \begin{cases}
   4m &\text{for $s=0$, $1$, $2$, or $3$},\\
   4m+1 &\text{for $s=4$ or $s=5$}, \\
   4m+2 &\text{for $s=6$}, \\
   4m+3 &\text{for $s=7$}.
  \end{cases}
  \label{eq:kmax}
\end{equation}
and Gaborit's mass formula implies that for any doubly even code of length $N$ and dimension $k$, $0\le k\le \vk(N)$.

Suppose we take an $N$-dimensional cubical Adinkra.  It has $2^N$ nodes.  If we quotient by a doubly even code of dimension $k$, so that the code has $2^k$ elements, we have $2^{N-k}$ nodes in the quotient Adinkra.  The minimal Adinkra then occurs when $k$ is as large as possible; that is, for $k=\vk(N)$.  Thus, a minimal Adinkra for a given $N$ has $2^{N-\vk(N)}$ nodes: half bosons, and the other half of them fermions.  In turn, this gives rise to a representation of the Clifford algebra, which in general will be some number of copies of the irreducible representations.  The fact that the real dimension of the irreducible representation, found in column 4, coincides with the real dimension of the maximally quotiented Adinkra, $2^{N-\vk(N)}$, indicates that the irreducible representations of the Clifford algebra can actually be obtained by quotienting by maximal codes.  In particular, the corresponding isoscalar multiplet has a description in terms of Adinkras.

Incidentally, it may concern the reader that this table shows that there is a unique irreducible representation for the Clifford algebra when $N$ is not a multiple of 4, and that there are precisely two irreducible representations for the Clifford algebra when $N$ is a multiple of 4.  This seems strange in contrast to the multitude of codes of maximal $k$ in Table~\ref{topologytable}.  This indicates that for Isoscalar Adinkras at least, there are hidden isomorphisms between these supermultiplets.  The one-hooked versions, however, have no isomorphisms, and so are immune to this problem.  This issue will be investigated in greater detail in a future work.

\subsection{One-Hook Hanging Adinkras}
Since we have concluded that any connected Adinkra chromotopology is a quotient of the $N$-cube by a doubly even $[N,k]$-code $C$, the vertices of the Adinkra (that is, the component fields of the supermultiplet) correspond to cosets of $\{0,1\}^N$, thought of as $(\ZZ_2)^N$, by the subgroup $C$. This immediately implies that the Adinkra has $2^{N-k}$ nodes, where $k$ is the dimension of $C$. Of these, one half represents bosonic component fields, and the other half fermionic component fields in the corresponding supermultiplet.  We thus have that the number of bosonic component fields, $d_B$, and the number of fermionic component fields, $d_F$, after taking this quotient satisfy 
\begin{equation}
d_B ~=~ d_F ~=~  2^{N - k - 1}.
 \label{eDBF1}
\end{equation}

In Ref.\cite{r6-1}, we described a notion of hanging a graph by a one or more sources.  For instance, if we pick the vertex $v_*$ and hang the graph by it, we let all the arrows on edges point from the vertices that are further away from $v_*$ (as measured through the edge set) to vertices that are closer to $v_*$.  Equivalently, for each vertex $v$ we define the engineering degree of $v$ to be
\begin{equation}
 [v]=[v_*]-\frac{1}{2}\mbox{dist}(v,v_*),
 \label{v*v}
\end{equation}
where $\mbox{dist}(v,v_*)$ is the length of the shortest path from $v$ to $v_*$ in the edge set.  Then the arrows are drawn in the direction of increasing engineering degree, upward. 

For every doubly even code $C$, use the Construction above to find an Adinkra.  Then choose one bosonic field and create a one-hooked Adinkra hooked on that field.  Call this Adinkra $\cA_C$.  In this way, we have assigned to every doubly even code $C$ an Adinkra $\cA_C$.  Now given two different doubly even codes $C_1$ and $C_2$, we have two distinct Adinkras, $\cA_1$ and $\cA_2$.  They are not isomorphic as chromotopologies because it is a different set of sequences of colors that take us from a vertex to itself in both cases.  But because these are one-hooked, we can say something stronger: the supermultiplets they describe will not be isomorphic. 

\begin{theorem}
Let $C_1$ and $C_2$ be different doubly even codes, and suppose $\cA_1$ and $\cA_2$ are one-hooked connected Adinkras with codes $C_1$ and $C_2$, respectively.  Then the supermultiplets from $\cA_1$ and from $\cA_2$ are not isomorphic.\label{thm:unique}
\end{theorem}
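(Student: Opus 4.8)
The plan is to show that an isomorphism of supermultiplets $\Theta\colon\sM_1\to\sM_2$ between the two one-hooked multiplets can exist only when $C_1=C_2$; together with the fact (Construction~\ref{const:thecode}) that each Adinkra determines its code and that every doubly even code is realized in this way, this is exactly the assertion. First I would record the structural consequences of $\Theta$ being an isomorphism of supermultiplets: it intertwines each $Q_I$, hence (since $\ddt=-i\,Q_I^2$ by \eqref{eSuSy}) also $\ddt$; it preserves the boson/fermion grading; and it is homogeneous up to an overall shift for the engineering-dimension grading --- for a one-hooked Adinkra this grading is the canonical one $[v]=[v_*]-\tfrac12\mathrm{dist}(v,v_*)$ and is intrinsic to the multiplet, so $\Theta$ cannot fail to respect it. I would regard $\sM_a$ as the free $\IR[\ddt]$-module on its component fields, graded with $[\ddt]=1$, and pass to the finite-dimensional quotient $\overline{\sM}_a:=\sM_a/\ddt\sM_a$; since $Q_I^2=i\ddt$ acts as $0$ there and the $Q_I$ ($I\neq J$) anticommute modulo $\ddt$, the induced operators $\bar Q_I$ make $\overline{\sM}_a$ a graded module over the exterior algebra $\Lambda^\bullet(\IR^N)$, and, by the transformation rules \eqref{eQB}--\eqref{eQF}, $\bar Q_I\bar w$ is for each component field $w$ either $0$ or a nonzero scalar times $\overline{q_I(w)}$ --- nonzero exactly when $w$ is the lower endpoint of its $I$-colored edge.

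The key point, and the only place one-hookedness enters, is that $\bigcap_{I}\ker\bar Q_I=\IR\,\overline{v_*^{(a)}}$ is one-dimensional and lies in the top engineering degree: because $\cA_a$ is connected and one-hooked, the hanging function has $v_*^{(a)}$ as its unique local maximum, equivalently $v_*^{(a)}$ is the unique component field that is the upper endpoint of all $N$ of its edges. From this I would extract a rigidity statement: the chromotopology $\fT(\cA_a)$ is recovered uniquely from the datum $(\overline{\sM}_a,\{\bar Q_I\}_I,\text{the }\ZZ_2\text{-grading},\text{the engineering grading})$. One proves this by peeling downward from the top line: the degree-$[v_*^{(a)}]$ piece is exactly $\IR\,\overline{v_*^{(a)}}$; the next degree is spanned by the $N$ fermions adjacent to $v_*^{(a)}$, and inside it the vector (up to scalar) corresponding to the $I$-edge is the one killed by every $\bar Q_J$ with $J\neq I$; and, inductively, the component-field lines in each successive degree, together with which $\bar Q_I$ links them to the already-reconstructed lines above, are forced. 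Equivalently, the only homogeneous linear automorphism of $\overline{\sM}_a$ commuting with every $\bar Q_I$ is a scalar: being a scalar on $\IR\,\overline{v_*^{(a)}}$, it is forced to be the same scalar on each lower degree, since $\bigcap_I\ker\bar Q_I$ meets no lower degree. Hence the configuration of component-field lines --- i.e.\ the colored graph with its bipartition, which is the chromotopology --- is intrinsic.

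With the rigidity in hand the proof closes quickly. The isomorphism $\Theta$ induces a homogeneous isomorphism $\bar\Theta\colon\overline{\sM}_1\to\overline{\sM}_2$ intertwining each $\bar Q_I$; it therefore carries $\IR\,\overline{v_*^{(1)}}=\bigcap_I\ker\bar Q_I$ onto $\IR\,\overline{v_*^{(2)}}=\bigcap_I\ker\bar Q_I$, and then respects the peeling reconstruction degree by degree, matching component-field lines to component-field lines and $\bar Q_I$-links to $\bar Q_I$-links. Thus $\fT(\cA_1)\cong\fT(\cA_2)$ as chromotopologies, and since (via Theorem~\ref{T:QuoC} and the description of the code as the set of color-sequences that return a vertex to itself, the remark following Proposition~\ref{prop:codeinvariant}) the code is determined by the chromotopology, we get $C_1=C_2$. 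Contrapositively, if $C_1\neq C_2$ then no such $\Theta$ exists, so the supermultiplets of $\cA_1$ and $\cA_2$ are not isomorphic.

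The step I expect to be the real obstacle is the rigidity statement of the second paragraph --- showing that an abstract homogeneous isomorphism must descend to an honest isomorphism of the finite combinatorial data (the chromotopology) rather than scrambling vectors within engineering-dimension layers. This is precisely what one-hookedness buys: it supplies a \emph{unique} anchor vertex $v_*^{(a)}$ (equivalently, $\bigcap_I\ker\bar Q_I$ is a single line in the top degree), from which the layer-by-layer reconstruction starts and propagates without ambiguity. For a general hanging there can be several local maxima, $\bigcap_I\ker\bar Q_I$ spreads over several degrees, the rigidity fails, and genuinely non-trivial automorphisms appear --- this is the source of the ``hidden isomorphisms'' among Isoscalar multiplets noted in Section~\ref{s:CCR}.
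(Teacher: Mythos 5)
Your proof is correct and follows the same strategy as the paper's: pin down the hooked vertex by its extremal engineering dimension (equivalently, uniqueness of $\bigcap_I\ker\bar Q_I$), propagate along the $Q_I$-action level by level to extract an isomorphism of chromotopologies, and then invoke Proposition~\ref{prop:codeinvariant} to conclude $C_1=C_2$. The main difference is presentational: where the paper says "we continue in this fashion, until the isomorphism of supermultiplets produces an isomorphism of the corresponding chromotopologies," you make this precise by passing to $\sM/\ddt\sM$ as a graded $\Lambda^\bullet(\IR^N)$-module and carrying out the peeling induction explicitly, which is a welcome tightening of the same argument rather than a different route.
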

\begin{proof}
 For suppose there were such an isomorphism.  Suppose $\phi_1$ is the hooked bosonic field for $\cA_1$ and $\phi_2$ is the hooked bosonic field for $\cA_2$.  Any isomorphism preserves the number of degrees of freedom in each mass dimension, and since there are no fields with mass dimension less than $[\phi_1]$ or $[\phi_2]$, we must have $[\phi_1]=[\phi_2]$.  Any isomorphism must send some linear combination of component fields of $\cA_1$ and their derivatives to $\phi_2$.  But $\phi_2$ is of lowest mass dimension, and derivatives increase mass dimension.  So the only candidate for what must be sent to $\phi_2$ is $\phi_1$ perhaps multiplied by a unit-less constant $c$.
 
The fact that an isomorphism commutes with $Q_I$ means that for each $I$, $c Q_I\phi_1$ gets sent to $Q_I \phi_2$.  We continue in this fashion, until the isomorphism of supermultiplets produces an isomorphism of the corresponding chromotopologies.

By Proposition~\ref{prop:codeinvariant}, the code is an invariant of the chromotopoogy, and therefore, this isomorphism of chromotopologies implies $C_1=C_2$.
\end{proof}

One final remark about one-hooked Adinkras: for such Adinkras, we can examine how many component fields are in each engineering degree.  By \Eq{v*v}, this is equivalent to finding how many vertices are of a given distance from the highest vertex, $v_*$.  The corresponding notion in coding theory is the ``coset weight enumerator''\cite{rCHVP}.  This is a polynomial of the form $\sum_\ell a_\ell\, x^\ell$ where $a_\ell$ is the number of cosets whose distance to the $0$ coset is $\ell$.  Thus, the coset weight enumerator for a doubly even code can be used to find the number of component fields in each engineering dimension for a one-hooked Adinkra corresponding to that doubly even code.

\section{Application to 4- and Higher-Dimensional Theories}\label{s:fourd}
We expect to find, among $D=1$ off-shell theories, the dimensional reductions of the off-shell theories of higher-dimensional supersymmetric theories.  For instance, off-shell theories in four dimensions with $\cal N$-extended supersymmetry will dimensionally reduce to 1-dimensional theories with $N=4{\cal N}$ supersymmetries, since the smallest irreducible spinor in 4-dimensional spacetime has $N=4$ components.  In higher dimensions the smallest irreducible spinors have dimension that is an integral multiple of this.

Multiplets with minimal numbers of degrees of freedom are of interest, not only because they presumably provide the simplest theories, but also because they give us an idea of a lower bound to the size of a theory.  Minimal multiplets correspond to the irreducible representations of the Clifford Algebra, which are described in Table~\ref{cliffordtable}.  As was pointed out in Section~\ref{s:CCR}, they also correspond to maximal codes for each $N$, and these are listed in Table~\ref{t:G3} for $N\le 32$.  In any case, the number of component fields is $2^{N-\vk(N)}$ so that the number of bosons is $2^{N-\vk(N)-1}$ and likewise for the number of fermions.

The results are shown in Table~\ref{4DminREPsiz}.

\begin{table}[ht]
\begin{center}
\begin{tabular}{c|r|c|c|rl} 
 \boldmath$\cal N$ & \boldmath$N$ & \bf\#(Bosons) & \bf\#(Fermions) &
  \multicolumn{2}{c}{\bf \#(Adinkra Topologies)}\\[1pt]
 \hline\hline
$1$ & 4  & 4 & 4 & 1 & $D_4$  \\ 
$2$ & 8  & 8 & 8  &  1 & $E_8$ \\ 
$3$ & 12 & 64 & 64  & 2 & $D_{12}$, $E_8\times D_4$  \\ 
$4$ & 16 & 128 & 128  &  2 & $E_{16}$, $E_8\times E_8$ \\[1mm]
$5$ & 20 & 1,024 & 1,024  &  10 & $\dots^*$\\ 
$6$ & 24 & 2,048 & 2,048  & 9 & $\dots^*$\\ 
$7$ & 28 & 16,384 & 16,384  &  151 & $\dots^*$\\ 
$8$ & 32 & 32,768 & 32,768  &  85 & $\dots^*$\\[1pt] \hline\noalign{\vglue3pt}
\multicolumn{6}{l}{\parbox{125mm}{\footnotesize\baselineskip=9pt$^*$\,For ${\cal N}\geq5$ there are too many Adinkra topologies to be shown here; see {\scriptsize\tt http://www.rlmiller.org/de\_codes/} for an up-to-date table with links to actual codes.}}
\end{tabular}\vspace{-3mm}
\end{center}
\caption{Minimal Off-Shell 4D, ${\cal N}\le 8$ Supermultiplets}
\label{4DminREPsiz}
\end{table}

Table~\ref{4DminREPsiz} can be used to make an argument about the size of the smallest irreducible off-shell representation for a given value of $\cal N$. If the smallest indecomposable representation coincides with the smallest irreducible representation, then for a given value of $\cal N$, the table above determines the smallest off-shell representation. For each value of $\cal N$, it is possible to consider the representation that appears at the lowest level of that column.   All of these topologies are shown together with the number of bosonic and fermionic nodes in Table~\ref{4DminREPsiz}.

For the cases of ${\cal N} = 1\text{ and }2$, the number of bosonic and fermionic degrees of freedom are in agreement with the known minimal off-shell supersymmetrical representations.  The smallest 4D, ${\cal N} = 1$ off-shell representations do indeed consist of 4 bosons and 4 fermions.  In a similar manner, the smallest 4D, ${\cal N} = 2$ off-shell representations do indeed consist of 8 bosons and 8 fermions. The case of  4D, ${\cal N} = 3$ off-shell representations is not so widely known. Nevertheless, W.~Siegel has presented an argument about the off-shell structure of conformal 4D, ${\cal N} = 3$ supergravity that indicates it describes 64 bosons and 64 fermions\cite{rWSON}.  There exist, also, one known off-shell example of a 4D, ${\cal N} = 4$ supermultiplet in Salam-Strathdee superspace. It is the conformal 4D, ${\cal N} = 4$ supergravity supermultiplet field strength\cite{rBRdW} and it consists of precisely 128 bosons and 128 fermions. All of this agrees with the first four `data' points on Table~\ref{4DminREPsiz}.

Precisely when $N$ is a multiple of $8$, the maximum value of $k$ is $N/2$, and these codes are self-dual (where the orthogonal space of the code equals the code; doubly even implies that these codes are self-orthogonal).  Thus, these relate to even unimodular lattices.  Indeed, the case ${\cal N}=2$, or $N=16$, provides the two lattices, well known to string theorists: $E_8\times E_8$, and $SO(32)$, which we call $E_{16}$, since $D_{16}$ fits in the sequence of codes $D_{2n}$ such that $D_{16}\subset E_{16}$:
\begin{equation}
 D_{16}\iff d_{16}:\left[\begin{smallmatrix}
                          1111\,0000\,0000\,0000\\[1pt]
                          0011\,1100\,0000\,0000\\[1pt]
                          0000\,1111\,0000\,0000\\[1pt]
                          0000\,0011\,1100\,0000\\[1pt]
                          0000\,0000\,1111\,0000\\[1pt]
                          0000\,0000\,0011\,1100\\[1pt]
                          0000\,0000\,0000\,1111
                         \end{smallmatrix}\right]
 \qquad\text{\it vs.}\qquad
 E_{16}\iff e_{16}:\left[\begin{smallmatrix}
                          1111\,0000\,0000\,0000\\[1pt]
                          0011\,1100\,0000\,0000\\[1pt]
                          0000\,1111\,0000\,0000\\[1pt]
                          0000\,0011\,1100\,0000\\[1pt]
                          0000\,0000\,1111\,0000\\[1pt]
                          0000\,0000\,0011\,1100\\[1pt]
                          0000\,0000\,0000\,1111\\[1pt]
                          0101\,0101\,0101\,0101
                         \end{smallmatrix}\right].
\end{equation}

It is of interest to consider the final case above, with ${\cal N}=8$, or $N=32$ supersymmetries.  For this case we have $2^{N-16}$ = $2^{16}$ = 65,536 total nodes. The Adinkra associated with this topology has 32,768 bosonic nodes and 32,768 fermionic nodes.  By very different arguments\cite{rGLPR}, the final result of $\min(d_B)=\min(d_F)=32,768$ in this table appeared very early in our considerations of iso-spinning particles  and ``Garden Algebras'' as the spectrum generating algebras of spacetime supersymmetry.

There are 85 different maximal codes in $N=32$ (up to permutation equivalence).  The attempt to classify these began with Conway and Pless\cite{rCP} in 1980, though a correction was found to be necessary by Conway, Pless and Sloane in 1990\cite{rCPS}.  Bilous and van Rees\cite{rBilRees} in 2007 replicated these results by performing a more systematic search and provided the list of all 85 codes on the web-site\cite{rBilReesW}.  It is amusing to note that this was achieved not long ago, that $N=32$ is the upper limit of what is known currently about self-dual doubly even codes, and that 32 is the maximal $N$ needed in applications to superstrings and their $M$- and $F$-theory extensions.  Five of these $85$ codes have minimal weight 8, and it would be interesting to see if these five play a special role in ${\cal N} =8$ supersymmetries in four dimensions.

\section{Toward a Classification of Adinkras and Supermultiplets}
 \label{CliffHanger}
An Adinkra is determined by its chromotopology, a hanging of the vertices, and a choice of which edges are dashed.  We have in this paper described the classification of chromotopologies of Adinkras: they are disjoint unions of connected chromotopologies, each of which is described by a doubly even code.  Conversely, every doubly even code can be used to form an Adinkra.  As per Definition~\ref{dAd}, to go from this to specifying an Adinkra would require a choice of orientations for all the edges and a choice of dashing of each edge.

Concerning orienting the edges, we have discussed in this paper two such choices that are always available: Isoscalar Adinkras and one-hooked Adinkras.  More generally, in Ref.~\cite{r6-1}, we have shown that the choices in orienting edges correspond to the ways of ``hanging'' the vertices of the graph at various heights, subject to certain conditions.  The Isoscalar Adinkra results from hanging the Adinkra on all of the bosons, which are placed all at the same level.  The one-hooked Adinkra results from hanging the Adinkra from a single boson.

So if we describe the various ways of dashing the edges (an effort which will be described in another paper), would this then classify one-dimensional $N$-extended off-shell supermultiplets?  Not quite.  First, there is the question of whether every such supermultiplet comes from an Adinkra.   Our investigations in this area indicate that there do indeed exist non-adinkraic supermultiplets, but even for these supermultiplets, Adinkras turn out to be useful in their description. \cite{MatterN2Matter}  Second, we might ask whether two Adinkras may describe the same supermultiplet.  But if we insist that the Adinkras are one-hooked, then the corresponding representations are isomorphic if and only if they have the same code, as we saw in Theorem~\ref{thm:unique}.  So by one-hooking the Adinkras, we can be sure that these are all different supermultiplets.  Even considering these two issues, what is clear is that there is a surprising wealth of supermultiplets in one dimension as $N$ is increased towards $N=32$.

Finally, we wish to draw attention to the ``degeneracy'' in constructing even the minimal supermultiplets  for various $N$, uncovered by the listing of permutation equivalence classes of doubly even codes in Table~\ref{t:G3}.  Entries in this table that are greater than 1 indicate situations where there is more than one permutation class of doubly even codes that are possible for that value of $N$ and $k$.  This is true even if we restrict our attention to minimal supermultiplets, which have maximal $k$, when $N\ge 10$.  Note that as $N$ increases, this degeneracy increases, certainly for $k$ non-maximal, but also for $k$ maximal.   This clearly illustrates that this ``degeneracy'' amongst even the minimal supermultiplets grows extremely fast with $N$.

\clearpage
\appendix
\section{Deferred Details on Supersymmetry and Adinkras}
 \label{app:SA}
\subsection{Real Coefficients}
\label{app:real}
Real supermultiplets, $\sM=(F_1\6(\t),\cdots,F_m\6(\t))$ consist of real component fields: $\big(F_A(\t)\big)^\dag=F_A(\t)$, and supersymmetry is assumed to preserve this condition.  In this section, we will show that this condition implies that the coefficients $c$ in \Eq{eQB} and \Eq{eQF} are real.

We use the convention whereby $(XY)^\dag=Y^\dag X^\dag$, regardless whether $X$ and $Y$ are bosonic (commuting) or fermionic (anticommuting) objects, as is standard in the physics literature. 

If $\f_A(\t)$ is real, then its supersymmetry transform
\begin{equation}
\d_Q(\e) := -i\e^I\,Q_I
\end{equation}
 must also be real. Applying this to \Eq{eQB} results in
\begin{equation}
 \d_Q(\e)\, \f_A(\t) = -i\e^I\,c\,\ddt^{\,[\f_A]+\frac12-[\j_B]} \j_B(\t).
 \label{edQe}
\end{equation}
 Thus
\begin{equation}
 \Big(\d_Q(\e)\,\f_A(\t)\Big)^\dag
 ~=~i\,\ddt^{\,[\f_A]+\frac12-[\j_B]} \j^\dag_B(\t)\,c^*\,\e^I
 ~=~-i\e^Ic^*\,\ddt^{\,[\f_A]+\frac12-[\j_B]} \j_B(\t).\label{edQe+}
\end{equation}
 Comparing the right-hand sides of Eqs.\eq{edQe} and\eq{edQe+}, we find that
\begin{equation}
c^* =\,c.
\label{eRealC}
\end{equation}%
Thus the coefficients $c$ are real.

\subsection{The Proof that Scaling Factors $c=\pm 1$ are the Only Ones Necessary}
\label{app:pm}
We now will show that the coefficients in an adinkraic supermultiplet can be chosen to be $c=\pm 1$ via a rescaling of fields by real numbers.

\begin{proposition}\label{rescale}
Suppose we have an adinkraic supermultiplet, with $F_1\6(\t),\cdots,F_{2m}\6(\t)$ for component fields.  There is a real rescaling of these component fields so that each non-zero coefficient $c$ in Eqs.\eq{eQB} and\eq{eQF} is equal to $1$ or $-1$.
\end{proposition}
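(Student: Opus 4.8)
The plan is to produce the normalization by a single real rescaling $F_A\mapsto r_A F_A$ of the component fields, with each $r_A>0$, chosen so that every edge of the Adinkra acquires a coefficient of modulus $1$; since Appendix~\ref{app:real} (\Eq{eRealC}) shows that every coefficient $c$ in \Eqs{eQB}{eQF} is real, modulus $1$ forces it to equal $\pm1$. First I would decompose the Adinkra into its connected components, since a rescaling within one component leaves the others untouched, so it suffices to treat a connected Adinkra. Under $F_A\mapsto r_A F_A$, a rule $Q_I F_A=c\,\ddt^{\lambda}F_B$ becomes $Q_I F_A=(r_A c/r_B)\,\ddt^{\lambda}F_B$, so the whole task reduces to finding positive reals $r_A$ with $r_B=|c|\,r_A$ along each edge (in the direction $F_A\to F_B$ for which $Q_I F_A=c\,\ddt^{\lambda}F_B$); the opposite direction then automatically has coefficient of modulus $1$ as well, because \Eq{eQF} makes the $F_B\to F_A$ coefficient the reciprocal of the $F_A\to F_B$ one up to a factor of $i$.

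Such $r_A$ exist precisely when the prescribed ratios are consistent around loops. Concretely, fix a spanning tree of the connected Adinkra, set $r_{v_*}=1$ at a root $v_*$, and propagate $r_A$ along tree edges by the rule above; the only thing left to check is that for each non-tree edge the product of the factors $|c|$ around its fundamental cycle equals $1$. (Equivalently: set $r_A=\prod|c|$ along any walk from $v_*$ to $A$ and check this is walk-independent, i.e.\ the product of the $|c|$ around any closed walk at $v_*$ is $1$, using that reversing a walk inverts this product.) Establishing this closed-walk identity is the one substantive step.

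To prove it, take a closed walk based at the field $F_*$ corresponding to $v_*$, with edge-colors $I_1,\dots,I_n$ in the order traversed. On one hand, applying the transformation rules around the walk gives $Q_{I_n}\cdots Q_{I_1}F_*=\kappa\,\ddt^{\,n/2}F_*$, where $\kappa$ is the product of the (signed, possibly $i$-bearing) edge-coefficients traversed, so that $|\kappa|$ is exactly the product of $|c|$'s we need to be $1$; here $n$ is even because the Adinkra is bipartite. On the other hand, reorder the operators using only \Eq{eSuSy}: the word collapses to $Q_{I_n}\cdots Q_{I_1}=\pm(i\ddt)^{p}\,Q_1^{s_1}\cdots Q_N^{s_N}$, where $s_I\in\{0,1\}$ records the parity of the number of color-$I$ edges in the walk. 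Since the walk returns to $v_*$ we have $q_1^{s_1}\cdots q_N^{s_N}(v_*)=v_*$, so $Q_1^{s_1}\cdots Q_N^{s_N}F_*=c'\,\ddt^{\,\wt(s)/2}F_*$ for some constant $c'$ (exactly the situation of \Eq{eBack*}), and $\wt(s)$ is even since each $q_I$ exchanges bosons and fermions; squaring this relation and applying \Eq{eSuSy} once more — the computation of \Eqs{eCC0}{eCC2} — gives $(c')^2=i^{\wt(s)^2}=1$ because $\wt(s)$ is even. Hence $c'=\pm1$, so $\kappa$ is $\pm1$ times a power of $i$, and $|\kappa|=1$, as required.

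With the closed-walk identity in hand the desired $r_A$ exist; rescaling $F_A\mapsto r_A F_A$ then turns every nonzero edge-coefficient into a real number of modulus $1$, hence $\pm1$, while preserving reality of the fields and the form \Eqs{eQB}{eQF} of the representation, which proves the proposition. The \emph{main obstacle} is the loop-consistency step; the point worth stressing is that it does not require a separate treatment of the square (anticommutator) relations and the codeword relations, nor does it invoke double-evenness of the code — the single identity $Q_{I_n}\cdots Q_{I_1}F_*=\kappa\,\ddt^{\,n/2}F_*$ with $\kappa$ equal to $\pm1$ times a power of $i$ handles all closed walks at once.
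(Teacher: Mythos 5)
Your proposal is correct and takes essentially the same route as the paper: both use a spanning tree to reduce the problem to a loop‐consistency condition, and both verify that condition by squaring a product of $Q_I$'s around a closed walk and invoking the supersymmetry algebra\eq{eSuSy}. The only (cosmetic) differences are in ordering and packaging: the paper first performs the rescaling $\tilde F_i=|C_i|F_i$ using tree‐path coefficients and then checks the non‐tree edges by squaring the loop composition $Q_I Q_{I_n}\cdots Q_{I_1}$ directly, whereas you first prove the walk‐independent identity $|\kappa|=1$ for an arbitrary closed walk — by collapsing $Q_{I_n}\cdots Q_{I_1}$ to $\pm(i\ddt)^p Q_1^{s_1}\cdots Q_N^{s_N}$, observing that $\vec s$ is a codeword, and reusing the computation of \Eqs{eCC0}{eCC2} — and then do the rescaling. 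The ingredients (bipartiteness forcing even length, reality of $c$ from Appendix~\ref{app:real}, $c^2=\pm1\Rightarrow c=\pm1$) are identical, so this is the same argument modulo reorganization.
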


\begin{proof}
We suppose we have component fields $F_1\6(\t),\cdots,F_{2m}\6(\t)$,\footnote{See footnote~\ref{introf} in the proof of Theorem~\ref{T:QuoC} about the use of this notation.} and supersymmetry generators $Q_I$ so that the supersymmetry transformation rules are all of the form
\begin{equation}
Q_I F_A = c\,\ddt^\lambda F_B\label{adinkralike}
\end{equation}
where $\lambda$ is either $0$ or $1$, and $F_B$ is another component field, and $c$ is a complex number (real if $F_A$ is bosonic, pure imaginary if $F_A$ is fermionic).

We can draw an Adinkra-like graph for this supermultiplet: we again create vertices $v_1,\cdots,v_{2m}$ to correspond to the component fields $F_1,\cdots,F_{2m}$---white for bosonic fields and black for fermionic fields.  For each supersymmetry transformation like the one above, we draw an edge from the vertex $v_A$ to $v_B$.  The arrow goes from $v_A$ to $v_B$ if $\lambda=0$, and the reverse if $\lambda=1$.  As before, this is consistent with the corresponding equation for $Q_I F_B$.  Instead of choosing between a dashed or solid line, we label the edge by $c$ if $F_A$ is bosonic and $i/c$ if it is fermionic.  It is easy to check that this is consistent with the corresponding equation for $Q_I F_B$.

We can work on each connected component of this Adinkra-like graph separately.  So for the following we assume our Adinkra-like graph is connected.

We wish to rescale the various component fields so that $c$ has absolute value 1.  A naive approach would be to simply rescale $F_B$ by $c$, so that $c$ disappears in \Eq{adinkralike}.  But remember that there are $2mN$ of these equations, and only $2m$ component fields.  More visually, imagine the Adinkra-like graph with separate $c$-labels on each edge.  We can imagine starting at one vertex, then going along an edge to the next vertex, rescaling the corresponding field by whatever it takes to make $c=1$ on that edge.  The trouble is that when we come back to a vertex we already saw, we are no longer free to rescale that field without messing up other $c$'s.  But as we will see, it turns out that the supersymmetry algebra guarantees that when we return to a previously-seen vertex, the edge will have $|c|=1$.

We first begin by removing all the loops, using the following standard procedure from graph theory: if we choose an edge, and erase it from the graph, the graph may either separate or stay connected.  We first find an edge so that erasing it keeps the graph connected.  By successively finding such edges and erasing them, we obtain a minimal graph so that this graph is still connected.  This resulting graph is a tree, so that for every pair of vertices $v_i$ and $v_j$, there exists a unique, non-backtracking path from $v_i$ to $v_j$ in this tree (were it not unique, we could remove another edge).  This tree is called a {\em\/spanning tree} for the original graph.  The spanning tree for a graph is not unique, but we select one.  See Figure~\ref{fig:spantree}.

\begin{figure}[ht]
\begin{center}
\begin{picture}(130,75)(0,0)
 \put(1,2){\includegraphics{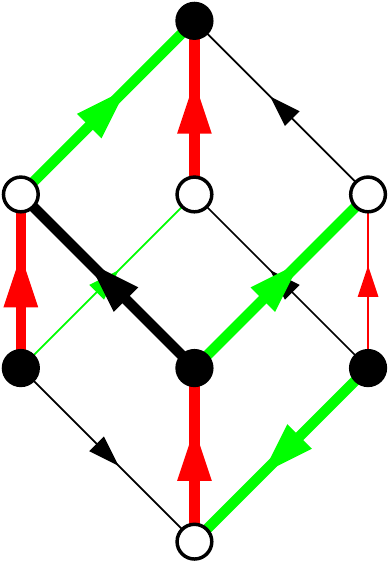}}
 \put(80,2){\includegraphics{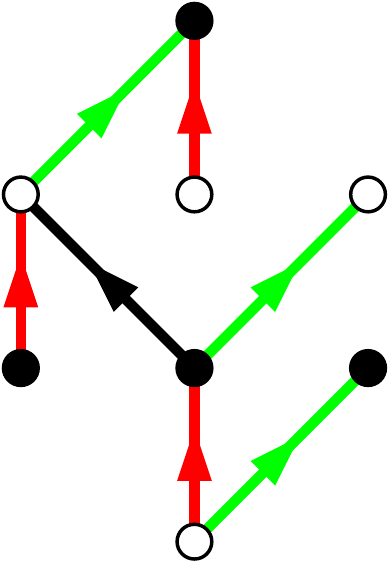}}
\end{picture}
\end{center}
\caption{On the left is an $N=3$ cubical Adinkra.  The dashedness of the lines are suppressed for clarity.  Some edges (shown with thin lines) can be deleted, until what is left is a tree, called the {\em\/spanning tree} (shown at right).\label{fig:spantree}}
\end{figure}

Now pick any vertex in our graph, denote it $v_*$ and the corresponding component field $F_*(\t)$.  For simplicity of exposition we can assume $v_*$ is bosonic.  If $v_i$ is any other vertex, there is a sequence of edges in the tree connecting $v_*$ with $v_i$, corresponding to a sequence of $Q_I$'s, say $Q_{I_1},\cdots, Q_{I_n}$.  We then apply \Eq{adinkralike} iteratively to $F_*(\t)$, obtaining
\begin{equation}
 Q_{I_n}\cdots Q_{I_1} F_*(\t)
  = C_i\,\ddt^{\lambda} F_i(\t) \label{treeAd}
\end{equation}
for some non-negative integer $\lambda$ (recording the number of arrows that point in the opposite direction of the path), and some non-zero coefficient $C_i$.  This coefficient $C_i$ is real if $n\equiv 0, 1\pmod{4}$, and pure imaginary if $n\equiv 2, 3\pmod{4}$.  It is then possible to rescale $F_i(\t)$ as
\begin{equation}
\tilde{F}_i(\t):= |C_i|\,F_i(\t),\label{eqn:rescale}
\end{equation}
and using $\tilde{F}_i(\t)$ instead of $F_i(\t)$ in the description of the supermultiplet. This rescaling is by a real number. Do so for every $v_i\neq v_*$. We have now done all the rescaling of the fields that we will do. The result is a new Adinkra-like graph, where each $C_i$ has absolute value 1.

If instead we want to go from $v_i$ back to $v_*$ in the tree, we can do the supersymmetry transformations in reverse:
\begin{equation}
Q_{I_1}\cdots Q_{I_n} F_i = \frac{1}{C_i}\,\ddt^{n-\lambda} F_*. \label{treeinvAd}
\end{equation}

Now suppose we choose two vertices $v_i$ and $v_j$ of the graph.  We construct a path $P$ from $v_j$ to $v_i$ by first taking the non-retracing path in the spanning tree from $v_j$ to $v_*$, followed by the non-retracing path in the spanning tree from $v_*$ to $v_i$.  See Figure~\ref{fig:treejtoi}.  Note that $P$ might be partially retracing, if the last few edges of the first path is retraced backward by a beginning segment of the second path.  This retracing is immaterial for us.

\begin{figure}[ht]
\begin{center}
\begin{picture}(38,50)(-6,5)
 \put(-3.5,-1.5){\includegraphics{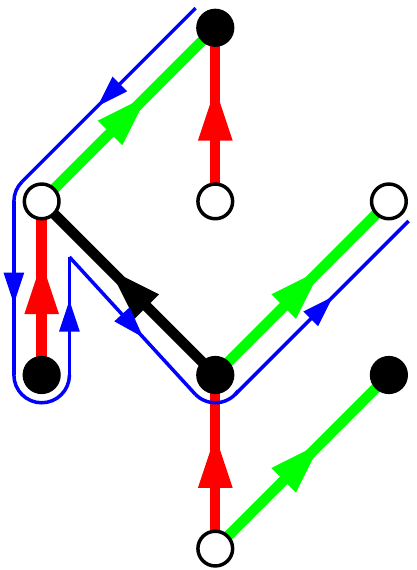}}
 \put(0,12.5){\makebox[0in][r]{$v_*$}}
 \put(39,35){\makebox[0in][l]{$v_i$}}
 \put(23,54){\makebox[0in][c]{$v_j$}}
 \put(2,48){{\color{blue} $P$}}
\end{picture}
\end{center}
\caption{Using the same example as in Figure~\ref{fig:spantree}, suppose we are considering the vertices $v_i$ and $v_j$ as seen in the figure.  The blue path shows a path $P$ in the spanning tree going from $v_j$ to $v_*$, followed by a path in the spanning tree going from $v_*$ to $v_i$.  One edge is retraced, but this is irrelevant for our purposes.  The sequence of colors (black\,=\,1, red\,=\,2, green\,=\,3) along $P$ indicate that we should consider $Q_3Q_1Q_2Q_2Q_3 v_j$ which will be a constant multiple of $\ddt^3 v_i$.  Because of our rescaling in \Eq{eqn:rescale}, we know this constant has absolute value $1$.
\label{fig:treejtoi}}
\end{figure}

If the path $P$ involves the sequence $Q_{J_1}\cdots Q_{J_k}$ of supersymmetry generators, we then use \Eq{treeAd} and \Eq{treeinvAd} to get
\begin{equation}
Q_{J_k}\cdots Q_{J_1} F_j=\frac{C_i}{C_j}\,\ddt^\mu F_i,\label{eqn:treebothAd}
\end{equation}
where $\mu$ is some non-negative integer, measuring the number of arrows pointing against the path $P$.  Note that $C_i/C_j$ has absolute value $1$.

In particular, if we consider any edge of our Adinkra that is in the spanning tree, its label $c$ now has absolute value $1$.

We now examine an edge of the Adinkra that is not in the spanning tree.  Suppose it connects a bosonic vertex $v_i$ to a fermionic vertex $v_j$, \ie, there is a $Q_I$ so that
\begin{equation}
Q_I v_i=c\,\ddt^{\lambda} v_j\label{eqn:newedge}
\end{equation}
for some real $c$ and $\lambda=0$ or $1$.  Now take the path $P$ in the spanning tree described above, namely, the one that goes from $v_j$ to $v_*$ then to $v_i$.  This corresponds to a sequence of supersymmetry generators $Q_{I_1},\cdots,Q_{I_n}$.  We now close the path $P$ into a closed loop, $P'$ by adding to $P$ the new edge going back from $v_i$ to $v_j$.

\begin{figure}[ht]
\begin{center}
\begin{picture}(38,60)(-6,-4)
 \put(-3.5,-1.5){\includegraphics{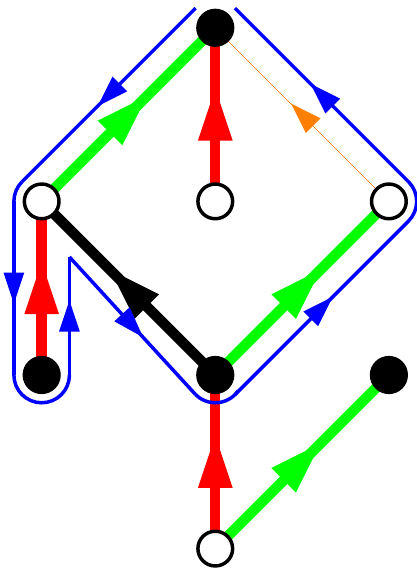}}
 \put(0,12.5){\makebox[0in][r]{$v_*$}}
 \put(39,35){\makebox[0in][l]{$v_i$}}
 \put(18,58){\makebox[0in][c]{$v_j$}}
 \put(33,48){{\color{blue} $P'$}}
\end{picture}
\end{center}
\caption{Suppose we consider the orange edge from $v_i$ to $v_j$, which is in the original Adinkra but not in the spanning tree.  We draw that path $P$ from $v_j$ to $v_i$ in the spanning tree as in Figure~\ref{fig:treejtoi}, and then tack on the orange edge.  The resulting path $P'$, shown in blue, is a loop from $v_j$ to itself.
\label{fig:treeloop}}
\end{figure}

Using \Eq{eqn:newedge} together with \Eq{eqn:treebothAd} we get, for some $r\in\IN$,
\begin{equation}
 Q_I Q_{I_n}\cdots Q_{I_1} F_j(\t)
 = c(-i)^r\frac{C_i}{C_j}\,\ddt^{\lambda+\mu} F_j.
\end{equation}
As this is a closed loop, we can apply this equation to itself, and get
\begin{equation}
 \left(\,Q_I Q_{I_n}\cdots Q_{I_1}\,\right)^2 F_j(\t)
   = (-1)^rc^2\left(\,\frac{C_i}{C_j}\,\right)^2
       \ddt^{2(\lambda+\mu)} F_j(\t).\label{eSqSeq}
\end{equation}
Using \Eq{eSuSy} we can anti-commute the $Q_I$'s past each other, contracting the repeated $Q_I$'s, until we get
\begin{equation}
 \pm (i\ddt)^{\,n+1} F_j(\t)
   = (-1)^r c^2\left(\,\frac{C_i}{C_j}\,\right)^2
       \ddt^{2(\lambda+\mu)} F_j(\t).\label{eResult}
\end{equation}
The $\pm$ sign is determined by $n$ and how many repetitions there are in the $Q_I$ sequence.  Note that $n+1$ is even, since each edge connects vertices of opposite statistics, and a sequence of $n+1$ of these edges go from $v_j$ to itself.  Thus, the left hand side of \Eq{eResult} is actually real.  We can then match coefficients in \Eq{eResult} to get
\begin{equation}
c^2 = \pm (-1)^{r+(n+1)/2}\frac{C_j^2}{C_i^2} = \pm 1.\label{eSign}
\end{equation}
From this we see that $c$ must have complex absolute value $1$.  Since $c$ is real, $c=\pm 1$.
Furthermore, we also note from \Eq{eResult} that $2(\lambda+\mu)=n+1$, indicating that the loop $P'$ involved precisely $(n+1)/2$ arrows going against the path, and thus, $(n+1)/2$ arrows going along the path. This implies that the absence of central charge excludes Escheric loops\cite{rA}.
\end{proof}

This also proves that Adinkraic supermultiplets with no central charge are {\em\/engineerable\/}\cite{r6-1}: it is possible to assign engineering dimensions to all component fields, consistently with the formulas for the supersymmetry generator action of the form Eqs.\eq{eQB1}--\eq{eQF2} of the supersymmetry algebra\eq{eSuSy}, and without having to introduce parameters of nonzero engineering dimension in either of these.

\section{Computing Doubly Even Codes}
 \label{app:RM}
 For some classification problems, there are theorems that answer the question once and for all.  For example, the theorem on the classification of finitely generated abelian groups describes the isomorphism class of such a group via a simple sequence of integers.  In most cases in combinatorics, one cannot hope for such a theorem.  Instead one must settle for an exhaustive enumeration up to a certain size.  It turns out that there are more than 1.1 trillion doubly even $[32,k]$ codes, which without compression will take some number of terabytes to store.  Using methods including those described in this appendix, we have already computed over 60,000 of these codes.
 
 The first section of this appendix gives the formula describing the number of codes we are interested in and a lower bound for the number of permutation-isomorphism classes of codes.  The next section explains the computations necessary for computing the automorphism group and canonical representative of a code, which are critical for the exhaustive enumeration.  In the third section, some optimizations for the particular situation at hand are discussed, which are used in an implementation written specifically for this purpose. The fourth section describes the overarching algorithm designed to produce the intended enumeration.
 
 The methods used to tackle this problem, namely partition refinement and canonical augmentation, were originally developed by McKay\cite{pgi,orderly}, who describes an effective method for computing the automorphism group of a graph and for generating a unique representative for each isomorphism class.  These methods were adapted by Leon in Ref.\cite{leon-main} to tackle similar problems, including several  group theory questions (which are graph isomorphism complete) as well as the problem at hand, namely computing the automorphism group of a linear code.  McKay and Leon have both written optimized software packages implementing these methods.  The first software package, which remains the standard program for determining graph isomorphism, is called \verb|nauty|, and is available online through McKay's website\cite{bdm-url}.    The second is now licensed under the GPL, thanks to the efforts of David Joyner and Vera Pless, and is available as a part of GUAVA\cite{guava}, a GAP package for coding theory. GPL implementations for the cases of graphs of any size and codes of length up to 32 (or 64 on many machines) can be found in Sage\cite{sage} ({\small\verb|sage.graphs.graph_isom|} and {\small\verb|sage.coding.binary_code|}).  For more details about the latter, written specifically for the classification of Adinkra topologies, the Reader can consult Section \ref{new_soft}.  The second and fourth sections are essentially a summary of the techniques described in Refs.\cite{pgi} and\cite{orderly}, respectively, which provide the full story.  The Reader should also be aware of the improvements to this algorithm in the sparse case in Ref.\cite{saucy}.

\subsection{The Mass Formula}
\label{sec:gaborit}
 The following formula, due to Gaborit\cite{rPGMass}, gives the number $\sigma(N,k)$ of 
distinct doubly even $[N,k]$ codes, depending on the residue of $N$ modulo 8:
\begin{align}
  \sigma(N,k) &= 
 \begin{cases}
  \ddd\prod_{i=0}^{k-1}\frac{2^{N-2i-2}+2^{\left\lfloor\frac{N}{2}\right\rfloor-i-1}-1}{2^{i+1} - 1},
      &\quad\text{if $N \equiv 1,7 \pmod 8,$} \\[6mm]
  \ddd\prod_{i=0}^{k-1}\frac{\bigl(2^{\frac{N}{2}-i-1}+1\bigr)\bigl(2^{\frac{N}{2}-i-1}-1\bigr)}{2^{i+1} - 1},
      &\quad\text{if $N \equiv 2,6 \pmod 8,$} \\[6mm]
  \ddd\prod_{i=0}^{k-1}\frac{2^{N-2i-2}-2^{\left\lfloor\frac{N}{2}\right\rfloor-i-1}-1}{2^{i+1} - 1},
      &\quad\text{if $N \equiv 3,5 \pmod 8,$} \\[6mm]
  \ddd\prod_{i=0}^{k-2}\frac{2^{N-2i-2}+2^{\frac{N}{2}-i-1}-2}{2^{i+1} - 1}
     \cdot\vp^+(N,k),
      &\quad\text{if $N \equiv 0 \pmod 8,$} \\[6mm]
 \ddd\prod_{i=0}^{k-2}\frac{2^{N-2i-2}+2^{\frac{N}{2}-i-1}-2}{2^{i+1} - 1}
     \cdot\vp^-(N,k),
      &\quad\text{if $N \equiv 4 \pmod 8,$}
 \end{cases}\\[4mm]
 \vp^\pm(N,k)
     &\Defl\frac{1}{2^{k-1}} + \frac{2^{N-2k} \pm 2^{\frac{N}{2} - k} - 2}{2^k - 1}~,
\end{align}
We are interested in those codes with $N\leq 32$. According to the mass formula above, we find the most such codes when $N=32$ and $k=10$.  In this case,
\begin{equation}
 \begin{aligned}
  \sigma(32, 10) &= 162,953,548,221,364,911,292,708,847,668,107,902,745,573,601,875 \\
   &\approx 1.6 \times 10^{47}.
 \end{aligned}
\end{equation}
However, we are interested not in the codes themselves, but rather their equivalence classes under permutations of the columns.
To establish a lower bound on the number of equivalence classes, suppose that every such code has a trivial automorphism group and thus its orbit is as large as possible, namely $32!$. Hence there are at least 
\begin{equation}
\bigg\lceil\frac{\sigma(32, 10)}{32!}\bigg\rceil = 619,287,158,132
\end{equation}
distinct classes of $[32, 10]$ codes.  If we carry out a similar computation for each case of interest, we find that there are at least
\begin{equation}
\sum_{k=0}^{\vk(32)}\bigg\lceil \frac{\sigma(32,k)}{32!}\bigg\rceil=1,117,005,776,858
~\approx~1.1\times10^{12}
\end{equation}
distinct classes (note that for smaller $N$, we simply add zero columns to obtain an equivalent code of degree 32).  Since the automorphism group of a generic code is small, this lower bound is a reasonable estimate for the actual number of codes.

 Regardless of the speed at which an individual processor can generate codes, the procedure for generating the codes can be run in many parallel processes.  We can not only divide by the rate at which we can generate codes per processor, but we can also divide by the number of processors working on the job.  This is what makes the computation feasible; speed increases almost linearly as a function of the number of processors.

\subsection{Computing the Automorphism Group}
 Computing the automorphism group of a code is closely related to the graph isomorphism problem.  Given a code $C \subset \{0,1\}^N,$ define a bipartite graph $G(C)$, with the vertices partitioned into ``left'' and ``right'', as follows:  The words of the code are the left vertices, and the set $\{1, 2, ..., N\}$ forms the right set of vertices.  Given a word $w$ on the left, and a number $j$ on the right, there is an edge between the two if and only if $w$ has a 1 in the $j^\text{th}$ place.  Consider the automorphism group $\Aut(G(C))$ of the graph, but allow only those permutations that map left vertices to left vertices and right vertices to right vertices.  This gives a subgroup $\Aut(G(C))_b$ which is isomorphic in a canonical way to the permutation automorphism group of the code, simply by considering each permutation's action on the right set of vertices, which is identified with the set of columns of the code.  Our approach is to think of the codes in terms of their corresponding bipartite graphs, with additional structure coming from the linear, self-orthogonal code. The algorithm described in this section computes the automorphism group of the code, as well as a canonical representative of the code, which is an arbitrary but fixed representative of the isomorphism class.
 
 Suppose $G$ is a graph, and $\Pi$ is a partition of the vertices $V(G)$.  The partition $\Pi$ is called an \em equitable partition \em if for every pair of cells $C_1, C_2$ in $\Pi$, the number of edges $\{u, v\}$ such that $v \in C_2$ is constant as $u$ ranges over vertices in $C_1$.  By considering the orbits of vertices of the graph under any subgroup of the automorphism group, one obtains a partition which is always equitable.  However, the converse is not always true; there are equitable partitions which do not arise in this way.  One partition $\Pi_1$ is \em coarser \em than another partition $\Pi_2$ (or $\Pi_2$ is \em finer \em than $\Pi_1$) if every cell of $\Pi_2$ is a subset of a cell of $\Pi_1$.  The \em discrete partition \em is the one where every cell is of size one, and the \em unit partition \em is the one where there is only one cell.
 
 Given a graph $G$ and a partition $\Pi$ of $V(G)$, denote by $E_G(\Pi)$ the coarsest equitable partition of $G$ which is finer than $\Pi$.  In particular, if $\Pi$ is equitable with respect to $G$, then $E_G(\Pi) = \Pi$.  The algorithm described in this section takes as input a graph $G$ and a partition $\Pi_0$ of $V(G)$, and it returns the subgroup $\Aut(G)_{\Pi_0}$ of the automorphism group consisting of permutations that respect $\Pi_0$, \ie, that do not carry any vertex of one cell of $\Pi_0$ into another.  From here on, consider all partitions to be ordered, so that for example $(\{1,2,3\},\{4,5\}) \neq (\{4,5\},\{1,2,3\})$.  In particular, a discrete ordered partition is simply an ordering on the vertices of $G$.  For $v \in V(G)$, if $\Pi = (C_1, ..., C_r)$ and $v \in C_i$, define $R(G, \Pi, v) = E_G(\Pi^\prime)$, where $\Pi^\prime = (C_1, ..., C_{i-1}, \{v\}, C_{i-1}\setminus\{v\}, C_{i+1}, ..., C_r)$.  This defines $R(G, \Pi, v)$ up to reordering of the cells of the partition. For full details, see Ref.\cite{pgi}, Algorithm 2.5, which defines the ordering.
 
 Define a rooted tree $T = T(G, \Pi_0)$ consisting of equitable partitions of $G$ finer than $\Pi_0$ as follows: The root of $T$ is the partition $E_G(\Pi_0)$, and for any node $\Pi$ of $T$, its children are the partitions $R(G, \Pi, v)$ for which $v$ is not yet in a singleton cell of $\Pi$.  The group $\Aut(G)_{\Pi_0}$ acts on the tree $T$ by taking a sequence of nested partitions $\Pi_0, ..., \Pi_k$ to the resulting nested sequence of partitions by letting $\Aut(G)_{\Pi_0}$ act on the elements of the cells; since this sequence is defined by a sequence $v_1, ..., v_k$, the sequence $\gamma(v_1), ..., \gamma(v_k)$ defines the image under $\gamma \in \Aut(G)_{\Pi_0}$.  Further, the subgroup of permutations respecting a partition $\Pi$ acts on any subtree of $T$ rooted at $\Pi$.  Because this action is faithful, the structure of $T$ can be used to calculate a set of generators for $\Aut(G)_{\Pi_0}$.
 
 The algorithm itself is a backtrack algorithm that successively refines the partitions to explore the tree $T$.  Any leaf of the tree is a discrete ordered partition $(\{v_{1}\}, \{v_{2}\}, ..., \{v_{n}\})$, which defines an ordering of the vertices of $G$.  Given two leaves of the tree $T$, one has two orderings $v_1, ..., v_n$ and $v_1^\prime, ..., v_n^\prime$, which we think of as a permutation defined by $v_i \mapsto v_i^\prime$.  This is the means by which the algorithm finds automorphisms, and it uses the presence of automorphisms to deduce when different parts of the tree are equivalent.  At this point the algorithm backtracks towards the root until there is a new part of the tree to explore which is not yet known to be equivalent to a part of the tree already traversed.  In practice, the part of the tree traversed is much smaller than the entire tree, and once one backtracks off of the root, one has a set of generators for $\Aut(G)_{\Pi_0}$.  Invariants and orderings can also be used to find a leaf of the tree $T$ which is maximal amongst the nodes with largest invariants, which is uniquely defined independently of reordering the inputs.  This leaf is a ``canonical label'' for the pair $(G, \Pi_0)$.  In particular, if $\gamma \in \Aut(G)_{\Pi_0}$, then the canonical label for $(G^\gamma, \Pi_0^\gamma)$ is the same as that of $(G, \Pi_0)$.  For details, the Reader is once again directed to the theory in Ref.\cite{pgi} and the code in Ref.\cite{sage}.
 
 The outcome is an algorithm to efficiently compute the group $\Aut(C) = \Aut(G(C))_b$ (here ``='' means canonically isomorphic) as well as a unique representative of each isomorphism class which we will denote $c(C)$, both of which will be used in the algorithm to generate all the permutation classes of doubly even codes.  Also note that by the use of the bipartite graph construction, not only is $c(C)$ a code, but it also contains an ordering of its words.  In practice, $c(C)$ will be output as an actual generator matrix.

\subsection{New Software for Special Circumstances}
\label{new_soft}
 It is a curious coincidence that the limit on $N$ for this application is 32.  This implies that the size of the words in a code are at most the size of the machine words on a 32-bit system.  The analogy goes further: the operation of adding two vectors in a code becomes the single clock tick of taking an \verb|XOR| on the machine words representing them.

 When analyzing codes in terms of their corresponding graphs, we can take advantage of the linear orthogonal structure of the codes to optimize standard graph algorithms.
For example, given an equitable partition $\Pi$ of the bipartite graph $G(C)$, the set of words appearing in their own singleton cells of $\Pi$ is closed under binary addition.  This is immediate from the definition of equitable partition and of binary addition, and this fact greatly reduces the expected height of the tree to be searched in computing $c(C)$.  Further, every permutation acts linearly, so to check if some permutation is indeed an automorphism, we need only check it on a basis.  It is also possible to use the linear structure to derive variants on the refinement procedure, depending on the size and type of codes being generated.
 
 In the generation algorithm, since we are interested only in self-orthogonal codes, we can use more linear algebra to efficiently enumerate the possible children of a code. We start by writing the code in standard form, $C = [I_k|C^\prime]$, and then extend the basis given by the rows to a basis for $C^\perp$.  This basis can be taken so that the first $k$ positions of each additional vector are zero, and we can also shuffle the pivot columns to the front, so that we obtain this basis as the rows of a matrix of the form $$\left[\begin{array}{c|c}I_k & C^{\prime\prime} \\\hline0 & \begin{array}{c|c}I_{N-2k} & *\end{array}\end{array}\right],$$ where $C^{\prime\prime}$ is a rearrangement of the columns of $C^\prime$.  The rowspan of the bottom part of the matrix, $\left[\begin{array}{ccc}0 & I_{N-2k} & *\end{array}\right]$, then forms a set of unique coset representatives of the original code.  Furthermore, (\ref{eqn:inclusionexclusion}) implies the following:
\begin{corollary} Suppose $C$ is a code spanned by $\{v_1,\cdots,v_k\}$, and that $\wt(v_i) \equiv 0 \pmod 4$ for each $i = 1,\dots,k$. If either of the following hold for all $i \neq j$, then $C$ is doubly even:
\begin{itemize}
\item[a\em)] $\langle v_i, v_j \rangle = 0$,
\item[b\em)] $\wt(v_i + v_j) \equiv 0 \pmod 4$.
\end{itemize}
\end{corollary}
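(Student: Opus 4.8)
The plan is to show that every codeword of $C$ has weight divisible by $4$, arguing by induction on the number of generators needed to write that codeword. Before the induction I would record that, under the standing hypothesis $\wt(v_i)\equiv 0\pmod 4$ for all $i$, conditions (a) and (b) are equivalent: applying \Eq{eqn:inclusionexclusion} gives $\wt(v_i\codexor v_j)=\wt(v_i)+\wt(v_j)-2\,\wt(v_i\codeand v_j)\equiv -2\,\wt(v_i\codeand v_j)\pmod 4$, and since $\langle v_i,v_j\rangle\equiv\wt(v_i\codeand v_j)\pmod 2$, this is $\equiv 0\pmod 4$ exactly when $\langle v_i,v_j\rangle=0$. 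So it is enough to treat case (a), i.e.\ to assume the generators are pairwise orthogonal.

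Next I would promote pairwise orthogonality of the generators to orthogonality of the entire code to each generator. Since $\{v_1,\dots,v_k\}$ is a basis, every codeword has the form $w=\sum_{j\in S}v_j$ for a unique $S\subseteq\{1,\dots,k\}$, and bilinearity of the inner product \Eq{eqn:codeinnerproduct} over $\ZZ_2$ gives $\langle w,v_\ell\rangle=\sum_{j\in S}\langle v_j,v_\ell\rangle=0$ for each $\ell$.

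Then comes the induction proper. Write a codeword as $w=v_{i_1}\codexor\cdots\codexor v_{i_\ell}$ and induct on $\ell$; the cases $\ell=0$ (where $w=0$) and $\ell=1$ follow directly from the hypothesis. For the step, set $w'=v_{i_1}\codexor\cdots\codexor v_{i_{\ell-1}}$, so that $\wt(w')\equiv 0\pmod 4$ by induction and $\wt(v_{i_\ell})\equiv 0\pmod 4$ by hypothesis. Because both weights are multiples of $4$, the consequence of \Eq{eqn:inclusionexclusion} noted at the end of Section~\ref{s:Codes} applies: $\wt(w'\codexor v_{i_\ell})$ is a multiple of $4$ if and only if $w'$ and $v_{i_\ell}$ are orthogonal --- and the previous step furnishes exactly that. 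Hence $\wt(w)\equiv 0\pmod 4$, which closes the induction and shows $C$ is doubly even.

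I do not expect a genuine obstacle in this argument; the two places needing a moment's care are the equivalence of (a) and (b), which is a single inclusion--exclusion computation modulo $4$, and the inductive step, where one must make sure the ``if and only if'' from Section~\ref{s:Codes} is invoked with \emph{both} of its arguments already known to have weight divisible by $4$. This last point is exactly why the induction is set up on the number of generators rather than directly on the weight.
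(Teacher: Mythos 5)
Your proof is correct, and it takes a genuinely different route from the paper's. The paper works with condition (b) directly: after noting the equivalence of (a) and (b), it proves a ``three-vector'' lemma (if $x,y,z$ and all three pairwise sums have weight $\equiv0\pmod4$, then so does $x+y+z$, verified by expanding $\wt(x+y+z)$ modulo $4$ via repeated use of \Eq{eqn:inclusionexclusion}), and then concludes by a strong induction on the number of basis elements in a linear combination, feeding all the required sub-sums into that lemma. You instead switch to condition (a) and exploit the bilinearity of the inner product \Eq{eqn:codeinnerproduct} to promote pairwise orthogonality of the generators to orthogonality of every codeword against every generator, after which the induction is a single application of the ``if and only if'' at the end of Section~\ref{s:Codes} with both arguments known to have weight $\equiv0\pmod4$. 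Your version buys a cleaner inductive step (no three-vector lemma, no need to track pairwise sums of partial sums) at the modest cost of one bilinearity observation; the paper's version stays closer to condition (b) and to raw weight arithmetic. One small point worth making explicit in your bilinearity step: when $\ell\in S$, the sum $\sum_{j\in S}\langle v_j,v_\ell\rangle$ contains the term $\langle v_\ell,v_\ell\rangle\equiv\wt(v_\ell)\pmod2$, which vanishes precisely because $\wt(v_\ell)$ is even (indeed a multiple of $4$) --- so self-orthogonality of the generators, not just pairwise orthogonality, is being used there.
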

\begin{proof} These results are special to binary codes, since every vector in $C$ is the sum of distinct vectors in $\{v_1,\cdots,v_k\}$. Equation (\ref{eqn:inclusionexclusion}) implies that both conditions are equivalent, so we use {\em b}). Suppose $x,y,z \in C$ are such that
\[\wt(x) \equiv \wt(y) \equiv \wt(z) \equiv \wt(x+y) \equiv \wt(x+z) \equiv \wt(y+z) \equiv 0 \pmod 4.\]
Then again by (\ref{eqn:inclusionexclusion}),
\begin{align*}
\wt(x + y + z) \equiv&\ \wt(x + y) + \wt(z) - 2\langle x+y, z \rangle\\
\equiv& - 2\langle x, z \rangle - 2\langle y, z \rangle\\
\equiv& - 2\wt(x) - 2\wt(z) + 2\wt(x+z)\\
& - 2\wt(y) - 2\wt(z) + 2\wt(y+z)\\
\equiv&\ 0 \pmod 4.
\end{align*}
By induction on the number of basis elements in a linear combination, $C$ is doubly even.
\end{proof}
Going back to the situation at hand, this means that we need only examine the doubly even vectors in the rowspan of $\left[\begin{array}{ccc}0 & I_{N-2k} & *\end{array}\right]$, since the code we already have is doubly even and self-orthogonal, and any vector to be considered is already orthogonal to the code.  These vectors are in one-to-one correspondence with the set of doubly even codes of one dimension higher containing (our rearrangement of) $C$ .  We can do even better by starting with the even subcode of $\left[\begin{array}{ccc}0 & I_{N-2k} & *\end{array}\right]$, since the sum of two even words is even.

\subsection{Exhaustively Generating the Codes}
 Here we use another algorithm developed by McKay\cite{orderly}, called canonical augmentation.  The ingredients for this are a question like the one we are considering, a canonical labeling function as described, a way of computing the automorphism group of a code, and a hereditary structure on the objects to be generated.
 
 We obtain a hereditary structure by thinking of a certain code's \em children \em as the set of codes constructible by adding a single word not already in the code. Thus the dimension of a code is one less than the dimension of all its children, and every code can be built up from the zero dimensional code by a limited number of augmentations.  In our application, this puts all the desired codes on a tree of height 16.  At this point one can consider the very naive algorithm of generating all the children for each code, keeping a list and throwing out isomorphs.  However, the obvious problem with this approach is the excessive isomorphism computation, which is expensive.  This problem is not really ameliorated by storing the canonical representative of each code, as the canonical representative calculation is almost as expensive as isomorphism testing.

 The idea behind canonical augmentation is that instead of requiring the objects generated be in canonical form, we require simply that the augmentation itself be canonical.  The definition of an \em augmentation \em is simply an ordered pair $(C, C^\prime)$, where $C^\prime$ is a child of $C$.  An isomorphism of augmentations is a permutation $\gamma$ such that $(C^\gamma,{C^\prime}^\gamma) = (D,D^\prime)$.  For example, $(C, C^\prime) \cong (C, C^{\prime\prime})$ implies that there is a $\gamma \in \Aut(C)$ such that ${C^\prime}^\gamma = C^{\prime\prime}$.
 
 Suppose we have a function $p$ which takes codes $C$ of dimension $k > 0$ to codes $p(C)$ of dimension $k-1$, such that $C$ is a child of $p(C)$.  Suppose further that $p$ satisfies the following property: if $C \cong D$, then $(p(C), C) \cong (p(D), D)$.  If we have such a function, we can define that an augmentation $(C, D)$ is canonical if $(C, D) \cong (p(D), D)$.  Now suppose that we have two augmentations $(C, C^\prime)$ and $(D, D^\prime)$ such that $C^\prime \cong D^\prime$.  If they were both canonical augmentations then we would have
 $$(C, C^\prime) \cong \bigl(p(C^\prime), C^\prime\bigr) \cong \bigl(p(D^\prime), D^\prime\bigr) \cong (D, D^\prime).$$
  In other words, if both augmentations are canonical, then $C^\prime \cong D^\prime$ implies that $C \cong D$.  Thus any repeated isomorphs would have to be due to the parents being repeated.  If we assume that there are no isomorphs on the parent level, then this implies that $C = D$, and $(C, C^\prime) \cong (C, D^\prime)$.  In other words, under the framework of canonical augmentation, repeated isomorphs arise only due to automorphisms of the parents.  Thus we have Algorithm \ref{alg_gen} below.
 \begin{algorithm}
 \caption{Generate all doubly even codes of degree 32, isomorph-free.}
 \label{alg_gen}
 \begin{verbatim}

 C_0 := the dimension zero code, of degree 32
 traverse(C_0)
 procedure traverse(C):
     report C
     children := the children of C
     representatives := {}
     for D in children:
         if D is minimal in its orbit under Aut(C):
             add D to representatives
     for D in representatives:
         if (C, D) is a canonical augmentation:
             traverse(D)
\end{verbatim}\end{algorithm}

The only remaining question is to produce such a function $p$.  This is where we use the canonical label defined in the last section.  If $C$ is a code of dimension $k$, let $\gamma$ be the permutation taking $C$ to $c(C)$.  Recall that $c(C)$ came with a particular generator matrix determined by the ordering of the words of the code.  Removing the last row of that generator matrix gives a code $C^\prime$ of the same dimension as $C$ and applying $\gamma^{-1}$ to that, we arrive at $p(C)$.  Suppose that $C \cong D$, which in particular implies that $c(C) = c(D)$.  Let $\gamma_C, \gamma_D$ be the permutations taking $C,D$ to $c(C), c(D)$, respectively.  These are isomorphisms, so in fact, $\gamma_D^{-1}\circ\gamma_C$ is an isomorphism from $C$ to $D$ taking $p(C)$ to $p(D)$ by the definition of $p$.  This proves the property that $(p(C),C) \cong (p(D),D)$.  Further, since we have already done the computation $c(C)$, we can obtain not only the canonical label but also generators for the automorphism group $\Aut(C)$ for free.  Thus when we are checking whether $(C^\prime, C) \cong (p(C), C)$, we can simply look for an element of $\Aut(C)$ that takes $C^\prime$ to $p(C)$.

 The final task of enumerating the codes for storage to disk will approximate the experience of a harvest.  Many separate worker processes will be working in parallel, each examining the pairs $(C, D)$ as in the second to last line of Algorithm \ref{alg_gen}, for a fixed $C$ and many $D$.  Each worker will receive one code at a time, and perform all the steps in Algorithm \ref{alg_gen}, only instead of recursively calling the function on the last line, they will record the augmented codes that succeed in a table, which will each in turn become the fodder for another worker.  As codes are given to workers to augment on, they will be flagged as searched, and once all the codes have been flagged and all the workers are done, all the desired codes will be in the table.  The intended end result of all this is a searchable online database which categorizes the codes by $N$ and $k$,  automorphism group size, weight distribution, and perhaps other parameters.

\bigskip\bigskip\paragraph{\bf Acknowledgments:}
 The research of S.J.G.\ is supported by the endowment of the John S.~Toll Professorship, the University of Maryland Center for String \& Particle Theory, and National Science Foundation Grant PHY-0354401.  T.H. is indebted to the generous support of the Department of Energy through the grant DE-FG02-94ER-40854, as well as the Department of Physics, University of Central Florida, Orlando FL, and the Physics Department of the Faculty of Natural Sciences of the University of Novi Sad, Serbia, for recurring hospitality and resources.  The work described in Appendix~\ref{app:RM} uses computer facilities supported by the National Science Foundation Grant No.~DMS-0555776.  Some Adinkras were drawn with the aid of the {\em Adinkramat\/}~\copyright\,2008 by G.~Landweber.  A special thanks goes to Martin Mueller, Erik Duamine, Morteza Zadimoghaddam, Richard Stanley, all at M.I.T., and especially Yan Zhang, Ricard Stanley's student, for fruitful discussions.

\vfill

\vfill\clearpage
\bibliographystyle{elsart-numX}
\bibliography{Refs}

\end{document}